\newcommand{\Oh}{\mathcal{O}}
\newcommand{\tildeOh}{\mathcal{\tilde O}}
\newcommand{\eps}{\epsilon}
\newcommand{\lam}{\lambda}
\newcommand{\CF}{\mathsf{CF}}
\newcolumntype{C}{>{\centering\arraybackslash}X} 
\newtheorem{theorem}{Theorem}
\newtheorem{lemma}[theorem]{Lemma}
\newtheorem{definition}[theorem]{Definition}
\title{Fully Dynamic Approximation of LIS in Polylogarithmic Time}
\author[ ]{Paweł Gawrychowski\thanks{\texttt{\href{mailto:gawry@cs.uni.wroc.pl}{gawry@cs.uni.wroc.pl}}}}
\author[ ]{Wojciech Janczewski\thanks{\texttt{\href{mailto:wojciech.janczewski@cs.uni.wroc.pl}{wojciech.janczewski@cs.uni.wroc.pl}}}}
\affil[ ]{University of Wrocław}
\date{}
\begin{document}

\maketitle

\begin{abstract}
We revisit the problem of maintaining the longest increasing subsequence (LIS) of an array
under (i) inserting an element, and (ii) deleting an element of an array. In a recent breakthrough,
Mitzenmacher and Seddighin [STOC 2020] designed an algorithm that maintains an
$\Oh((1/\epsilon)^{\Oh(1/\epsilon)})$-approximation of LIS under both operations with worst-case
update time $\tildeOh(n^{\epsilon})$\footnote{$\tildeOh$ hides factors polynomial in $\log n$,
where $n$ is the length of the input.}, for any constant $\epsilon>0$. We exponentially improve
on their result by designing an algorithm that maintains an $(1+\epsilon)$-approximation of LIS
under both operations with worst-case update time $\tildeOh(\eps^{-5})$. 
Instead of working with the grid packing technique introduced by Mitzenmacher and Seddighin,
we take a different approach building on a new tool that might be of independent interest: LIS sparsification.

A particularly interesting consequence of our result is an improved solution for the so-called
Erdős-Szekeres partitioning, in which we seek a partition of a given permutation of $\{1,2,\ldots,n\}$
into $\Oh(\sqrt{n})$ monotone subsequences. This problem has been repeatedly stated as
one of the natural examples in which we see a large gap between the decision-tree complexity
and algorithmic complexity. The result of Mitzenmacher and Seddighin implies an $\Oh(n^{1+\epsilon})$
time solution for this problem, for any $\epsilon>0$. Our algorithm (in fact, its simpler decremental
version) further improves this to $\tildeOh(n)$.
\end{abstract}

\thispagestyle{empty}
\clearpage
\setcounter{page}{1}

\section{Introduction}

Computing the length of a longest increasing subsequence (LIS) is one of the basic algorithmic
problems. Given a sequence $(a_{1},a_{2},\ldots,a_{n})$ with a linear order on the elements, an increasing subsequence
is a sequence of indices $1\leq i_{1} < i_{2} < \ldots < i_{\ell}\leq n$ such that $a_{i_{1}} < a_{i_{2}} < \ldots < a_{i_{\ell}}$.
We seek the largest $\ell$ for which such a sequence exists. It is well known that $\ell$ can be computed
in $\Oh(n\log n)$ time using either dynamic programming and an appropriate data structure or
by computing the first row of the Young tableaux (see e.g.~\cite{Fredman}, we will refer to this procedure
as Fredman's algorithm even though Fredman himself attributes it to Knuth). The latter method
admits an elegant formulation as a card game called patience sorting, see~\cite{AD99}.
For comparison-based algorithms, a lower bound of $\Theta(n\log n)$ was shown by Fredman~\cite{Fredman},
and later modified to work in the more powerful algebraic decision tree model by Ramanan~\cite{Ramanan97}.
However, under the natural assumption that the elements in the array belong to $[n]$, an
$\Oh(n\log\log n)$ time Word RAM algorithm can be obtained using a faster data structure such
as van Emde Boas trees~\cite{Boas77}. This solution has been further refined to work in $\Oh(n\log\log k)$ time,
where $k$ is the answer, by Crochemore and Porat~\cite{CrochemoreP10}.

\paragraph{Dynamic algorithms. }

Even linear (or almost-linear) algorithms are too slow when dealing with large data. This is particularly
relevant when we need to repeatedly query such data that undergoes continuous updates. One possible
approach is then to design a dynamic algorithm capable of maintaining the answer during such modifications.
The updates could be simply appending new items, or inserting/substituting/deleting an arbitrary item.

There exists a long line of work on dynamic algorithms for graph problems, in which the updates
consist in adding/removing edges. Examples of problems considered in this setting include 
fully dynamic maximal independent set~\cite{MIS1,MIS2,MIS3}, fully dynamic minimum spanning
forest~\cite{Forest,Wulff-Nilsen17,NanongkaiS17,HolmLT01}, fully dynamic matching~\cite{BhattacharyaCH20,BernsteinHR19,BernsteinS16},
fully dynamic APSP~\cite{GutenbergW20b}, incremental/decremental reachability and single-source
shortest paths~\cite{BernsteinPW19,GutenbergWW20,GutenbergW20a,GutenbergW20},
or fully dynamic Steiner tree~\cite{LackiOPSZ15}. We stress that many of these papers resort to maintaining
an approximate solution, and that (in most cases) the goal is to achieve polylogarithmic update/query
time.

While by now we have nontrivial and surprising dynamic algorithms for many problems, in some cases
even allowing randomisation and amortisation does not seem to help. Abboud and Dahlgaard~\cite{AbboudD16}
showed how to use the popular conjectures to provide an evidence that, for some problems,
subpolynomial update/query solutions are unlikely, even for very restricted classes of graphs.
\cite{OnlineMV} introduced a new conjecture tailored for showing that, for multiple
natural dynamic problems, subpolynomial update/query time would be surprising (it has been later
shown that the online Boolean matrix-vector problem considered in this conjecture does admit a very
efficient cell probe algorithm~\cite{LarsenW17,ChakrabortyKL18}, so one cannot hope to prove it by
purely information-theoretical methods). Therefore, it seems that by now we have both a number
of nontrivial algorithms and some tools for proving that, in some cases, such an algorithm would be
very surprising.

Fredman's algorithm can be used to maintain the length of LIS under appending elements to the sequence
in $\Oh(\log n)$ time (or, by symmetry, prepending). However, it is already unclear if we can support
both appending and prepending elements in the same time complexity. Chen, Chu and Pinsker~\cite{ExactDynamic}
considered the more general question of maintaining the length of LIS under insertions and deletions of
elements in a sequence of length $n$, and showed how to implement both operations in $\Oh(\ell\log(n/\ell))$ time,
where $\ell$ is the current length of LIS. In the worst case, this could be linear in $n$, so only slightly better
than recomputing from scratch.  
In a recent breakthrough, Mitzenmacher and Seddighin~\cite{Grids} overcame this obstacle by relaxing
the problem and maintaining an approximation of LIS. For any constant $\epsilon>0$, their algorithm
maintains an $\Oh((1/\epsilon)^{\Oh(1/\epsilon)})$-approximation of LIS in $\tildeOh(n^{\epsilon})$ time
per an insertion or deletion.
Their solution is of course a nontrivial improvement on the worst-case $\Theta(n)$ time complexity, but comes
at the expense of returning an approximate solution.
This also brings the challenge of determining if we can improve the update time to, say, polylogarithmic
(which seems to be the natural complexity for a dynamic algorithm), and determining the dependency
on $\epsilon$.
Very recently, Kociumaka and Seddighin~\cite{Kociumaka20} presented the first exact fully dynamic
LIS algorithm with sublinear update time $\tildeOh(n^{4/5})$.

\paragraph{Different models of computation.}

In the streaming model, it is usual to mostly focus on the working space of an algorithm instead of its
running time.
The distance to monotonicity (DTM) of a sequence is the minimum number of edit operations required
to make it sorted. This is easily seen to be the length of the sequence
minus the length of its LIS. In the streaming model, computing both LIS and DTM requires $\Omega(n)$
bits of space, so it is natural to resort to approximation algorithms. For DTM, \cite{Stream2} gave a
randomised $(4+\epsilon)$-approximation in $\Oh(\log^{2}n)$ bits of space, and \cite{SaksS13} improved
this to $(1+\epsilon)$-approximation in $\Oh(1/\epsilon\cdot \log^{2}n)$ bits of space.
\cite{Stream2} also provided a deterministic $(1+\epsilon)$-approximation in $\Oh(\sqrt{n})$ space,
and \cite{ErgunJ08} gave a deterministic $(2+o(1))$-approximation in polylogarithmic space. Finally,
\cite{NaumovitzS15} provided a deterministic $(1+\epsilon)$-approximation in polylogarithmic space.
For LIS, \cite{Stream2} provided a deterministic $(1+\epsilon)$-approximation in $\Oh(\sqrt{n})$ space,
and this was later proved to be essentially the best possible~\cite{ErgunJ08,Stream1}.

In the property testing model, \cite{Approx} showed how to approximate LIS to within an additive
error of $\epsilon\cdot n$, for an arbitrary $\epsilon\in (0,1)$, with $\tildeOh((1/\epsilon)^{1/\epsilon})$ queries.
Denoting the length of LIS by $\ell$,
\cite{RubinsteinSSS19} designed a nonadaptive $\Oh(\lambda^{-3})$-approximation algorithm, where
$\lambda=\ell/n$, with $\tildeOh(\lambda^{-7}\sqrt{n})$ queries (and also obtained different
tradeoffs between the dependency on $\lambda$ and $n$). Very recently, \cite{Ilan} proved that
adaptivity is essential in obtaining polylogarithmic query complexity (with the exponent independent of
$\epsilon$) for this problem.

In the read-only random access model, \cite{Space} showed how to find the length of LIS in $\Oh(n^{2}/s\cdot\log n)$
time and only $\Oh(s)$ space, for any parameter $\sqrt{n}\leq s \leq n$. Investigating the time complexity
for smaller values of $s$ remains an intriguing open problem.

\paragraph{Related work. }

LIS can be seen as a special case of the longest common subsequence. LCS is another fundamental
algorithmic problem, and it has received significant attention. A textbook dynamic programming
solution allows calculating LCS of two sequences of length $n$ in $\Oh(n^{2})$ time, and the so-called
``Four Russians'' technique brings this down to $\Oh(n^{2}/\log^{2}n)$ for constant alphabets~\cite{MasekP80}
and $\Oh(n^{2}(\log\log n)^{2}/\log^{2}n)$~\cite{BilleF08} or even $\Oh(n^{2}\log\log n/\log^{2}n)$~\cite{Grabowski16}  for general alphabets.
Recently, there was some progress in providing explanation for why a strongly subquadratic $\Oh(n^{2-\epsilon})$ time algorithm is unlikely~\cite{AbboudBW15,BringmannK15},
and in fact even achieving $\Oh(n^{2}/\log^{7+\epsilon}n)$ would have some exciting unexpected consequences~\cite{AbboudB18}.
This in particular implies that one cannot hope for a strongly sublinear time dynamic algorithm, even if only
appending letters to one of the strings is allowed (unless the Strongly Exponential Time Hypothesis is false).
Very recently, Charalampopoulos, Kociumaka and Mozes~\cite{Charalampopoulos20a} matched this conditional lower bound,
providing a fully dynamic algorithm with $\tildeOh(n)$ update time.

Given that LCS seems hard to solve in strongly subquadratic time, it is tempting to seek an approximate solution.
This turns out to be surprisingly difficult (in contrast to the related question of computing the edit distance,
for which by now we have approximation algorithms with very good worst-case guarantees, see~\cite{Andoni} and
the references therein). Only very recently Rubinstein and Song~\cite{RubinsteinS20} showed how to improve on the
simple 2-approximation for binary strings in strongly subquadratic time. 
Previously,~\cite{LCS} showed how to obtain $\Oh(n^{0.498})$-approximation in linear time
(this should be compared with the straightforward $\Oh(\sqrt{n})$-approximation).

\paragraph{Our results. }
We consider maintaining an approximation of LIS under insertions and deletions of elements in a sequence.
Denoting the current sequence by $(a_{1},a_{2},\ldots,a_{n})$, an insertion of an element $x$ at position $i$
transforms the sequence into $(a_{1},a_{2},\ldots,a_{i-1},x,a_{i},\ldots,a_{n})$, while a deletion of an element
at position $i$ transforms it into $(a_{1},a_{2},\ldots,a_{i-1},a_{i+1},\ldots,a_{n})$. Denoting by $\ell$
the length of LIS of the current sequence, we seek an algorithm that returns its $(1+\epsilon)$-approximation,
that is, a number from $[\ell/(1+\epsilon),\ell]$. Our main result is as follows.

\begin{theorem} \label{th:main}
For any $\epsilon>0$, there is a fully dynamic algorithm maintaining an $(1+\eps)$-approximation of LIS
with insertions and deletions working in $\Oh(\eps^{-5} \log^{11}n)$ worst-case time.
\end{theorem}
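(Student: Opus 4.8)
The plan is to build a hierarchical data structure that maintains, at several granularities, a small "sketch" of the sequence that still witnesses an approximately-longest increasing subsequence, and to support updates by only recomputing the sketches along one root-to-leaf path. The central new ingredient, announced in the abstract, is \emph{LIS sparsification}: given a sequence, one can select a subsequence of size $\tildeOh(\eps^{-c})$ (independent of $n$) that preserves the LIS length up to a $(1+\eps)$ factor, and — crucially — this selection should compose, so that sparsifying a concatenation of two already-sparsified blocks again yields a small sketch with only a small additional loss. I would first state and prove this sparsification lemma in isolation: partition the value range (or the index range) into $\Oh(\eps^{-1}\log n)$ buckets, argue that within each bucket it suffices to keep $\Oh(\eps^{-1})$ "profile" entries recording, for a logarithmic grid of prefix lengths, the smallest achievable last value of an increasing subsequence of that length ending in the bucket; then show that recombining the profiles of adjacent blocks via a merge that mimics Fredman's patience-sorting step reconstructs a $(1+\eps)$-approximate profile of the union.

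**Next I would** set up a balanced binary tree over the positions of the array — a weight-balanced or $(a,b)$-tree so that insertions and deletions keep depth $\Oh(\log n)$ and touch only $\Oh(\log n)$ nodes. Each internal node stores the sparsified sketch of the subsequence spanned by its subtree, computed by merging the two children's sketches via the composition operation from the sparsification lemma. Because each merge loses at most a $(1+\eps')$ factor and there are $\Oh(\log n)$ levels, I would set $\eps' = \Theta(\eps/\log n)$ so that the cumulative loss over the whole tree is $(1+\eps)$; this is exactly where the dependence $\eps^{-5}\log^{O(1)}n$ comes from — each sketch has size $\tildeOh(\eps'^{-c}) = \tildeOh(\eps^{-c}\log^{c} n)$, a merge costs roughly the square (or $\times\log$) of the sketch size à la patience sorting, and an update reruns $\Oh(\log n)$ such merges. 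The query at the root reads off the largest profile length whose recorded value is finite, giving the $(1+\eps)$-approximation to the global LIS.

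**The main obstacle** I expect is getting the sparsification to \emph{compose with controlled error}: a single sparsification step is a standard bucketing argument, but after $\Oh(\log n)$ recombinations one must ensure the approximation factors multiply rather than, say, add up linearly in the number of merges, and that the sketch size does not blow up across levels. Concretely, the danger is that an increasing subsequence optimal for the union of two blocks uses "in-between" values that were discarded in each block's sketch; the fix is to let the buckets be defined multiplicatively (geometric in both the length axis and the value axis) so that a discarded element is always within a $(1+\eps')$ factor, in the relevant coordinate, of a retained one, and a short calculation then shows the merged profile still dominates any true increasing subsequence up to one extra $(1+\eps')$ factor. A secondary technical point is making the sketches robust to the small rebalancing rotations of the tree — but since a rotation only re-merges $\Oh(1)$ nodes whose subtree contents are unchanged, the stored sketches are reusable and this costs only $\Oh(1)$ extra merges per update. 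Handling deletions is symmetric to insertions, and an empty sketch is the identity for the merge, so the fully dynamic case follows immediately from the incremental machinery.
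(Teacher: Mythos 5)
There is a genuine gap at the heart of your plan: the sparsification lemma you propose to ``state and prove in isolation'' cannot hold in the form you need. You want a sketch of size $\tildeOh(\eps^{-c})$ (or, after setting $\eps'=\Theta(\eps/\log n)$, of size $\polylog (n)$) that preserves the LIS length to within $(1+\eps')$ and composes under concatenation with only a $(1+\eps')$ multiplicative loss per merge. Arranging such merges in a balanced binary tree over the stream positions and keeping the $\Oh(\log n)$ pending partial sketches would yield a one-pass streaming $(1+\eps)$-approximation of LIS in polylogarithmic space; but $(1+\eps)$-approximation of LIS in the streaming model requires $\Omega(\sqrt n)$ space, and this is known to be essentially tight (the paper recalls exactly this in its discussion of streaming algorithms). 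So no such composable small sketch exists. A symptom of the problem is your proposed fix of ``geometric buckets in the value axis'': LIS is invariant under any order-preserving relabeling of the values, so being within a $(1+\eps')$ multiplicative factor in value carries no information about increasing subsequences; the obstruction you correctly identify (an optimal subsequence of the union threading through values discarded by both children) is real and is not removed by multiplicative bucketing.

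The paper's ``LIS sparsification'' is a different object and sidesteps this impossibility. A node of the decomposition does not store a polylogarithmic-size summary; it stores, for each of $\Oh(\eps^{-1}\log^2 n)$ geometric score levels, a \emph{cover}: a set of segments (intervals, each carrying an explicit chain) such that every query interval containing a chain of length $k_2$ contains a stored segment of score at least $k_1$. A single cover may have total size $\Theta(n)$. What sparsification controls is the \emph{depth} of a cover --- the number of stored segments overlapping any fixed position --- which a greedy procedure keeps at $\Oh(\eps^{-1}\log n)$. Small depth means that when two horizontally adjacent blocks are combined, only the few segments crossing the split line must be built anew; all others are inherited by reference via persistent BSTs, so the per-update work is polylogarithmic even though the stored information is not. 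Moreover, building a crossing segment requires splitting a candidate chain at a \emph{value} threshold and concatenating pieces taken from covers of the bottom and top halves, which is why the paper maintains a two-dimensional (range-tree-like) recursion over both positions and values rather than the one-dimensional position tree you propose; the expensive re-sparsification is run only after $\Theta(\eps' k_1)$ updates per level and is then deamortised by background rebuilding. Your outline is missing both the correct notion of sparsification and the second, value-axis recursion that makes the merge possible.
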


In fact, our algorithm allows for slightly more general queries, namely approximating LIS of any continuous
subsequence $(a_{i},a_{i+1},\ldots,a_{j})$, in $\Oh(\log^2 n)$ time. Furthermore, if the returned
approximation is $k$, then in time $\Oh(k)$ the algorithm can also provide an increasing subsequence of
length $k$. Finally, the algorithm can be initialised with a sequence of length $n$ in time $\Oh(n\eps^{-2}\log^{6}n)$. 

The time complexities of our algorithm should be compared with that of Mitzenmacher and Seddighin~\cite{Grids},
who provide an $\Oh((1/\epsilon)^{\Oh(1/\epsilon)})$-approximation in $\tildeOh(n^{\epsilon})$ worst-case time per
an insertion or deletion.
In this context, we provide an exponential improvement: instead of providing a constant approximation in
$\Oh(n^{\epsilon})$ time per update, for an arbitrarily small but constant $\epsilon>0$, we are able to provide $(1+\eps)$-approximation in polylogarithmic time.
This is obtained by introducing a new tool, called LIS sparsification, and taking a different approach than
the one based on grid packing described by Mitzenmacher and Seddighin.
We remark that, while they start with a simple $(1+\epsilon)$-approximation in $\tildeOh_{\epsilon}(\sqrt{n})$
time per update, the approximation factor of their main algorithm is never better than $2$ (in fact, it is much higher),
and this seems inherent to their approach.

\paragraph{Erdős-Szekeres partitioning. }

The well-known theorem of Erdős and Szekeres states that any sequence consisting of distinct elements with length at least $(r-1)(s-1)+1$
contains a monotonically increasing subsequence of length $r$ or a monotonically decreasing subsequence
of length $s$~\cite{Erdos}. In particular, any sequence consisting of $n$ distinct elements contains a monotonically increasing
or decreasing subsequence of length $\sqrt{n}$, and as a consequence any such permutation can be
partitioned into $\Oh(\sqrt{n})$ monotone subsequences (it is easy to see that this bound is asymptotically
tight). The algorithmic problem of partitioning such a sequence into $\Oh(\sqrt{n})$ monotone
subsequences is known as the Erdős-Szekeres partitioning. A straightforward application of Fredman's algorithm
gives an $\Oh(n^{1.5}\log n)$ time solution for this problem, and this has been improved to $\Oh(n^{1.5})$ by
Bar-Yehuda and Fogel~\cite{Bar-YehudaF98}. However, in the (nonuniform) decision tree model
we have a trivial solution that uses $\Oh(n\log n)$ comparisons: we only need to identify the sorting permutation
and then no further comparisons are required. In their breakthrough paper on the decision tree complexity
of \textsf{3SUM}, Grønlund and Pettie~\cite{GronlundP18} mention Erdős-Szekeres partitioning as one of the natural
examples of a problem with a large gap between the (nonuniform) decision tree complexity and the (uniform)
algorithmic complexity. Another examples include \textsf{3SUM}, for which the decision tree complexity was first
decreased to $\Oh(n^{1.5}\sqrt{\log n})$~\cite{GronlundP18} and then further to $\Oh(n\log^{2}n)$~\cite{KaneLM19},
and APSP, for which the decision tree complexity is known to be $\Oh(n^{2.5})$~\cite{Fredman76}.

Closing the gap between the decision and the algorithmic complexity of Erdős-Szekeres partitioning 
is interesting not only as an intriguing puzzle, but also due to its potential applications. Namely, we hope to use
it as a preprocessing step and achieve a speedup by operating on the obtained monotone subsequences.
Very recently, Grandoni, Italiano, Łukasiewicz, Parotsidis and Uznański~\cite{Aleksander} successfully applied such a strategy to design a faster solution
for the all-pairs LCA problem. The crux of their approach is a preprocessing step that partitions a given poset
into $\Oh(\ell)$ chains and $\Oh(n/\ell)$ antichains, for a given parameter $\ell$, in $\Oh(n^{2})$ time.
Of course, this can be directly applied to partition a sequence into $\Oh(\sqrt{n})$ monotone subsequences by setting
$\ell=\sqrt{n}$, but this does not constitute an improvement in the time complexity, and it is not clear whether
similar techniques can help here.
However, we can apply the recent result of Mitzenmacher and Seddighin~\cite{Grids} (in fact, only deletions
are necessary, but this does not seem to significantly simplify the algorithm) to obtain such a partition in $\Oh(n^{1+\epsilon})$ time,
for any constant $\epsilon>0$, as follows. We maintain a constant approximation of LIS in $\Oh(n^{\epsilon})$ time
per deletion. As long as the approximated length of LIS is at least $\sqrt{n}$, we extract the corresponding increasing
subsequence (it is straightforward to verify that the structure of Mitzenmacher and Seddighin does provide such an
operation in time proportional to the length of the subsequence) and delete all of its elements.
This takes $\Oh(n^{1+\epsilon})$ time overall and creates no more than $\sqrt{n}$ increasing subsequences.
When the approximated length drops below $\sqrt{n}$, we know that the exact length is $\Oh(\sqrt{n})$.
A byproduct of Fredman's algorithm for computing the length $k$ of LIS is a partition of the elements into
$k$ decreasing subsequences. Thus, by spending additional $\Oh(n\log n)$ time we obtain the desired partition
into $\Oh(\sqrt{n})$ monotone subsequences in $\Oh(n^{1+\epsilon})$ total time.
Plugging in our algorithm for maintaining $(1+\epsilon)$-approximation of LIS in $\tildeOh(\eps^{-5})$ time
per deletion with, say, $\epsilon=1$, we significantly improve this time complexity to $\tildeOh(n)$.
We note that our algorithm works in the comparison-based model, and in such model $\Omega(n\log n)$
comparisons are required (this essentially follows from Fredman's lower bound for LIS, but we provide the details
for completeness).

\paragraph{Parallel and independent work. }
Shortly after a preliminary version of our paper appeared on arXiv, two other relevant papers were made public.
First, Seddighin and Mitzenmacher~\cite{Mitzenmacher20} independently observed that their approximation algorithm
can be applied to obtain Erdős-Szekeres partition (in particular, they provide a detailed description of how to modify
their solution to extract the elements of LIS).
Second, Kociumaka and Seddighin~\cite{Kociumaka20} modified the grid packing technique to obtain
an algorithm with update time $\Oh(n^{o(1)})$ and approximation factor $1+o(1)$.
While their modification allows for more general queries, it is not able to provide constant approximation
in polylogarithmic time.

\paragraph{Overview of our approach. } 

We start with a description of the main ingredient of our improved solution in a static setting, in which
we are given an array $(a_{1},a_{2},\ldots,a_{n})$ and want to preprocess it for computing LIS in any subarray
$(a_{i},\ldots,a_{j})$, or $(i,j)$ for short. While it is known how to build a structure of size $\Oh(n\log n)$ capable of providing
exact answers to such queries in $\Oh(\log n)$ time using the so-called unit-Monge matrices~\cite[Chapter 8]{Tiskin},
this solution seems inherently static, and we follow a different approach that provides approximate
answers.

We require that the structure is able to return $(1+\epsilon)^2$-approximate solution for any subarray $(i,j)$.
To this end, it consists of $\log_{1+\epsilon}n$ levels. The purpose of level $k$ is to return, given
a subarray $(i,j)$ with LIS of length at least $(1+\epsilon)^{k+1}$, a subarray $(i',j')$ with
$i \leq i' \leq j' \leq j$ and LIS of length at least $(1+\epsilon)^{k}$.
This indeed allows us to approximate the length of LIS in a subarray $(i,j)$ by binary searching over the levels
to find the largest level $k$ for which the structure does not fail.
The information stored on each level could of course be just a sorted list of all minimal subarrays
$(i',j')$ with LIS of length $\lceil (1+\epsilon)^{k} \rceil$. 
This is, however, not very useful when we try to make the structure dynamic for the following
reason. Whenever we, say, delete an element at position $x$, this might possibly affect every level.
Now, for a level $k$ of the structure, there could be even $\Omega(n)$ minimal subarrays
containing $x$, or (what is even worse) using the element $a_{x}$ for their LIS. Such a situation
might repeat again and again, which makes obtaining even an amortised efficient solution problematic.
This suggests that we should maintain a sorted list of subarrays with small \emph{depth},
defined as the largest number of stored subarrays possibly containing the same position $x$.
Somewhat surprisingly, it turns out that there always exists such a sorted list of depth $\epsilon^{-1}$.
Our proof is constructive and based on a simple greedy procedure that actually constructs the list efficiently.
The sorted list of subarrays stored at each level is called a cover, while the whole structure
is referred to as a covering family.
We call this method of creating an approximation covering family of small depth the sparsification procedure.

To explain how this insight can be applied for dynamic LIS, first we focus on the decremental version
of the problem, in which we only need to support deletions.
This is already quite hard if we aim for polylogarithmic update time, and has interesting consequences.

We start with normalising the entries in the input sequence $(a_{1},a_{2},\ldots,a_{n})$ to form a permutation
of $[n]$ (if there are ties, earlier elements are larger as to preserve the length of LIS).
Then, it is helpful to visualise the input array $(a_{1},a_{2},\ldots,a_{n})$ as a set of points $S=\{(i,a_{i}) : i\in [n] \}$.
A deletion simply removes a point from $S$, and do not re-normalise the coordinates of the remaining points.
It is straightforward to translate a deletion of element at position $i$ of the current sequence into a deletion
of a specified point of $S$ in $\Oh(\log n)$ time.
Now, finding LIS in the current sequence translates into finding the longest \emph{chain} of points in
the current $S$, defined as an ordered subset of points with the next point strictly dominating the
previous point. In fact, our structure will implement more general queries corresponding to
finding LIS in any subarray $(a_{i},\ldots,a_{j})$ of the current array, or using a geometric interpretation
the longest chain in the subset of $S$ consisting of all points with the $x$-coordinate in a given interval $(x_{1},x_{2})$.

We maintain a recursive decomposition of $[n]\times [n]$ into smaller rectangles, roughly speaking
by applying a primary divide-and-conquer guided by the $y$-coordinates, and then a secondary
divide-and-conquer guided by the $x$-coordinates. Formally, we define dyadic intervals of the form $(i2^{k},(i+1)2^{k}-1)$,
and consider all rectangles of the form $R=(x_{1},y_{1},x_{2},y_{2})$ such that $(x_{1},x_{2})$ and $(y_{1},y_{2})$
are dyadic. For each such dyadic rectangle $R$ that contains at least one point from $S$, we maintain a list of all
points inside it. Our goal will be to allow approximating the longest chain in every $R$.
To this end, we maintain a covering family $\CF(R)$ for the array obtained by writing down the $y$-coordinates
of the points in $R$ in the order of increasing $x$-coordinates. The precise definition and the choice
of parameters for the family is slightly more complex than in the description above, in particular
we need the approximation guarantee to depend on the height of $R$ and be sufficiently good so
that composing $\Oh(\log n)$ approximations still results in the desired bound.
Also, now every cover consists of
a sorted list of intervals, and each interval explicitly stores a chain of appropriate length.

\begin{figure}[h]
\begin{center}
  \includegraphics[scale=0.8]{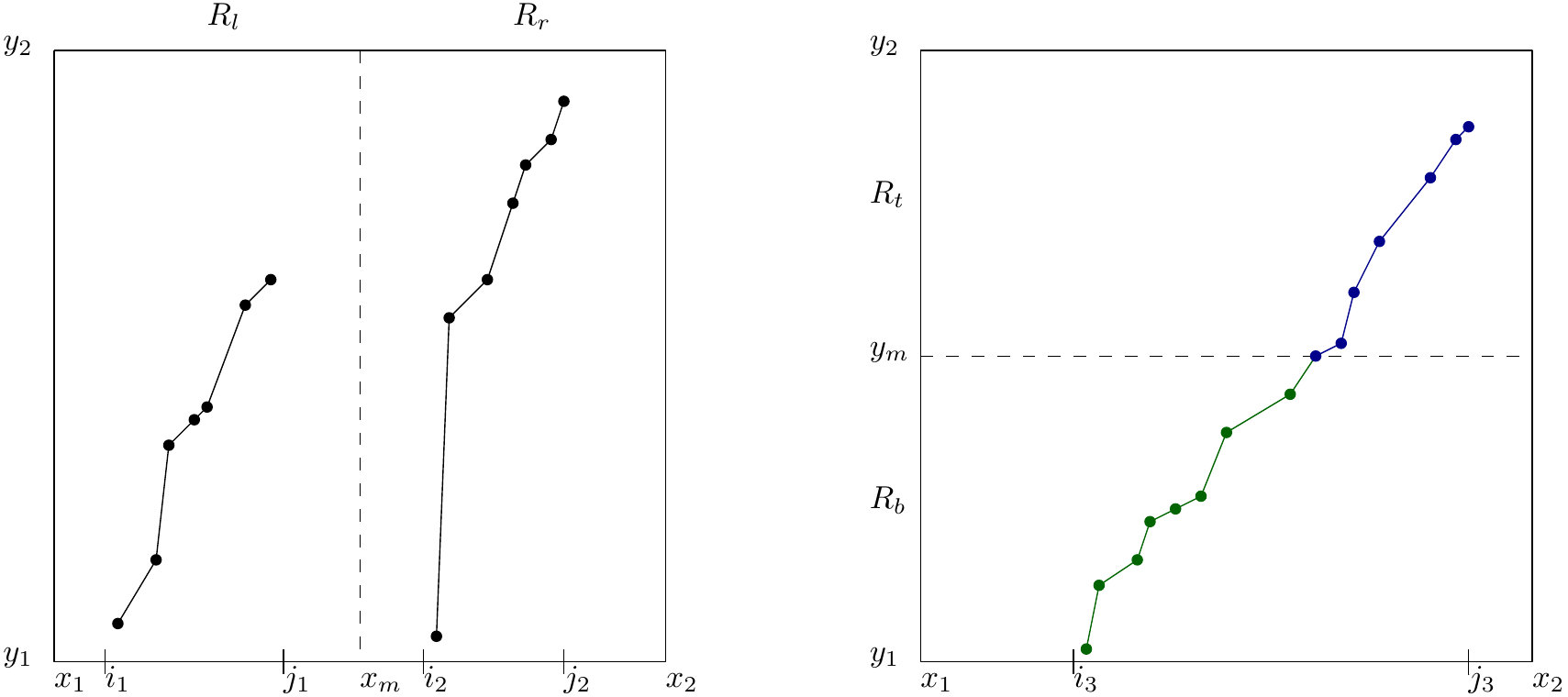}
\end{center}
  \caption{Intervals lying entirely in the left or right rectangle are already covered by their families.
  Intervals crossing $x_m$ will be covered by concatenating pairs of segments from covering families of
  the bottom and top rectangle.}
  \label{fig:fourrect}
\end{figure}

Every point of $S$ belongs to $\Oh(\log^{2}n)$ rectangles, and upon a deletion we need to update
their covers at possibly all the levels. Due to the recursive nature of dyadic rectangles, a cover at
level $k$ of $\CF(R)$ can be obtained as follows. First, we split $R=(x_{1},y_{1},x_{2},y_{2})$ vertically
into $R_{l}=(x_{1},y_{1},x_{m}-1,y_{2})$ and $R_{r}=(x_{m},y_{1},x_{2},y_{2})$, where $x_m=\lceil (x_1+x_2)/2 \rceil$.
We take the unions of covers at levels $k$ in $\CF(R_{l})$ and $\CF(R_{r})$ and observe that we only
need to additionally take care of the queries concerning intervals $(x'_{1},x'_{2})$ with $x'_{1} < x_{m} \leq x'_{2}$.
It turns out that this can be done by splitting $R$ horizontally
into $R_{b}=(x_{1},y_{1},x_{2},y_{m}-1)$ and $R_{t}=(x_{1},y_{m},x_{2},y_{2})$, where $y_m=\lceil (y_1+y_2)/2 \rceil$,
and operating on a number of chains stored in $\CF(R_{b})$ and $\CF(R_{t})$. By inspecting a few cases,
this number can be bounded by the depth of every covering family, which will be kept $\Oh(\eps^{-1}\log n)$.
Roughly speaking, we need to concatenate some chains from appropriately chosen levels of $\CF(R_{b})$ and $\CF(R_{t})$.
Consult Figure~\ref{fig:fourrect}.
Now it is tempting to form level $k$ of $\CF(R)$ by taking the union of levels $k$ from $\CF(R_{l})$
and $\CF(R_{r})$, and adding a small number of new intervals together with their chains.
This is however not so simple, as we would increase the depth of the maintained covering family
in every step of this process, and this could (multiplicatively) accumulate $\log n$ times.
Fortunately, we can run our sparsification procedure on all points belonging to the new chains
to guarantee that the depth remains small.

Taking the union of levels $k$ from $\CF(R_{l})$ and $\CF(R_{r})$ can be done in only $\Oh(\log n)$
by maintaining each cover in a persistent BST. However, running time of the sparsification step
is actually quite large on later levels, and this seems problematic.
We overcome this hurdle by running the sparsification only when sufficiently many deletions have been made
in $R$ to significantly affect the approximation guarantee provided by the cover.
By appropriately adjusting the parameters, this turns out to be
enough to guarantee polylogarithmic update time.

Having obtained a decremental version of our structure, we move to the fully dynamic version.
Now the main issue is that, as insertions can happen anywhere, we cannot work with a fixed collection
of dyadic rectangles. Therefore, we instead apply a two-dimensional recursion resembling 2D
range trees. Then, we need to carefully revisit all the steps of the previous reasoning.
The last step of our construction is removing amortisation. This follows by quite standard
method of maintaining two copies of every structure, the first is used to answer queries while the
second is being constructed in the background. However, this needs to be done in three
places: sparsification procedure, secondary recursion, and primary recursion.

\paragraph{Organisation of the paper. }
We start with preliminaries in Section~\ref{sec:prel}.
Then, in Sections~\ref{sec:cover}, \ref{sec:decremental} and \ref{sec:analysis}
we describe and analyze a decremental structure with deletions working in amortised polylogarithmic time.
First, in Section~\ref{sec:cover} we define the subproblems considered in our structure, introduce the notion of covers,
and explain how to compute covers with good properties at the expense of increasing the approximation guarantee.
We will refer to this technique as sparsification.
Second, in Section~\ref{sec:decremental} we describe how to maintain the information associated
with every subproblem under deletions of points.
Third, in Section~\ref{sec:analysis} we analyse approximation guarantee and running time of the decremental structure.
In Section~\ref{sec:partition} we show to use it to obtain an improved algorithm for Erdős-Szekeres partitioning,
and also provide the details of the $\Omega(n\log n)$ lower bound for this problem in the comparison-based
model.
Finally, in Section~\ref{sec:dynamic} we provide the necessary modifications to make
the structure fully dynamic, and the update time worst-case.

\section{Preliminaries}
\label{sec:prel}

Let $[k]$ denote $\{0,1,\ldots,k-1\}$. A pair of numbers $p_1=(x_1,y_1)$ is smaller than $p_2=(x_2,y_2)$,
denoted $p_1 \prec p_2$, if $x_1 < x_2$ and $y_1 < y_2$.
We usually treat an array of distinct numbers $A=(a_0,a_1,\ldots,a_{n-1})$
as a set $S$ of 2D points with pairwise distinct $x$- and $y$-coordinates
by defining $S=\{(i,a_i) : i \in [n]\}$.
For an array $A$, by its subarray $(i,j)$ we mean the subarray $(a_i,\ldots,a_j)$.
For a set $P$ of points,
by its interval $(i,j)$ we mean the set $P_{i,j}=\{p=(x,y) : p \in P, i \leq x \leq j\}$.
In this work, intervals are always closed on both sides.

For any set of points $P$, its ordered subset of points $X=(x_1,x_2,\ldots,x_k)$ is a chain of length $k$
if $x_1 \prec x_2 \prec \ldots \prec x_k$.
We write $X_{i,j}$ to denote $(x_i,\ldots,x_j)$, and $X_{i}$ refers to $x_{i}$.
$X\circ Y$ denotes the concatenation of chains $X=(x_1,\ldots,x_i)$ and $Y=(y_1,\ldots,y_j)$,
defined when $x_i \prec y_1$.

We want to maintain a structure allowing querying for approximate LIS
under deletions and insertions of elements in an array (called the main array).
The structure returns the approximated length and on demand it can also provide a chain of such length.
An algorithm is $p$-approximate (or provides $p$-approximation), for $p \geq 1$,
if it returns a chain of length at least $r$ when the longest chain has length $p\cdot r$.

A \emph{segment} is a triple $(b,e,L)$, where $b$ and $e$ denote the beginning and the end of a subarray
or an interval, and $L$ is a list of consecutive elements of a chain inside that interval.
The \emph{score} of a segment is the length of $L$.
We use the natural notation for positions of points, subarrays, intervals, or segments.
Point $(k,y)$ is inside interval $(i,j)$ if $i \leq k \leq j$.
Interval $A=(i_1,j_1)$ is to the left of interval $B=(i_2,j_2)$, denoted $A \prec B$,
if $i_1<i_2 \land j_1<j_2$ (note that $A$ and $B$ can still overlap).
$A$ is inside $B$ if $i_{2} \leq i_{1}$ and $j_1 \leq j_2$.

As a subroutine, we use Fredman's $\Oh(n\log n)$ algorithm for computing LIS in an array.
Recall that it can be used to maintain LIS under appending (or prepending, but not both) elements.
That is, we can use it to incrementally compute LIS of the subarrays $(i,i), (i,i+1), \ldots$ so that
after having computed LIS of the subarray $(i,j)$ we can compute LIS of the subarray
$(i,j+1)$ in $\Oh(\log n)$ time. Similarly, we can incrementally compute LIS of the subarrays $(j,j),(j-1,j),(j-2,j), \ldots$.
When we wish to compute the longest chain in a set of points, we simply arrange
its points in an array, sorted by the $x$-coordinates, and then compute LIS with respect to the $y$-coordinates.

\paragraph{BST.}
We use balanced binary search trees to store sets of elements.
Each tree stores items ordered by their keys and provides operations $\mathsf{Insert}(x)$,
$\mathsf{Delete}(k)$ and $\mathsf{Successor}(k)$.
Operation $\mathsf{Find}(k)$ returns an item with the biggest key not greater than $k$,
and $\mathsf{FindRank}(r)$ returns an item with the key of rank $r$ in the set of keys in a BST.
$\mathsf{Join}(T_1,T_2)$ merges two trees,
provided that the keys of all items in $T_1$ are smaller than the keys of items in $T_2$.
$\mathsf{Split}(k)$ divides a tree into two trees, the first containing items with keys less than $k$
and the second containing the remaining items.
$\mathsf{DeleteInterval}(k_1,k_2)$ deletes all elements with keys between $k_1$ and $k_2$; it can be implemented with two splits and one join.

We need a persistent BST which preserves the previous version of itself when it is modified.
This property is used mostly when we join two trees $T_1$ and $T_2$ into one,
but still want to access the original $T_{1}$ and $T_{2}$.
Additionally, we need to augment the tree by storing the size of each subtree to allow for
efficient $\mathsf{FindRank}$ implementation.
All operations work in $\Oh(\log n)$ time, for a tree storing $n$ items, by using e.g. persistent AVL trees~\cite{AVL,Persistent}.
\newpage
\section{Covers and Sparsification}
\label{sec:cover}

In this and several next sections we assume that only deletions and queries are allowed.
Later, we will explain how to modify an algorithm to allow also insertions.
For the case without insertions, we replace each element of the input array by its
rank in the set of all elements.
Thus, the set $S$ representing the input array consists of points with both coordinates from $[n]$.
These coordinates are not renumbered during the execution of the algorithm, we only
delete points from $S$.
For any pair of elements, their current positions in the main array can be compared in 
constant time by checking the $x$-coordinates.
Their values can be also compared in constant time by inspecting the $y$-coordinates.

We say that interval $(a,b)$ is dyadic if $a=i2^k$ and $b=(i+1)2^k-1$
for some natural numbers $i,k$.
Similarly, rectangle $R=(x_1,y_1,x_2,y_2)$ is dyadic if $x_1=i2^{k_1}$, $x_2=(i+1)2^{k_1}-1$,
$y_1=j2^{k_2}$ and $y_2=(j+1)2^{k_2}-1$, for some natural numbers $i,j,k_1,k_2$.
In other words, a dyadic rectangle spans dyadic intervals on both axes.
Rectangle $R$ contains point $s_r=(r,a_r)$ if $x_1 \leq r \leq x_2$ and $y_1 \leq a_r \leq y_2$.
We say that $k_{2}$ is the \emph{height} of a rectangle $R$.
Assume for simplicity that $n$ is a power of $2$.
We consider all dyadic rectangles with $0 \leq x_1,y_1,x_2,y_2 \leq n-1$ possibly containing points from $S$.
These rectangles have height between $0$ and $\log n$.
Moreover, only $\Oh(n\log^2 n)$ of them are nonempty, since each of the $n$ points from $S$
falls into $1+\log n$ dyadic intervals on each of the axes.
Every nonempty rectangle stores the set of points from $S$ inside it in a BST,
and we will think of the rectangles storing just one point as the base case.

Let $1<\lam<2$ be the approximation parameter, where $\lam=1+\eps'$ for some $\eps'$.
Our solution will be able to provide, for any dyadic rectangle $R$ of height $h$, 
$\lam^{4h}$-approximation of the longest chain in $R$.
Thus, the approximation factor in the rectangle encompassing the whole set $S$ will be $\lam^{4\log n}$.
By setting $\eps' = \eps/(8\log n)$, the solution provides an $(1+\eps)$-approximation of LIS,
for any $\eps\in (0,1]$, as $(1+\eps/(8\log n))^{4\log n} < e^{\eps/2} < 1+\eps$.
Furthermore, we have $\log_{\lam}{n}=\log{n}/\log{(1+\eps/(8\log n))}=\Oh(\eps^{-1}\log^2{n})$,
for any $\eps\in (0,1]$.

\paragraph{Covering family.}
In our solution, the approximation will be ensured by storing a sequence of sets of segments,
each consecutive set containing segments with scores larger by a factor of $\lam^2$.
Given an interval of points containing a chain of length $k$, we should be able to find
among stored segments one inside the given interval and with a score close to $k$.
To achieve this, we need the notion of covers and covering families.

\begin{definition}
Consider a set $P$ of points.
\emph{$(k_1,k_2)$-cover} of $P$ is a set $S$ of segments, each with a score of at least $k_1$ and at most $k_2$,
such that for any $i,j$, if $P_{i,j}$ contains a chain of length $k_2$,
then $S$ contains at least one segment $X$ inside interval $(i,j)$,
and we say that $(i,j)$ is \emph{covered} by $X$.
Moreover, we demand that in $S$ no segment is inside another,
so they can be sorted increasingly by their beginnings and ends at the same time and stored in a BST.
\end{definition}

We will refer to a $(k,k)$-cover simply as a $k$-cover, and say that the score of a $(k_1,k_2)$-cover is $k_1$.
If we were operating on real numbers, $\gamma^2$-approximation could be achieved
by storing a sequence of $\Theta(\log_{\gamma}{n})$ covers, namely a
$(\gamma^r,\gamma^{r+1})$-cover for every $0 \leq r < \log_{\gamma}{n}$.
Then, after receiving a query about the longest chain in any interval, we could perform a binary search over the covers,
in order to find the one with the largest score containing a segment inside the given interval.
As unfortunately lengths of chains are natural numbers only, we need a slightly
more complex choice of which covers to store.
Additionally, we need the approximation factor to depend on the height of a rectangle.

\begin{definition}
Let $1<\gamma<2$, $r$ be a natural number greater than $1$, and $P$ a set of $n$ points.
\emph{$(\gamma,r)$-covering family} of $P$ consists of $\Oh(\log_{\gamma} n)$ covers,
ordered by increasing scores.
We call each of these covers a level.
For the first $k=\min(n,3(\gamma-1)^{-1})$ levels, the $i$-th level is just a $i$-cover of $P$.
Then, level $k+j$ is a $(\lceil k\gamma^j \rceil, k\gamma^{j+r-1})$-cover of $P$,
for all $j>0$ such that $k\gamma^{j+r-1}\leq n$.
\end{definition}

\begin{lemma}\label{lem:str}
$(\gamma,r)$-covering family of $P$ provides a $\gamma^r$-approximation of the longest chain
for any interval, in time $\Oh(\log{(\log_{\gamma} n)} \cdot \log n)$.
\end{lemma}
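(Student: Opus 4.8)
The plan is to answer a query on an interval $(i,j)$ by binary searching over the $\Oh(\log_{\gamma}n)$ levels of the covering family, and then to check by a short finite case analysis that what we return is a $\gamma^r$-approximation. Write $[k_1^{(t)},k_2^{(t)}]$ for the score range of level $t$ (so $k_1^{(i)}=k_2^{(i)}=i$ for $i\le k$, and $k_1^{(k+j)}=\lceil k\gamma^j\rceil$, $k_2^{(k+j)}=k\gamma^{j+r-1}$ otherwise), and let $\ell$ denote the length of the longest chain in $P_{i,j}$ (unknown to the algorithm). A single step of the search, at a fixed level $t$, asks whether the cover at level $t$ contains a segment inside $(i,j)$: since the cover is kept in a BST keyed by segment beginnings and no segment of a cover is inside another, the segments are sorted consistently by beginning and by end, so one $\mathsf{Find}$/$\mathsf{Successor}$ call locates the segment $X$ with the smallest beginning $\ge i$, which also has the smallest end among those, and level $t$ has a segment inside $(i,j)$ iff this $X$ exists and $\mathrm{end}(X)\le j$; this takes $\Oh(\log n)$ time and also returns the score of $X$.

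The correctness of the binary search rests on the monotonicity of the predicate ``level $t$ contains a segment inside $(i,j)$'': if level $t$ does, then that segment certifies a chain of length $\ge k_1^{(t)}$ inside $(i,j)$, and since the family's parameters are chosen so that $k_1^{(t)}\ge k_2^{(t-1)}$ (immediate on the exact-cover prefix, at the junction from $\lceil k\gamma\rceil\ge k$, and along the geometric part using the threshold $k=\min(n,3(\gamma-1)^{-1})$ and the ceilings), the defining cover property of level $t-1$ applied to a length-$k_2^{(t-1)}$ subchain yields a segment inside $(i,j)$ at level $t-1$ too. Hence the levels that contain a witness form a prefix $\{1,\dots,t^*\}$, and I would binary search for its top $t^*$ in $\Oh(\log(\log_{\gamma}n))$ iterations of $\Oh(\log n)$ each, i.e.\ $\Oh(\log(\log_{\gamma}n)\cdot\log n)$ total, returning the score $s$ (or the stored chain on demand) of the witnessing segment at level $t^*$.

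It remains to show $s\in(\ell/\gamma^r,\ \ell]$. The bound $s\le\ell$ is free because the segment holds an actual chain of $P_{i,j}$. For the lower bound I would case on $t^*$: if no level has a witness then $P_{i,j}$ is empty and $\ell=0$; if $t^*$ lies on the exact-cover prefix then (when a level $t^*+1$ exists) its lack of a witness forces $\ell<t^*+1$, so $\ell=s=t^*$ exactly; if $t^*$ is the boundary level $k$ or a geometric level and is not the last stored level, then lack of a witness at level $t^*+1$ gives $\ell<k_2^{(t^*+1)}$, while the recursion gives $k_1^{(t^*)}\cdot\gamma^r\ge k_2^{(t^*+1)}$, so $s\ge k_1^{(t^*)}>\ell/\gamma^r$; and if $t^*$ is the last stored level, then the family's stopping condition $k\gamma^{j+r-1}\le n$ gives $\ell\le n<k_1^{(t^*)}\cdot\gamma^r$, again $s>\ell/\gamma^r$. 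Together these cases give a $\gamma^r$-approximation, and the query time is as stated.

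I expect the delicate point to be establishing the monotonicity in the second paragraph, i.e.\ that $k_1^{(t)}\ge k_2^{(t-1)}$ for every pair of consecutive levels: this is exactly where the interplay of the ratio $\gamma$, the exponent $r$, the rounding $\lceil k\gamma^j\rceil$, and the threshold $k=\min(n,3(\gamma-1)^{-1})$ must be verified carefully (in particular that the threshold makes $k_1^{(t)}\le k_2^{(t)}$ hold throughout so that the covers are well defined, and that the ceilings do not destroy the interlacing). Everything else — the BST witness test, the binary search bookkeeping, and the finite case analysis bounding the ratio by $\gamma^r$ — is routine once this structural fact is in hand.
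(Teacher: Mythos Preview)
Your central structural claim --- that $k_1^{(t)}\ge k_2^{(t-1)}$ for consecutive levels, and hence that the predicate ``level $t$ has a segment inside $(i,j)$'' is monotone --- is false in general. On the geometric part, level $k+j$ has $k_1^{(k+j)}=\lceil k\gamma^j\rceil$ while level $k+j-1$ has $k_2^{(k+j-1)}=k\gamma^{j+r-2}$, so the inequality you need is $\lceil k\gamma^j\rceil\ge k\gamma^{j+r-2}$; this fails as soon as $r>2$ (and in the paper's application $r=2h$ or $r=3h$, so $r$ is typically large). Consequently the levels with a witness need not form a prefix, your $t^*$ is ill-defined as ``the top of the prefix'', and your approximation case analysis, which relies on ``no witness at level $t^*+1$'', does not go through.

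The paper's proof confronts this head on: it explicitly remarks that the predicate is \emph{not} monotone over the levels, and instead argues as follows. Let $i$ be the largest level whose upper parameter $k_2^{(i)}$ is at most $\ell$ (the true longest chain in $I$). Since the $k_2$-values are increasing, every level $1,\dots,i$ is guaranteed a witness by the cover definition; levels above $i$ may or may not have one. A standard ``largest true'' binary search on such a predicate still terminates at some level $\ge i$ (the true prefix prevents the search from ever settling below $i$), and the returned score is therefore at least $k_1^{(i)}$. The approximation bound is then obtained by comparing $k_1^{(i)}$ directly to $\ell$ via the index $e$ with $k\gamma^e\le\ell<k\gamma^{e+1}$, rather than by invoking ``no witness at $t^*+1$''. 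Your BST witness test and your final ratio computations are fine; the fix is to replace the monotonicity claim by this weaker guarantee on the binary search's output.
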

\begin{proof}
We execute a binary search over the levels (sorted by scores).
On each level, we query the BST storing the segments, trying to find a chain that lies inside the given interval $I$.
We stress that the property of admitting such a chain is not monotone over the levels, but
the binary search is still correct because of the following argument.
Choose the largest level $i$ corresponding to a $(k_{1},k_{2})$-cover such that
$I$ contains a chain of length $k_{2}$.
Then, by definition we are guaranteed that on all levels from $1$ to $i$, there is a segment inside $I$.
Possibly, there are larger levels containing segments inside $I$, but this can only help in our
binary search over the levels, and it will always return level $i$ or possibly larger.

There are clearly $\Oh(\log_{\gamma} n)$ levels, each containing at most $n$ segments, so the procedure
takes $\Oh(\log(\log_{\gamma}n)\cdot\log n)$ time.
Suppose the longest chain inside $I$ has length $l$.
If $l \leq k$, the covering family returns the exact answer, as level $l$ is an $l$-cover.
Otherwise, let $e$ be such that $k\gamma^{e} \leq l < k\gamma^{e+1}$.
If $e < r$, then a segment with a score of $k$ from the $k$-th level of the family is good enough.
If $e \geq r$, then the covering family returns a segment with a score of at least
$\lceil k\gamma^{e-(r-1)} \rceil \geq k\gamma^{(e+1)-r}$, which is a $\gamma^r$-approximation.
\end{proof}

Additionally, by a $(\cdot,0)$-covering family of $m$ points we mean a set of $k$-covers,
for each $1\leq k \leq m$, which can always provide an exact answer.

Now we describe a greedy algorithm for creating covers with some additional properties.
Given a cover $C$ consisting of segments $(b_{i},e_{i},L_{i})$, we define \emph{depth} of $C$ as the largest
subset of pairwise intersecting intervals $[b_{i},e_{i}]$. In other words, we calculate the largest
number of such intervals containing the same $x$.
We design two algorithms for creating covers with small depth.
The first one is actually exact, as for the short chains we cannot afford to
decrease their length even by one during our recursive approximation.
The second variant computes sets of segments for further levels of a covering family,
approximating the long chains.

\subsection{Exact cover}
We start with the non-approximate solution.
The greedy algorithm for the exact $k$-cover works as follows.
Assume $k>1$, as $1$-cover is basically a set of all elements.
Starting with the whole input array and an empty cover $C$,
the algorithm finds the shortest prefix $P$ of an array in which there is a chain of length $k$.
Now, some suffixes of $P$ need to be covered, so the algorithm finds the shortest suffix $S$ of $P$
in which there still is a chain of length $k$, as $S$ covers all of the mentioned suffixes.
$S$ and an arbitrary chain of length $k$ inside it is added as a segment to a cover $C$.
Note that both the first and the last element of $S$ must be in any chain of length $k$ in $S$.
Then, elements from the first one in the array up to the first one of $S$ (including them)
are cut off from the array.
This is one step of the algorithm, which then repeats itself.
Pseudocode is presented in Algorithm~\ref{alg:greedy_exact}.

\begin{algorithm}[h]
\begin{algorithmic}[1]
  \Function{Cover-exact}{$A,k$}
  \State \textbf{Input:} array $A = a_0,\ldots,a_{n-1}$, integer parameter $k$.
  \State \textbf{Output:} a collection of segments forming $k$-cover of $A$.
  \Statex
  \State $C \gets \emptyset$
  \State $i \gets 0$
  \While{$i < n$}
    \State $j \gets i$
    \While{$|\mathrm{LIS}(i,j)| < k$ and $j<n$} \Comment{LIS is computed incrementally} \label{LIS1}
      \State $j \gets j+1$
    \EndWhile
    \If{$j=n$}
    \State \Return $C$
    \EndIf
    \State $\triangleright$ $(i,j)$ is the shortest prefix with a chain of length $k$
    \State $q \gets j$
    \While{$|\mathrm{LIS}(q,j)| < k$} \Comment{LIS is computed incrementally} \label{LIS2}
      \State $q \gets q-1$
    \EndWhile
    \State $\triangleright$ $(q,j)$ is the shortest suffix with a chain of length $k$
    \State $C \gets C \cup \{(q,j,\mathrm{LIS}(q,j)\}$, \Comment{$\mathrm{LIS}(q,j)$ is a list of the elements of LIS in $(a_q,\ldots,a_j)$}
    \State $i \gets q+1$ \label{LIS3}
  \EndWhile
  \State \Return $C$
  \EndFunction
\end{algorithmic}
\caption{The greedy algorithm computing an exact cover.}
\label{alg:greedy_exact}
\end{algorithm}

In the remaining part of this subsection we will prove the following theorem.

\begin{theorem}
\label{greedy_exact_1}
Algorithm~\ref{alg:greedy_exact} returns a $k$-cover of depth at most $k$ in $\Oh(nk\log n)$ time.
\end{theorem}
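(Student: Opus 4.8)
The proof splits into three parts: (i) the returned collection of segments is a valid $k$-cover, (ii) its depth is at most $k$, and (iii) the running time is $\Oh(nk\log n)$. I expect part (ii) to be the main obstacle.

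For (i), write $i^{(t)},j^{(t)},q^{(t)}$ for the values of $i,j,q$ during the $t$-th iteration of the outer loop, so $i^{(0)}=0$ and $i^{(t+1)}=q^{(t)}+1$. Each emitted segment has score exactly $k$: by minimality of the prefix, $|\mathrm{LIS}(i^{(t)},j^{(t)}-1)|<k$, so $|\mathrm{LIS}(i^{(t)},j^{(t)})|=k$, and $(q^{(t)},j^{(t)})$ is a subarray of $(i^{(t)},j^{(t)})$ with a chain of length $k$. Since $q^{(t)}\ge i^{(t)}=q^{(t-1)}+1$ the beginnings are strictly increasing; every length-$k$ chain of $(q^{(t)},j^{(t)})$ uses its first element (minimality of the suffix), so $(q^{(t)}+1,j^{(t)})$ has no chain of length $k$, forcing $j^{(t+1)}>j^{(t)}$, so the ends are strictly increasing too. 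In particular no emitted segment lies inside another and they can be sorted consistently by beginnings and by ends. For the covering property, fix $i',j'$ with $(a_{i'},\ldots,a_{j'})$ containing a chain of length $k$ and let $t$ be the largest index with $i^{(t)}\le i'$ (which exists and is a genuine iteration, as $i^{(t)}\le i'\le n-1<n$). Then $(a_{i^{(t)}},\ldots,a_{j'})$ also contains such a chain, so the inner loop stops with $j^{(t)}\le j'<n$; and if $q^{(t)}<i'$ then $i^{(t+1)}=q^{(t)}+1\le i'$, contradicting maximality of $t$. Hence $i'\le q^{(t)}\le j^{(t)}\le j'$ and the emitted segment lies inside $(i',j')$.

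For (ii), fix a position $x$ and let $a,\ldots,b$ be the indices of segments whose interval contains $x$; by the monotonicity above this is a contiguous block with $q^{(a)}\le\cdots\le q^{(b)}\le x\le j^{(a)}\le\cdots\le j^{(b)}$, and we must show $b-a+1\le k$. The plan is to analyze the stored chains $L_a,\ldots,L_b$: each has length $k$, $L_t$ lies inside $[q^{(t)},j^{(t)}]$ and uses both endpoints of that interval. Split $L_t$ at $x$: let $r_t\in[1,k]$ be the number of its elements at positions $\le x$ (at least one, since $q^{(t)}\le x$) and let $\mu_t$ be the value of the last of those. The key step is to concatenate the first $r_{t+1}$ elements of $L_{t+1}$ with the last $k-r_t$ elements of $L_t$ — a valid chain exactly when $\mu_{t+1}$ is below the value of the $(r_t{+}1)$-st element of $L_t$ — and note it lies inside $[q^{(t+1)},j^{(t)}]\subseteq[q^{(t+1)},j^{(t+1)}-1]$, which has no chain of length $k$ by the minimality properties from part (i); symmetrically one concatenates a truncation of $L_t$ with a suffix of $L_{t+1}$ inside $[i^{(t+1)},j^{(t+1)}]$. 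Comparing lengths yields the implications $\mu_{t+1}\le\mu_t\Rightarrow r_{t+1}\le r_t-1$ and $r_{t+1}\le r_t-2\Rightarrow\mu_{t+1}<\mu_t$; chaining these over $t=a,\ldots,b-1$ together with $r_a\le k$ and $r_b\ge 1$ should pin down $b-a\le k-1$. Executing this bookkeeping carefully — and treating the degenerate cases where consecutive chains share the crossing element — is the technical heart of the argument.

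For (iii), each outer iteration increases $i$ strictly, so there are at most $n$ of them. In iteration $t$, loop~\ref{LIS1} builds a fresh Fredman structure for the fixed left endpoint $i^{(t)}$ and extends it to $j^{(t)}$ in time $\Oh((j^{(t)}-i^{(t)}+1)\log n)$; a position $p$ lies in $[i^{(t)},j^{(t)}]$ precisely for the $t$ with $q^{(t-1)}<p\le j^{(t)}$, which by the same monotonicity and the depth bound of part (ii) is only $\Oh(k)$ values of $t$. Loop~\ref{LIS2} incrementally prepends elements to compute $|\mathrm{LIS}(q,j^{(t)})|$ for $q=j^{(t)},j^{(t)}-1,\ldots,q^{(t)}$ in time $\Oh((j^{(t)}-q^{(t)}+1)\log n)$; the scanned range $[q^{(t)},j^{(t)}]$ is exactly the emitted segment, so again each position is touched $\Oh(k)$ times by part (ii). Summing over all iterations gives $\Oh(nk\log n)$ total time, which also subsumes writing down the $\Oh(n)$ segments together with their chains.
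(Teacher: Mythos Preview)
Parts (i) and (iii) are fine. For (iii), note that the depth of $\{[i^{(t)},j^{(t)}]\}$ exceeds that of $\{[q^{(t)},j^{(t)}]\}$ at any fixed position by at most one (a position $p$ with $q^{(t-1)}<p<q^{(t)}$ contributes to the former but not the latter for exactly one $t$), so your appeal to (ii) is legitimate.

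The gap is in (ii). Your two implications
\[
\mu_{t+1}\le\mu_t\ \Rightarrow\ r_{t+1}\le r_t-1,\qquad
r_{t+1}\le r_t-2\ \Rightarrow\ \mu_{t+1}<\mu_t
\]
are both correct, but together with $1\le r_t\le k$ they do \emph{not} bound $b-a$: the pattern $r_t\equiv c$ with $\mu_t$ strictly increasing satisfies both implications for arbitrarily many steps. Your concatenations place chains inside $[q^{(t)}+1,j^{(t)}]$ (where $\mathrm{LIS}<k$) and inside $[i^{(t+1)},j^{(t+1)}]$ (where $\mathrm{LIS}=k$); the latter only yields $r_{t+1}\ge r_t-1$, which cannot force $r_t$ down. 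Nothing in your setup rules out $r_{t+1}\ge r_t$, and there is no obvious third concatenation living in an interval with $\mathrm{LIS}<k$ that would close this. So the difficulty is not ``careful bookkeeping'' — the plan as stated does not reach the conclusion.

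The paper's proof avoids this by working not with the stored chain $L_t$ but with the \emph{lexicographically minimal} length-$k$ chain $X^{(t)}=(x^{(t)}_1,\ldots,x^{(t)}_k)$ in $[q^{(t)},j^{(t)}]$, and proving the shift property $x^{(t+1)}_i\ge x^{(t)}_{i+1}$ for all $1\le i<k$ (Lemma~\ref{greedy_exact_2}). With this, the count $\rho_t=|\{i:x^{(t)}_i\le x\}|$ satisfies $\rho_{t+1}\le\rho_t-1$ \emph{unconditionally}, since $x^{(t+1)}_{\rho_t}\ge x^{(t)}_{\rho_t+1}>x$; together with $1\le\rho_t\le k$ this gives $b-a\le k-1$ at once. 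The proof of the shift property is itself a double induction that repeatedly exploits lex-minimality to rule out configurations, and it is precisely this choice of chain that makes the argument close. If you want to repair your approach, switching from $L_t$ to the lex-minimal chain is the natural fix; the depth bound concerns only the intervals $[q^{(t)},j^{(t)}]$, so you are free to analyse whichever length-$k$ chain in each interval is most convenient.
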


Note that such depth is optimal for a $k$-cover, for example in the case of a sorted array.
We will prove Theorem~\ref{greedy_exact_1} by induction, but first, some additional notation is needed.
Let $P=(a,c)$ be some shortest prefix computed during the execution of the algorithm.
Then, $S=(b,c)$ is its shortest suffix still containing a chain of length $k$,
and let $X=(x_1,\ldots,x_k)$ be the lexicographically minimal (according to the positions of the elements) such chain.
Till the end of this section, we will denote by $x_i$ just a position of an element,
while $|x_i|$ denotes its value.
Recall that $X_{i,j}$ refers to the chain $(x_i,\ldots,x_{j})$.
Let $P'=(b+1,d)$ be the next shortest prefix computed by the algorithm after $P$,
and $X'=(x'_1,\ldots,x'_k)$ be the lexicographically minimal chain in the computed shortest suffix of $P'$.
Our goal is to show the following.

\begin{lemma}\label{greedy_exact_2}
For any two consecutive steps of Algorithm~\ref{alg:greedy_exact} and all $1 \leq i < k$,
we have $x'_i \geq x_{i+1}$.
\end{lemma}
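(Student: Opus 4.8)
The plan is to induct on $i$: I will prove $x'_i\ge x_{i+1}$ assuming it holds for all smaller indices. Two structural facts are used throughout. First, because $S=(b,c)$ is the \emph{shortest} suffix of $P=(a,c)$ that still contains a chain of length $k$, the interval $(b+1,c)$ contains no chain of length $k$; likewise, because $P'=(b+1,d)$ is the shortest prefix of the array after the deletion that contains such a chain, $(b+1,d-1)$ contains no chain of length $k$, and in particular $c\le d-1$. Also recall $x_1=b$, $x_k=c$, $x'_1=b'\ge b+1$, $x'_k=d$. Second, decompose $(b+1,d-1)$ into its canonical $k-1$ decreasing runs $D_1,\ldots,D_{k-1}$, where $D_m$ collects the positions $p$ for which the longest chain inside $(b+1,p)$ ending at $p$ has length exactly $m$ (each run is decreasing, each is nonempty, and $m$ never exceeds $k-1$ since $(b+1,d-1)$ has no chain of length $k$). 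Both $(x_2,\ldots,x_k)$ — which lies in $(b+1,c)\subseteq(b+1,d-1)$ — and $(x'_1,\ldots,x'_{k-1})$ — which lies in $(b',d-1)\subseteq(b+1,d-1)$ — are chains of length $k-1$ in this region, hence each meets every run exactly once and in increasing order of the index $m$ (if consecutive elements of a chain lay in $D_m$ and $D_{m'}$ with $m>m'$, extending a length‑$m$ chain would give $\ell(\cdot)>m'$ for the second). Therefore $x_{t+1}\in D_t$ and $x'_t\in D_t$ for every $t$, and since $D_i$ is decreasing, $x'_i\ge x_{i+1}$ is equivalent to $a_{x'_i}\le a_{x_{i+1}}$. (If $b'>c$ then $x'_i\ge x'_1>c\ge x_{i+1}$ for every $i$, so we may assume $b'\le c$.)

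Fix $i$ and suppose towards a contradiction that $x'_i<x_{i+1}$, equivalently $a_{x'_i}>a_{x_{i+1}}$. Since $X$ is a chain, $a_{x_i}<a_{x_{i+1}}<a_{x'_i}$, and the induction hypothesis gives $x_i\le x'_{i-1}<x'_i$ in position (for $i\ge 2$; for $i=1$ simply $x_1=b<b+1\le b'=x'_1$); together these say $x_i\prec x'_i$. The idea is to construct a chain of length $k$ inside $(b,c)$ that coincides with $X$ on its first $i$ entries and has $x'_i$ as entry $i+1$: such a chain would be lexicographically smaller than $X$ (because $x'_i<x_{i+1}$), contradicting the fact that $X$ is the lexicographically minimal length‑$k$ chain inside $(b,c)$. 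Equivalently, it suffices to find a chain of length $k-i$ inside $(x'_i,c)$ starting at $x'_i$, and then append it to $(x_1,\ldots,x_i)$. Two cases are immediate. If $i=k-1$, then $x'_{k-1}<x_k=c$, the single element $(x'_{k-1})$ works, and $(x_1,\ldots,x_{k-1},x'_{k-1})$ already contradicts minimality. If $x'_{k-1}\le c$, the suffix $(x'_i,x'_{i+1},\ldots,x'_{k-1})$ of $X'$ has length $k-i$, starts at $x'_i$, and lies in $(x'_i,c)$, and we are done.

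The remaining case — $i<k-1$ and $x'_{k-1}>c$ — is where the real work lies, and is the main obstacle. In this case $x'_{k-1}$ and $c=x_k$ both lie in $D_{k-1}$ with $x'_{k-1}>c$, so $a_{x'_{k-1}}<a_c$, whence \emph{all} of $x'_1,\ldots,x'_{k-1}$ have value below $a_c$; combined with $b'\le c$ this forces a crossing index $s$ with $x'_s\le c<x'_{s+1}$. One must now splice the prefix $(x'_i,\ldots,x'_s)$ of $X'$ with a suitable tail of $X$ — appending the longest possible tail $(x_m,\ldots,x_k)$ of $X$ with $x'_s\prec x_m$ — so as to obtain a chain of length exactly $k-i$ starting at $x'_i$ that stays weakly left of $c$. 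The delicate point is purely about values: $x'_i<x_{i+1}$ carries no value information, and in fact $a_{x'_i}>a_{x_{i+1}}$, so $x'_i$ and $x_{i+1}$ are $\prec$‑incomparable and the splice cannot happen at index $i$. Bounding how far one must go before a legitimate splice point appears, so that the resulting chain still has the exact length $k-i$, requires a secondary minimal‑counterexample argument that unwinds the failures of the relation $x'_{i+r}\prec x_{i+2+r}$ using that the values along $X$ and along $X'$ are increasing, the layer structure $D_1,\ldots,D_{k-1}$, and once more the no‑length‑$k$‑chain property of $(b+1,c)$. This value bookkeeping is the technical heart of the proof; everything else is routine.
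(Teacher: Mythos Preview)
The proposal is incomplete: your third case ($i<k-1$ and $x'_{k-1}>c$) is the entire technical content of the lemma, and you explicitly leave it as a sketch (``requires a secondary minimal-counterexample argument\ldots This value bookkeeping is the technical heart of the proof''). The layer decomposition $D_1,\ldots,D_{k-1}$ is a nice observation---it immediately gives $x'_t,x_{t+1}\in D_t$ and converts the positional inequality into a value inequality---but it does not by itself circumvent the hard step, and your cases~1 and~2 are routine.

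More importantly, your strategy in case~3 (search for a splice index $s$ with $x'_s\prec x_{s+2}$ and glue $X'_{i,s}$ to $X_{s+2,k}$) is not how the paper resolves the obstacle, and it is not clear it can be completed with the length budget exactly $k-i$. The paper instead proves that case~3 is \emph{vacuous}: under the contradiction hypothesis $x'_i<x_{i+1}$, a short secondary induction on $j=i,i+1,\ldots,k-1$ shows simultaneously that $x'_j<x_{j+1}$ and $|x'_j|\ge |x_{j+2}|$. The inductive step uses two ingredients your sketch does not invoke. First, if $x'_j>x_{j+1}$ while $|x'_j|>|x_{j+1}|$ (the latter coming from $|x'_{j-1}|\ge|x_{j+1}|$), then $X_{2,j+1}\circ X'_{j,k}$ is a chain of length $k+1$ inside $P'=(b+1,d)$, contradicting that $P'$ is the \emph{shortest} prefix with a chain of length $k$; you only use the no-length-$k$ property of $(b+1,c)$, which is weaker. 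Second, if $|x'_j|<|x_{j+2}|$, then $X_{1,i}\circ X'_{i,j}\circ X_{j+2,k}$ is a length-$k$ chain in $(b,c)$ lexicographically smaller than $X$. Taking $j=k-1$ yields $x'_{k-1}<x_k=c$, at which point your own case-2 chain $X_{1,i}\circ X'_{i,k-1}$ already gives the contradiction. In short, the work you defer \emph{is} the secondary induction, and carrying it out collapses your case~3 into case~2.
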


\begin{proof}
The overall strategy is to apply induction on increasing $i$, however inside the inductive step we will need another induction.

Suppose $x'_i < x_{i+1}$.
From the induction hypothesis, or in the base case of $i=1$, we also have $x'_i > x_{i}$.
It cannot be that $|x'_i| < |x_{i+1}|$, because then $X'_{1,i}\circ X_{i+1,k}$
is a chain of length $k$ with $x'_1 > x_1$, so suffix $S$ would not be the shortest one.
Now, if $i=k-1$, then $X_{1,k-1}\circ x'_{k-1}$ is a chain of length $k$ contradicting the minimality of $P$,
so assume $i<k-1$.
It is also not possible that $|x_{i+1}| \leq |x'_i| < |x_{i+2}|$, because then chain
$X_{1,i}\circ x'_i\circ X_{i+2,k}$ is lexicographically smaller than $X$.
Thus, we have $|x'_i| \geq |x_{i+2}|$.

Now, we claim that by secondary induction those properties extends further,
namely for every $i \leq j < k$ we have $x'_j < x_{j+1}$,
and for every $i \leq j < k-1$ we have $|x'_j| \geq |x_{j+2}|$.
The base case of $j=i$ was just proved.
Now, assume the claim holds for $j-1$, so $|x'_j| > |x'_{j-1}| \geq |x_{j+1}|$.
If $x'_j > x_{j+1}$, then $X_{2,j+1}\circ X'_{j,k}$ is a chain of length $k+1$
which contradicts $P'$ being the shortest prefix.
Additionally, $x'_j = x_{j+1}$ cannot hold since $|x'_j| > |x_{j+1}|$.
Thus, we have $x'_j < x_{j+1}$.
But then, for $j < k-1$, it cannot be that $|x'_j| < |x_{j+2}|$, because in such case
$X_{1,i}\circ X'_{i,j}\circ X_{j+2,k}$ is a chain of length $k$
lexicographically smaller than $X$ and inside $S$, since we assumed $x'_i<x_{i+1}$.
Therefore, indeed for every $i \leq j < k$ we have $x'_j < x_{j+1}$.

But then in particular $x'_{k-1} < x_k$, which is the final contradiction needed for the main induction,
since then $X_{1,i}\circ X'_{i,k-1}$ is a chain of length $k$ contradicting minimality of $P$.
\end{proof}

Lemma~\ref{greedy_exact_2} is enough to prove Theorem~\ref{greedy_exact_1} by transitiveness as follows.
During the execution of the greedy algorithm, the first element of any newly computed chain
(which is the first element of the respective shortest suffix) is at position equal or larger
than the position of the last element of a chain computed $k-1$ steps before (which is the last element of the respective suffix).
Therefore, at most $k$ of the computed segments admit nonempty pairwise intersections.
The bound on the time complexity of the whole algorithm follows, as the overall length
of all arrays on which we run Fredman's algorithm is $\Oh(nk)$, and this algorithm takes
logarithmic time per element.
It is easy to see that no computed segment is inside another,
thus all segments can be stored as a BST to meet the definition of a cover.

\subsection{Approximate cover}
The second algorithm uses two integer parameters, $k_1$ and $k_2$ with $k_2 > k_1$,
and its goal is to create a $(k_1,k_2)$-cover.
Let $\Delta = k_2-k_1$.
As it turns out, a greedy solution provides a roughly optimal depth of $k_1/\Delta$.
Similarly as with the exact cover, a greedy algorithm finds the shortest prefix $P$ in which there is a chain of length $k_2$,
then it finds the shortest suffix $S$ of $P$ in which there is a chain of length $k_1$,
adds a segment corresponding to $S$ to a cover $C$ and then repeats,
starting from the first element of $S$.
The pseudocode of this procedure is very similar to the one of Algorithm~\ref{alg:greedy_exact}.
We just modify step~\ref{LIS1} to search for the shortest prefix of length $k_2$ and
step~\ref{LIS2} to search for the shortest suffix of length $k_1$.
Additionally, we change step~\ref{LIS3} to $i \gets q$, without incrementing by 1,
as this is not necessary here and simplifies the proof.
The modified pseudocode is presented in the appendix as Algorithm~\ref{alg:greedy_approx}.
See Figure~\ref{fig:greedy} for a simple example.

\begin{figure}[h]
\begin{center}
  \includegraphics[scale=1.2]{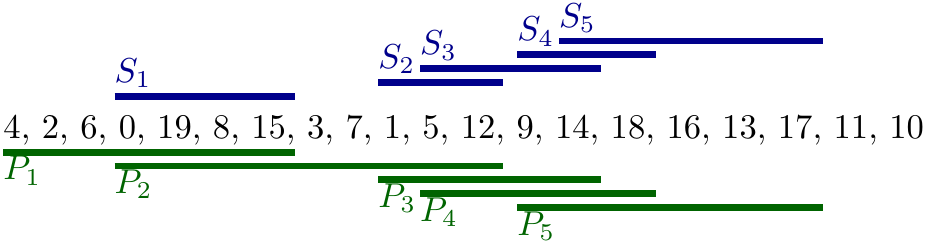}
\end{center}
  \caption{An example of the greedy algorithm for (3,4)-cover.
  The lexicographically minimal LIS in prefix $P_2$ is $(0,3,7,12)$,
  and $(14,16,17)$ in suffix $S_5$.
  The set $S=\{S_1,\ldots,S_5\}$ is enough to cover any subarray containing a chain of length 4,
  for example subarray from value 15 to 14 is covered by $S_2$ and $S_3$.
  The depth of $S$ is $3$, as $S_3$, $S_4$ and $S_5$ have pairwise nonempty intersections.}
  \label{fig:greedy}
\end{figure}

In the remaining part of this subsection we will prove the following theorem.

\begin{theorem}
\label{greedy_approx_1}
Algorithm~\ref{alg:greedy_approx} returns a $(k_1,k_2)$-cover of depth at most $k_1/\Delta$, where $\Delta=k_2-k_1$,
in $\Oh((n\log n) k_1/\Delta)$ time.
\end{theorem}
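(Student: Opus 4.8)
The plan is to mirror the proof of Theorem~\ref{greedy_exact_1}, replacing the exact analysis with one that allows a gap of $\Delta$ between the prefix-length and suffix-length thresholds. First I would set up the same notation: let $P=(a,c)$ be a shortest prefix in some step containing a chain of length $k_2$, let $S=(b,c)$ be its shortest suffix still containing a chain of length $k_1$, and let $X=(x_1,\ldots,x_{k_1})$ be the lexicographically minimal such chain in $S$. Let $P'=(b,d)$ be the next shortest prefix (note we now restart at $i\gets q$, i.e.\ at $b$, rather than $b+1$), and $X'=(x'_1,\ldots,x'_{k_1})$ the lexicographically minimal chain in its shortest suffix. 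The key structural lemma I would aim for is an analogue of Lemma~\ref{greedy_exact_2}, but shifted by $\Delta$: for two consecutive steps and all $1\le i\le k_1-\Delta$, we have $x'_i \ge x_{i+\Delta}$. Granting this, transitivity gives that the first element of a newly computed chain lies at a position at least that of the last element of a chain computed roughly $\lceil k_1/\Delta\rceil - 1$ steps earlier, so at most $\lceil k_1/\Delta\rceil$ segments pairwise intersect, yielding the depth bound. The running-time bound is then immediate: the total length of all subarrays fed to Fredman's algorithm is $\Oh(n k_1/\Delta)$ (each step advances the left pointer and the incremental LIS work is proportional to the subarray scanned, with each contributing to depth at most $\lceil k_1/\Delta\rceil$ overlapping segments), and Fredman's algorithm spends $\Oh(\log n)$ per element.

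For the structural lemma itself I expect the argument to follow the same two-layer induction as in Lemma~\ref{greedy_exact_2}, but the bookkeeping is more delicate because now we are tracking a window of width $\Delta$ rather than a single shift. Suppose for contradiction that $x'_i < x_{i+\Delta}$ for some minimal $i$; from the hypothesis (or the base case) we also get $x'_i > x_i$ appropriately positioned. If the value $|x'_i|$ were small enough — specifically $|x'_i| < |x_{i+\Delta}|$ — then splicing $X'_{1,i}$ with $X_{i+\Delta,k_1}$ would yield a chain of length $i + (k_1 - (i+\Delta) + 1) = k_1 - \Delta + 1 \le k_1$ inside a proper suffix of $S$ starting later than $x_1$, contradicting minimality of $S$ (here I need to be careful that the splice has length at least $k_1$ — in fact it has length $k_1-\Delta+1$, which suffices to contradict the suffix being shortest only when that is $\ge k_1$, i.e.\ $\Delta=1$; for larger $\Delta$ I instead need to prepend more of $X'$ or argue against the lexicographic minimality of $X$ using a chain of length exactly $k_1$). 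So the real content is to identify, for each $\Delta$, which splice of $X$ and $X'$ produces a chain of length $\ge k_1$ that either contradicts the minimality of $P'$ (too long, length $k_1+1$ or more inside $P'$), the minimality of $S$ (length $k_1$ starting too late), or the lexicographic minimality of $X$ (length exactly $k_1$ inside $S$, lexicographically smaller). The secondary induction then propagates the inequality $x'_j < x_{j+\Delta}$ and a companion value inequality $|x'_j| \ge |x_{j+\Delta+1}|$ (or similar) for $j = i, i+1, \ldots$ up to $j=k_1-1$, and the endpoint $x'_{k_1-1} < x_{k_1}$ (together with one more value comparison) furnishes a chain of length $k_1$ contradicting the minimality of the prefix $P$.

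The main obstacle I anticipate is precisely making the window-shift induction go through cleanly: with $\Delta>1$, the chain $X$ has $k_1$ elements but we are threading a window of $\Delta$ positions of $X$ between consecutive occurrences of $X'$-elements, and at the boundaries of the secondary induction (near $j=k_1$) the naive splice runs out of room — there may simply not be enough elements of $X$ to the right of index $j+\Delta$. So I would need to handle a terminal phase where, once $j+\Delta$ exceeds $k_1$, the argument switches from "splice a tail of $X$" to "the remaining elements of $X'$ alone, combined with a head of $X$, already form a long chain", and check the counting there. A secondary subtlety is verifying that the modified step~\ref{LIS3} (setting $i\gets q$ instead of $q+1$) does not cause the algorithm to loop or stall: since every computed suffix $S$ has score $k_1\ge 2 > 1$ it has length at least $2$, so restarting at its first element still makes progress, and the first element of a chain of length $k_1$ in $S$ is forced, so no segment is nested in another — giving the BST-storability required by the cover definition. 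Finally, I would double-check the depth constant: the claim is depth $\le k_1/\Delta$ (not $\lceil k_1/\Delta\rceil$), which should come out because after $\lceil k_1/\Delta\rceil$ applications of the shift lemma one gains a total position shift of at least $k_1$ along a chain of length $k_1$, exhausting it exactly, so the $\lceil\cdot\rceil$ can be absorbed — but this is the kind of off-by-one I would want to confirm against the example in Figure~\ref{fig:greedy}, where $k_1=3$, $\Delta=1$, and the stated depth is $3$.
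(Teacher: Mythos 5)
There is a genuine gap at the crux of your argument, and you have in fact flagged it yourself without resolving it. You set up a single shift lemma relating the lexicographically minimal length-$k_1$ chains $X$ and $X'$ of two \emph{consecutive suffixes}, and then try to run the splicing machinery of Lemma~\ref{greedy_exact_2} on these two chains alone. As you observe, the splice $X'_{1,i}\circ X_{i+\Delta,k_1}$ has length only $k_1-\Delta+1$, which contradicts nothing once $\Delta>1$; your proposed remedies ("prepend more of $X'$" or "argue against lexicographic minimality of $X$") are exactly the hard part and are left unexecuted. The paper avoids this dead end by a different decomposition: it keeps \emph{two} chains per step --- the lexicographically minimal chain $X=(x_1,\ldots,x_{k_2})$ of length $k_2$ in the shortest prefix $P$, and the lexicographically minimal chain $Y=(y_1,\ldots,y_{k_1})$ of length $k_1$ in the shortest suffix $S$ --- and proves two separate lemmas, $y_i\geq x_{i+\Delta}$ (Lemma~\ref{greedy_approx_2}, within one step) and $x'_i\geq y_i$ (Lemma~\ref{greedy_approx_3}, across steps), chaining them to get the $\Delta$-shift. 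The point of routing through the length-$k_2$ chain is precisely that every splice then lands at a useful length: $Y_{1,i}\circ X_{i+\Delta,k_2}$ has length $k_1+1$ and contradicts minimality of the suffix, $X_{1,i-1+\Delta}\circ Y_{i,k_1}$ has length $k_2$ and contradicts lexicographic minimality of $X$, and in the second lemma the splices yield lengths $k_1+1$, $k_2$, $k_2+1$ or $k_1$ as needed. With only length-$k_1$ objects in hand, these counts do not come out, and no amount of window bookkeeping in a secondary induction fixes that; you need the $k_2$-chain of the prefix as an intermediary (or some equivalent device you have not supplied).

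Two smaller remarks. First, your worry about a "terminal phase" near $j=k_1$ dissolves in the paper's formulation, because the tail being spliced comes from the length-$k_2$ chain, which always has $\Delta$ more elements to spare; this is another symptom of the same structural issue. Second, your reading of the restart point and of the time bound is fine: the $\Delta$-shift of the left endpoints gives at most $\lceil k_1/\Delta\rceil=\Oh(k_1/\Delta)$ pairwise-intersecting segments, each array position is scanned in $\Oh(k_1/\Delta)$ prefix/suffix computations, and Fredman's incremental algorithm costs $\Oh(\log n)$ per scanned element, matching the stated $\Oh((n\log n)k_1/\Delta)$.
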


In order to prove the above theorem, we will again examine the relationship between the elements
of chains in the consecutive computed subarrays.
We will do this in two steps.
Let $P=(a,c)$ be some shortest prefix computed during the execution of the algorithm,
and $X=(x_1,\ldots,x_{k_2})$ be the lexicographically minimal (according to the positions of the elements) chain of length $k_2$ in $P$.
Then, $S=(b,c)$ is the shortest suffix of $P$ still containing a chain of length $k_1$,
and let $Y=(y_1,\ldots,y_{k_1})$ be lexicographically minimal such chain.
Firstly, we need to inductively show the following.

\begin{lemma}\label{greedy_approx_2}
For any two consecutive steps of Algorithm~\ref{alg:greedy_approx} and all $1 \leq i \leq k_1$,
we have $y_i \geq x_{i+\Delta}$.
\end{lemma}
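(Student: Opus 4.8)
The plan is to mimic the structure of the proof of Lemma~\ref{greedy_exact_2}, but track the offset $\Delta$ between the long chain $X$ (of length $k_2$) in the prefix and the short chain $Y$ (of length $k_1$) in the suffix. I would argue by induction on increasing $i$, proving $y_i \ge x_{i+\Delta}$, and as in the exact case I expect to need a nested (secondary) induction inside each inductive step. For the base case $i=1$: recall that after the previous step the algorithm restarts exactly at the first element of the previous suffix, so $y_1$ is at least the position where the current prefix search began; comparing $y_1$ against $x_{1+\Delta}$ requires ruling out $y_1 < x_{1+\Delta}$ by a chain-splicing argument. Specifically, if $y_1 < x_{1+\Delta}$ were positioned to the left of $x_{1+\Delta}$, I would want to glue a prefix of $Y$ (or of $X$ together with a short tail) to the suffix $X_{1+\Delta,\,k_2}$ of $X$ to produce a chain that is too long for $P$ to be the shortest prefix, or lexicographically smaller than $X$ — exactly the kind of contradiction used in the exact proof.

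Concretely, the inductive step would assume $y_{i-1}\ge x_{i-1+\Delta}$ (so in particular $|y_{i-1}| \ge |x_{i-1+\Delta}|$, hence $|y_i| > |x_{i-1+\Delta}|$) and suppose for contradiction $y_i < x_{i+\Delta}$. First I would show this forces $|y_i| \ge |x_{i+\Delta}|$ (otherwise splice $Y_{1,i}\circ X_{i+\Delta,k_2}$ — a chain of length $i + (k_2 - i - \Delta + 1) = k_2 - \Delta + 1 = k_1 + 1 > k_1$, actually need to be careful with indices, but the point is it yields a chain contradicting minimality of the suffix $S$ or of the prefix). Then, exactly as in the exact case, I would run a secondary induction showing that the inequalities propagate: for every $i \le j \le k_1$ we get $y_j < x_{j+\Delta}$, using at each step a splice of the form $X_{\cdot,\cdot}\circ Y_{\cdot,\cdot}$ or $Y_{1,i}\circ X'_{\cdot}\circ X_{\cdot}$ to contradict either the minimality of the prefix (too long a chain) or the lexicographic minimality of $X$. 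Running the secondary induction to $j = k_1$ gives $y_{k_1} < x_{k_1+\Delta} = x_{k_2}$, and then $Y$ followed by nothing — or rather a suitable prefix argument — contradicts the choice of the prefix $P$ as shortest, closing the main induction.

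The main obstacle I anticipate is bookkeeping the index arithmetic so that every spliced chain has exactly the right length ($k_1$ when we want to contradict minimality of $S$, or $k_2$ or $k_2+1$ when we want to contradict minimality of $P$) and genuinely lies inside the relevant subarray. Since $Y$ has length $k_1$ and $X$ has length $k_2 = k_1 + \Delta$, the offset $\Delta$ has to appear in precisely the right places, and the two chains live in overlapping-but-different subarrays, so I must be careful that splice points satisfy the $\prec$ condition and that the resulting chain stays within the prefix or suffix as needed. Once Lemma~\ref{greedy_approx_2} is established, Theorem~\ref{greedy_approx_1} should follow just as Theorem~\ref{greedy_exact_1} did: transitivity of $y_i \ge x_{i+\Delta}$ over $\lceil k_1/\Delta\rceil$ consecutive steps shows that the first element of a newly created segment lies past the last element of the segment created $\lceil k_1/\Delta\rceil$ steps earlier, bounding the depth by $k_1/\Delta$; and the running time bound follows because the total length of all subarrays fed to Fredman's algorithm is $\Oh(n k_1/\Delta)$, with $\Oh(\log n)$ time charged per element.
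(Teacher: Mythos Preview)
Your plan has two genuine problems and misses how much simpler this lemma is than Lemma~\ref{greedy_exact_2}.

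First, your base case is confused: you invoke ``the previous step'' of the algorithm, but $X$ and $Y$ belong to the \emph{same} step --- $X$ is the lex-minimal $k_2$-chain in the prefix $P$, and $Y$ is the lex-minimal $k_1$-chain in the suffix $S$ of that very $P$. The base case $y_1\ge x_{1+\Delta}$ is immediate: $X_{1+\Delta,k_2}$ is already a chain of length $k_1$ inside $P$, so the shortest suffix $S=(b,c)$ containing a $k_1$-chain must start at $b\ge x_{1+\Delta}$, and $y_1=b$.

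Second, the secondary induction you sketch is unnecessary, and your closing step (``$Y$ followed by nothing --- or rather a suitable prefix argument'') does not obviously yield a contradiction. The paper's proof is a single, short induction. In the inductive step, assume $y_{i-1}\ge x_{i-1+\Delta}$, so $y_i>x_{i-1+\Delta}$, and suppose $y_i<x_{i+\Delta}$. You correctly identify the first case: if $|y_i|<|x_{i+\Delta}|$ then $Y_{1,i}\circ X_{i+\Delta,k_2}$ is a chain of length $k_1+1$ inside $S$, contradicting minimality of $S$. What you are missing is that the remaining case, $|y_i|\ge |x_{i+\Delta}|$, finishes immediately: the splice $X_{1,i-1+\Delta}\circ Y_{i,k_1}$ is a valid chain (positions: $x_{i-1+\Delta}<y_i$ by the inductive hypothesis; values: $|x_{i-1+\Delta}|<|x_{i+\Delta}|\le |y_i|$), has length $(i-1+\Delta)+(k_1-i+1)=k_2$, lies inside $P$, and is lexicographically smaller than $X$ since it agrees with $X$ on the first $i-1+\Delta$ positions and then has $y_i<x_{i+\Delta}$. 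That contradicts the lex-minimality of $X$, and the induction is done --- no nested induction required.

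Finally, note that Theorem~\ref{greedy_approx_1} does not follow from this lemma alone: you also need Lemma~\ref{greedy_approx_3}, which relates $Y$ to the lex-minimal chain $X'$ in the \emph{next} prefix $P'$. Only the composition $x'_i\ge y_i\ge x_{i+\Delta}$ gives the shift of $\Delta$ between consecutive segments that bounds the depth.
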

\begin{proof}
The base case of $i=1$ holds, since $X_{1+\Delta,k_2}$ is a chain of length $k_1$.
Now, assume the claim holds for $i-1$.
From the induction hypothesis, we have $y_i > x_{i-1+\Delta}$.
Suppose $y_i < x_{i+\Delta}$.
It cannot be that $|y_i| < |x_{i+\Delta}|$, since then $Y_{1,i}\circ X_{i+\Delta,k_2}$
is a chain of length $k_1+1$, which contradicts the minimality of suffix $S$.
Thus, $|y_i| \geq |x_{i+\Delta}|$ must hold.
But this is a contradiction, because then $X_{1,i-1+\Delta}\circ Y_{i,k_1}$
is a chain of length $k_2$ lexicographically smaller than $X$.
\end{proof}

With the above, we established a connection between (lexicographically minimal) chains in $P$ and $S$,
and now we need to connect the chain in $S$ with the chain in the next prefix $P'$.
So, let $P'=(b,d)$ be the next prefix computed by the algorithm after $P$,
and $X'=(x'_1,\ldots,x'_{k_2})$ is the lexicographically minimal chain of length $k_2$ in $P'$.
Additionally, let $z$ be the maximum number such that $x'_z \leq c$, so exactly $z$ elements of $X'$ are inside $S$.
Note that $z \leq k_1$.

\begin{lemma}\label{greedy_approx_3}
For any two consecutive steps of Algorithm~\ref{alg:greedy_approx} and all $1 \leq i \leq k_1$,
we have $x'_i \geq y_i$.
\end{lemma}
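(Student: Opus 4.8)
The plan is to mirror the structure of the proof of Lemma~\ref{greedy_approx_2}, using induction on increasing $i$, but now comparing the chain $Y=(y_1,\ldots,y_{k_1})$ inside the suffix $S=(b,c)$ with the lexicographically minimal length-$k_2$ chain $X'=(x'_1,\ldots,x'_{k_2})$ in the next prefix $P'=(b,d)$. Recall that $P'$ starts exactly at $b$ (because of the modified step~\ref{LIS3}, which sets $i \gets q$), so both $Y$ and $X'$ live in subarrays sharing the same left endpoint $b$. The first $z$ elements of $X'$ lie inside $S$, where $z \le k_1$. For $i > z$ the claim $x'_i \ge y_i$ is likely immediate (or vacuous) since $x'_i$ then lies to the right of $c$, hence to the right of all of $Y$; so the real work is for $1 \le i \le z$.

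First I would set up the base case $i=1$: since $P'$ begins at $b$ and $Y$ is a chain of length $k_1$ (hence $\ge 1$) starting at $y_1 \ge b$, and $x'_1$ is the first element of $X'$ which also starts no earlier than $b$, we should get $x'_1 \ge b$; to get $x'_1 \ge y_1$ one argues that if $x'_1 < y_1$ then we could not have chosen $S=(b,c)$ as the shortest suffix — more precisely, one exploits lexicographic minimality of $Y$ together with the fact that $x'_1$ (being inside $S$ when $1 \le z$) could be spliced to form a competing chain. For the inductive step, assume $x'_{i-1} \ge y_{i-1}$; suppose for contradiction that $x'_i < y_i$. As in Lemma~\ref{greedy_approx_2}, I would split on the comparison of values $|x'_i|$ versus $|y_i|$. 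If $|x'_i| \le |y_i|$ (really $<$, strictness coming from distinct coordinates), then $Y_{1,i-1}\circ X'_{i,?}$-type surgery — or rather replacing a prefix of $Y$ by the portion of $X'$ up through $x'_i$ — should contradict the lexicographic minimality of $Y$ in $S$, because $x'_i$ lies inside $S$ and the resulting chain has length $\ge k_1$ but is lexicographically smaller. If instead $|x'_i| > |y_i|$, then $X'_{1,i-1}\circ Y_{i,k_1}$ is a chain inside $P'$ of length $\ge k_1$, but we need length $k_2$; here I expect to have to use Lemma~\ref{greedy_approx_2} (i.e. $y_i \ge x_{i+\Delta}$, connecting back to $X$ in $P$) and the fact that $X'$ was the lexicographically minimal length-$k_2$ chain in $P'$, to derive that $X'_{1,i}$ combined with the appropriate tail is length $k_2$ yet lexicographically smaller than $X'$, a contradiction.

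The main obstacle I anticipate is the second case: unlike in Lemma~\ref{greedy_approx_2}, where both chains had the same target length, here $Y$ has length $k_1$ while $X'$ has length $k_2$, so a naive prefix-swap produces a chain that is too short by $\Delta$. Closing this gap will require carefully re-using the inequality $y_i \ge x_{i+\Delta}$ from Lemma~\ref{greedy_approx_2} to "lift" the length-$k_1$ tail $Y_{i,k_1}$ back to a length-$k_2$ chain via the elements $x_{i+\Delta},\ldots,x_{k_2}$ of the previous chain $X$ — one must check these splice points are compatible ($\prec$-ordered) and that all elements involved lie inside $P'$. Once that bookkeeping is done, the contradiction with lexicographic minimality of $X'$ (or, in the boundary subcase $i$ near $k_1$, with minimality of the prefix $P'$ itself) should follow exactly as in the earlier lemmas. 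I would therefore handle the endpoint indices $i \in \{k_1, z\}$ as separate small cases, analogous to the "$i = k-1$" special case in Lemma~\ref{greedy_exact_2}.
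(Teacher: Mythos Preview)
Your overall shape (case split on $|x'_i|$ versus $|y_i|$, chain surgery to contradict some minimality) is right, but the direction of induction is the crux, and your choice of \emph{increasing} $i$ creates a real obstacle you do not close. The paper runs the induction in the \emph{decreasing} direction, with base case $i=z$ (where $x'_{z+1}>c\ge y_{z+1}$ is automatic). The reason this matters is exactly your ``second case'': when $|x'_i|\ge|y_i|$, you need to manufacture a chain of length $k_2$ in $P'$ that is lexicographically smaller than $X'$. With the decreasing hypothesis $x'_{i+1}\ge y_{i+1}$ in hand, the splice $Y_{1,i}\circ X'_{i+1,k_2}$ is a valid chain of length $k_2$ (position: $y_i<y_{i+1}\le x'_{i+1}$; value: $|y_i|\le|x'_i|<|x'_{i+1}|$), and this gives the contradiction immediately when $Y_{1,i}$ is lex-smaller than $X'_{1,i}$. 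The remaining subcase ($Y_{1,i}$ lex-larger) is handled by showing $|x'_i|<|y_{i+1}|$ and then using $X'_{1,i}\circ Y_{i+1,k_1}$ against the lex-minimality of $Y$. None of this needs $X$ or Lemma~\ref{greedy_approx_2}.

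Your proposed fix, lifting via $x_{i+\Delta},\ldots,x_{k_2}$ from the \emph{previous} step's chain $X$, does not work: those elements live in $P=(a,c)$, but $P'$ starts at $b>a$, and Lemma~\ref{greedy_approx_2} gives $x_{1+\Delta}\le y_1=b$, so at most $k_1-1$ elements of $X$ lie inside $P'$, which is not enough to reach length $k_2$. Separately, in your first case ($|x'_i|<|y_i|$) the chain $X'_{1,i}\circ Y_{i,k_1}$ is correct, but the contradiction is not with the lex-minimality of $Y$; it is with the fact that the shortest suffix $S$ has LIS exactly $k_1$ (you have produced a chain of length $k_1+1$ inside $S$). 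So: flip the induction to decreasing $i$, drop the appeal to $X$, and the argument goes through as in the paper.
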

\begin{proof}
For $i>z$ this is obvious.
We proceed by induction on decreasing $i$ with the base case $i=z$.
The base and the step are proved similarly, so now assume that $i\leq z$ and the claim holds for $i+1$.
Suppose $x'_i < y_i$.
It cannot be that $|x'_i| < |y_i|$, because then $X'_{1,i}\circ Y_{i,k_1}$
is a chain of length $k_1+1$ inside $S$.
Thus, we have $|x'_i| \geq |y_i|$.
Now, there are two cases.
If $Y_{1,i}$ is lexicographically smaller than $X'_{1,i}$, we have a contradiction,
since then $Y_{1,i}\circ X'_{i+1,k_2}$ is a chain of length $k_2$ lexicographically smaller
than $X'$ and in $P'$.
To deal with the case of $Y_{1,i}$ being lexicographically larger than $X'_{1,i}$, we first 
need to argue that if $i<k_1$, then $|x'_i| < |y_{i+1}|$.
If $x'_{i+1}=y_{i+1}$ this is trivial, and if not the inequality must hold because otherwise
$Y_{1,i+1}\circ X'_{i+1,k_2}$ is a chain of length $k_2+1$ inside $P'$. 
Having established that $|x'_i| < |y_{i+1}|$ if $i<k_1$, we see that if $Y_{1,i}$ is lexicographically larger than $X'_{1,i}$,
then we also arrive at a contradiction, because $X'_{1,i}\circ Y_{i+1,k_1}$
would be a chain of length $k_1$ lexicographically smaller than $Y$.
Thus, in either case it cannot be that $x'_i < y_i$, so the claim holds.
\end{proof}

Similarly to Theorem~\ref{greedy_exact_1}, Theorem~\ref{greedy_approx_1} holds by transitiveness
and Lemmas~\ref{greedy_approx_2}~and~\ref{greedy_approx_3}, as the positions
of the first elements in two consecutively computed segments shift by at least $\Delta$.
The bound on the time complexity follows by the same argument.

\section{Decremental Structure}
\label{sec:decremental}

Having described covers and how to compute them,
we can go back to our recursive structure of dyadic rectangles, and explain how to maintain their associated information.
Let us consider a rectangle $R=(x_1,y_1,x_2,y_2)$ of height $h$ and the set $P(R)$ of points
from the main array inside $R$.
We will maintain a $(\lam^2,2h)$-covering family of $P(R)$, denoted $\CF(R)$.
At the very beginning, each covering family is constructed by running the greedy algorithms
for every nonempty rectangle.
To maintain the covers while the elements are being deleted, 
level $k$ of $\CF(R)$ is obtained by taking the union of levels $k$ of
two smaller dyadic rectangles, and adding some extra segments.
The extra segments are recomputed from time to time.
More precisely, for each level of $\CF(R)$ we keep a counter. The counter is initially set to $0$ and we
increase it by $1$ whenever a point inside $R$ is deleted. As soon as the counter is sufficiently large,
we recompute the extra segments.
Clearly, we cannot afford to simply run the greedy algorithm on the whole set of points
still existing in $R$.
Instead, we use covering families from some other two smaller dyadic rectangles.
We stress that level $k$ of $\CF(R)$ is updated whenever a point is deleted
$P(R)$, but the extra segments are recomputed only when sufficiently many deletions have occurred.

In this section, we describe how to maintain a single level of $\CF(R)$,
say it is the level providing a $(k_1,k_2)$-cover.
Let $x_m=\lceil (x_1+x_2)/2 \rceil$ and $y_m= \lceil(y_1+y_2)/2 \rceil$.
In the recursive structure of rectangles, $R$ is divided into the left, right, bottom and top rectangles.
Denote them as $R_l=(x_1,y_1,x_m-1,y_2)$, $R_r=(x_m,y_1,x_2,y_2)$,
$R_b=(x_1,y_1,x_2,y_m-1)$, and $R_t=(x_1,y_m,x_2,y_2)$, respectively;
here we ignore the trivial base case of $x_1=x_2$ or $y_1=y_2$ to avoid clutter.
Observe that as long as the approximation factor is tied to the height of a rectangle,
covering families of $R_l$ or $R_r$ contain segments providing good approximation
for any interval $(i,j)$ with $j<x_m$ or $i \geq x_m$.
In fact, $\CF(R_l)$ and $\CF(R_r)$ both contain a level providing a $(k_1,k_2)$-cover.
The left and right rectangle cannot help in the case of intervals crossing the middle point $x_m$, though,
so we need to somehow add segments covering any interval spanning both the left and right rectangle
and containing a chain of length $k_2$.
To this end, we will use $R_b$ and $R_t$, concatenating pairs of segments from their covers.
Consult Figure~\ref{fig:fourrect}.

\paragraph{$\mathsf{Merge}$ procedure.}
We say that interval $(i,j)$ crosses the middle point $x_m$ if $i < x_m$ and $j\geq x_m$.
Let us focus on some interval $(i,j)$ crossing $x_m$, which contains chain $X$ of length $k_2$.
$X$ can be split into two parts (one of them possibly empty),
the first consisting of elements with $y$-coordinates smaller than $y_m$
and the second consisting of the remaining elements of $X$.
Say $X_p < y_m$ and $X_{p+1} \geq y_m$, then these parts are $X_{1,p}$ and $X_{p+1,k_2}$.
Observe that we have access to some approximation of $X_{1,p}$ in $R_b$ and of $X_{p+1,k_2}$ in $R_t$
in the form of segments from the covering families of the bottom and top rectangle.
Let these segments be $X'_b$ and $X'_t$, respectively, and assume they belong to levels $l$ and $l'$
of $\CF(R_b)$ and $\CF(R_t)$, respectively.
If there are many possible candidates for $X'_b$ or $X'_t$, for the analysis we pick arbitrary ones.
We will describe an operation $\mathsf{Merge}$ that selects and concatenates
the relevant pairs of segments from $\CF(R_b)$ and $\CF(R_t)$ for all intervals $(i,j)$ crossing $x_m$,
effectively finding segments that can be used as such $X'_b$ and $X'_t$.

$\mathsf{Merge}$ consists of two steps.
The first one is creating an initial set of segments.
There are four cases of how $X'_b$ and $X'_t$ can be located.
Consult Figure~\ref{fig:merge}.

\begin{figure}[h]
\begin{center}
  \includegraphics[scale=0.8]{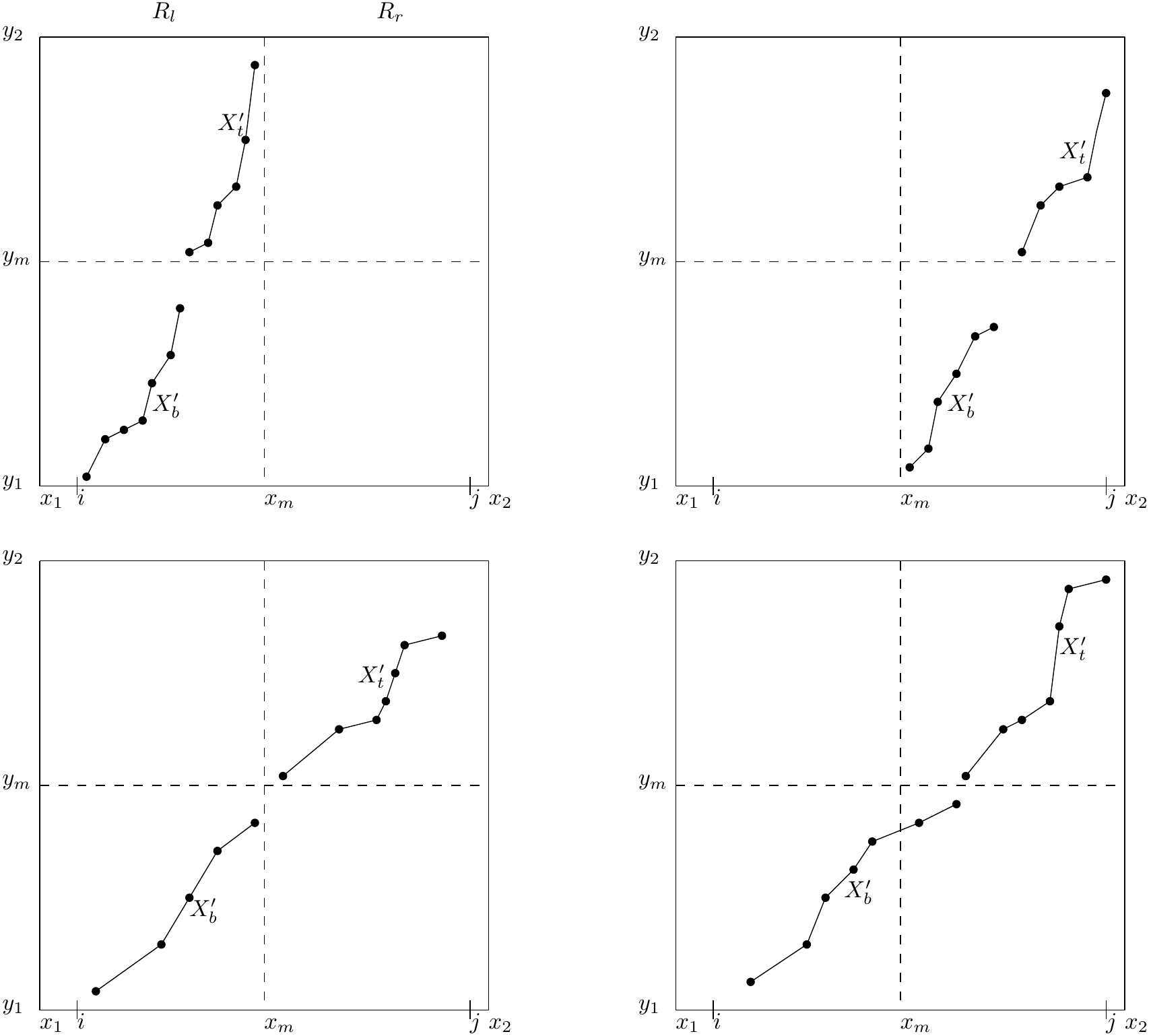}
\end{center}
  \caption{Possible cases of positions of $X'_b$ and $X'_t$.}
  \label{fig:merge}
\end{figure}

\begin{itemize}
\item Both $X'_b$ and $X'_t$ are inside the left rectangle $R_l$.
Recall that segments in each level of a covering family are stored in a BST.
Thus, we can quickly find the rightmost segment on level $l'$ of $\CF(R_t)$
which is inside $R_l$, say it is segment $S'=(a,b,L)$.
Observe that $S' \succeq X'_t$ must hold.
We append $S'$ to the rightmost segment $S$ on $l$-th level of $\CF(R_b)$
whose interval ends before $a$.
The resulting segment is inside $(i,j)$ (as $S \succeq X'_b$ must hold)
and provides a good approximation for a chain in $(i,j)$, as $S$ and $S'$ comes from levels $l$ and $l'$.

Observe that even though both $X'_b$ and $X'_t$ reside in the left rectangle,
we are not guaranteed to obtain the sought approximation of $X$ just from $\CF(R_l)$.
$X'_b\circ X'_t$ is already an approximation,
so taking another approximation of it would square the approximation factor, which we cannot afford there.

\item Both $X'_b$ and $X'_t$ are inside right rectangle $R_r$.
This is symmetric to the first case, we could find the leftmost segment $S$ on level $l$ of
$\CF(R_b)$ which is inside $R_r$, and append to it the leftmost segment 
on level $l'$ of $\CF(R_t)$ starting after $S$.

\item $X'_b$ is inside $R_l$ and $X'_t$ is inside $R_r$.
Then, we can find the rightmost segment $S$ on level $l$ of $\CF(R_b)$ which is inside $R_l$,
and the leftmost segment $S'$ on level $l'$ of $\CF(R_t)$ which is inside $R_r$.
It must be that $S \succeq X'_b$ and $S' \preceq X'_t$, thus $S\circ S'$ provides good approximation.

\item One of $X'_b$, $X'_t$ is not inside either left or right rectangle.
Say that $X'_b$ crosses $x_m$ and $X'_t$ is inside $R_r$, as the other case is analogous.
Here, we can simply iterate over all segments on level $l$ of $\CF(R_b)$ that cross $x_{m}$,
and append to each of them the leftmost segment on level $l'$ of $\CF(R_t)$ starting further.
This allows us to find $X'_{b}$, and if it was concatenated with $S'$ then $S' \preceq X'_t$.
\end{itemize}

There are obviously two issues.
The first issue is that we are guessing levels $l$ and $l'$.
However, we can afford to iterate through all possibilities, as there are not so many levels,
namely $\Oh(\log_{\lam} n) = \Oh(\eps^{-1}\log^2 n)$.
The second issue is that we are iterating through every segment crossing $x_m$
and we do not want this quantity to be big.
However, this is guaranteed by maintaining covers of small depth.

To sum up, performing the first step of $\mathsf{Merge}$ as described above gives us the desired approximation,
assuming $\CF(R_b)$ and $\CF(R_t)$ provide a good approximation.
The pseudocode of both steps of $\mathsf{Merge}$ is shown as Algorithm~\ref{alg:merge},
ignoring some corner cases.
Namely, notice that 'the rightmost segment that is inside $R_l$' could not exist (and similarly in the symmetric cases),
that is, there might be no segments inside $R_l$ in the cover on that level.
We add artificial guards to each cover, namely segments $(-\infty,-\infty,\emptyset)$ and $(\infty,\infty,\emptyset)$.
The algorithm is aware that those segments are empty and does not concatenate them,
and we will ignore this detail from now on.
Additionally, as one of $X'_b, X'_t$ can be empty, we pretend that each covering family contains an artificial
level with covers of zero score.
Keeping this in mind, the set of segments $U$ computed in line~\ref{candidates} is enough to cover all intervals crossing $x_m$.
However, it is slightly too big.

The goal of the second step of $\mathsf{Merge}$ (in lines~\ref{spars1}-\ref{spars2}) is to decrease
the number of segments to $\Oh(\log n)$ at the expense of worsening the approximation by a factor of $\lam$.
Suppose $U$ provided segments with scores of at least $k_1\lam^2$.
We create a set $P$ of points appearing in a chain of any segment from $U$ (we stress that only the points
belonging to these chains are considered, not all points in the corresponding intervals),
and then compute a $(\lceil k_1\lam \rceil,k_1\lam^2)$-cover $U'$ of $P$ with the greedy algorithm
(for $k_1 \leq 3/\eps'$, we calculate an exact cover).
As $U$ provided a segment with a score of at least $k_1\lam^2$ inside any interval crossing $x_m$ 
and containing a chain of length at least $k_2$, now $U'$ provides a segment with a score of at least $k_1\lam$.

\begin{algorithm}
\begin{algorithmic}[1]
  \Function{Merge}{$F_b,F_t,m,k,apx$}
  \State \textbf{Input:} bottom and top covering families, middle element $m$, integer $k$, Boolean $apx$.
  \State \textbf{Output:} set of segments.
  \Statex
  \State $U \gets \emptyset$
  \For{$C$ being $(r_1,r_2)$-cover in $F_b$} \label{loop1}
  \Comment{Iterate over levels}
  \If{$apx = \texttt{false}$}
    \State Let $C'$ be $(r'_1,r'_2)$-cover in $F_t$ for $r'_1$ such that $r_1+r'_1 = k$
    \Else
    \State Let $C'$ be $(r'_1,r'_2)$-cover in $F_t$ with the smallest $r'_1$ such that $r_1+r'_1 \geq k$
    \EndIf
    
    \Statex
	\State Let $s\in C$ be segment $(s_i,s_j,s_L)$ with the largest $s_j$ such that $s_j < m$
	\State Let $s'\in C$ be segment $(s'_i,s'_j,s'_L)$ with the lowest $s'_i$ such that $s'_i \geq m$
	\State $\triangleright$ Those are two non-crossing segments on the left and right of $m$   
    \For{segment $u=(u_i,u_j,u_L)$ in $C$ between $s$ and $s'$} \label{loop2}
    \Comment{All segments crossing m}
      \State Find segment $u'=(u'_i,u'_j,u'_L)$ in $C'$ with the lowest $u'_i$ such that $u'_i>u_j$
      \State $V \gets (u_L\circ u'_L)_{1,k}$ \Comment{Merging two chains and trimming}
      \State $U \gets U \cup \{(V_1,V_k,V)\}$
    \EndFor
    \Statex
    \State Let $s\in C'$ be segment $(s_i,s_j,s_L)$ with the largest $s_j$ such that $s_j < m$
	\State Let $s'\in C'$ be segment $(s'_i,s'_j,s'_L)$ with the lowest $s'_i$ such that $s'_i \geq m$    
    \For{segment $u'=(u'_i,u'_j,u'_L)$ in $C'$ between $s$ and $s'$} \label{loop3}
      \State Find segment $u=(u_i,u_j,u_L)$ in $C$ with the largest $u_j$ such that $u_j<u'_i$
      \State $V \gets (u_L\circ u'_L)_{1,k}$
      \State $U \gets U \cup \{(V_1,V_k,V)\}$
    \EndFor
  \EndFor
  \State \label{candidates}
  \State $P \gets \{p : p \in L$ for any $(i,j,L)\in U\}$ \label{spars1}
  \State Store $P$ as an array
  \If{$apx = \texttt{false}$}
    \State $U' \gets \Call{Cover-exact}{P,k}$
    \Else
    \State $U' \gets \Call{Cover-approx}{P,\lceil k/\lam \rceil,k}$ \label{spars2}
    \EndIf
  \State In $U'$ keep only segments crossing $m$ and two non-crossing closest to $m$ from both sides
  \State \Return $U'$
  \EndFunction
\end{algorithmic}
\caption{Implementation of $\mathsf{Merge}$.}
\label{alg:merge}
\end{algorithm}

Let us state some properties of $\mathsf{Merge}$ more precisely.
Depending on the last Boolean argument, we have approximated and exact variants.

\begin{lemma} \label{lem:merge1}
Assume covering families of the bottom and top rectangle, denoted $\CF(R_b)$ and $\CF(R_t)$, provide $p$-approximation.
Let $S$ be the set of segments returned after invoking an approximation variant of $\mathsf{Merge}$
with parameters $\CF(R_b),\CF(R_t),x_m,k,\normalfont{\texttt{true}}$.
Then, for any interval $(i,j)$ that crosses $x_m$ and contains a chain $X$ of length $pk$,
$S$ contains a segment with a score of at least $\lceil k/\lam \rceil$ inside $P(R)_{i,j}$.
\end{lemma}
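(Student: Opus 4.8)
The plan is to follow a witness chain through both steps of $\mathsf{Merge}$. Fix an interval $(i,j)$ crossing $x_m$ (which we may assume lies in the $x$-range of $R$) together with a chain $X=(X_1,\dots,X_{pk})$ of length $pk$ in $P(R)_{i,j}$. First I would cut $X$ along the horizontal line $y=y_m$: letting $a$ be the number of entries of $X$ with $y$-coordinate below $y_m$, the prefix $X_{1,a}$ is a chain in $P(R_b)_{i,j}$ and the suffix $X_{a+1,pk}$ is a chain in $P(R_t)_{i,j}$, since every point of $R$ has $x$-coordinate in the $x$-range of $R$ and the two halves split exactly along $y_m$. Using that $\CF(R_b)$ provides $p$-approximation — equivalently (via Lemma~\ref{lem:str}), that $\CF(R_b)$ is a $(\lam^2,\cdot)$-covering family in which the two parameters of any level are in ratio $p/\lam^2$ and consecutive lower parameters are in ratio $\lam^2$ — I pick the level $l$ of $\CF(R_b)$ with largest upper parameter $\le a$. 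That level is triggered by $(i,j)$ (because $P(R_b)_{i,j}$ contains $X_{1,a}$, of length $a$), so it holds a segment $X'_b$ inside $(i,j)$ whose score is at least the lower parameter $s_b$ of $l$; comparing with the next level of the family, whose upper parameter exceeds $a$ but is at most $p\cdot s_b$, yields $s_b\ge a/p$.

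Next I turn to the outer loop of Algorithm~\ref{alg:merge}. When it reaches level $l$ of $F_b=\CF(R_b)$, it pairs it with the level $l'$ of $F_t=\CF(R_t)$ whose lower parameter $s_t$ is the smallest one that is $\ge k-s_b$. The computation I would carry out here is that the upper parameter of $l'$ is at most $p(k-s_b)\le pk-p s_b\le pk-a$, again via those parameter ratios and $s_b\ge a/p$. Since $P(R_t)_{i,j}$ contains $X_{a+1,pk}$ of length $pk-a$, level $l'$ is triggered and holds a segment $X'_t$ inside $(i,j)$ of score $\ge s_t\ge k-s_b$. Now I invoke the four-case analysis preceding Algorithm~\ref{alg:merge}, according to whether $X'_b$ and $X'_t$ lie in the left half $R_l$, the right half $R_r$, or cross $x_m$: in each case the inner loops of $\mathsf{Merge}$, executed for the pair of levels $(l,l')$, locate a segment $S$ on level $l$ and a segment $S'$ on level $l'$ for which $S\circ S'$ is well defined, $S$ begins no earlier than $X'_b$ (hence at or after $i$), $S'$ ends no later than $j$ (either because $S'\preceq X'_t$, or because $S'$ lies in $R_l$ and so ends before $x_m\le j$), and the concatenated chain has length at least $s_b+s_t\ge k$. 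Trimmed to length $k$, this concatenation is added to $U$, so $U$ contains a segment of score exactly $k$ whose interval is inside $(i,j)$.

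For the sparsification step, let $P$ be the set of points lying on chains of segments of $U$. The length-$k$ chain just produced lies in $P$ with all $x$-coordinates in $[i,j]$, so $P_{i,j}$ contains a chain of length $k$; since $U'$ is a $(\lceil k/\lam\rceil,k)$-cover of $P$ (or, when $\lceil k/\lam\rceil=k$, the exact $k$-cover computed instead), Theorem~\ref{greedy_approx_1} (respectively Theorem~\ref{greedy_exact_1}) gives a segment $Q\in U'$ inside $(i,j)$ with score $\ge\lceil k/\lam\rceil$. It remains to check that the final filtering — keep every segment crossing $x_m$ and, from each side, the non-crossing one closest to $x_m$ — does not discard the only good segment. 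If $Q$ crosses $x_m$ we are done. Otherwise, say $Q$ ends before $x_m$, and let $T$ be the non-crossing segment of $U'$ with largest right endpoint among those ending before $x_m$ (it exists, since $Q$ qualifies). If $T\ne Q$ then $T$ ends strictly right of $Q$; as no segment of a cover is nested in another, $T$ must also begin strictly right of $Q$, hence at or after $i$, and since $T$ ends before $x_m\le j$ we get $T$ inside $(i,j)$. As $T$ has score $\ge\lceil k/\lam\rceil$ and survives the filter, the claim follows; the case $Q$ begins at or after $x_m$ is symmetric.

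The step I expect to be the main obstacle is the level matching in the second paragraph: verifying that the specific level $l'$ that $\mathsf{Merge}$ pairs with $l$ — the smallest-parameter one with lower parameter $\ge k-s_b$, not some hand-picked level — is still triggered by the top interval. This is exactly where the definition of a $(\gamma,r)$-covering family must be used with care, pushing the inequality $s_b\ge a/p$ through the ceilings $\lceil k_0\gamma^j\rceil$ in the cover parameters. The same place is where the degenerate situations need their own line of argument: $a=0$ or $a=pk$, in which one of the two parts of $X$ is empty and one appeals to the artificial zero-score level (so that $S\circ S'$ is just one segment of score $\ge k$), and the base case $h=1$, in which $\CF(R_b)$ and $\CF(R_t)$ are exact and $p=1$.
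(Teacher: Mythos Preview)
Your proposal is correct and follows the same overall approach as the paper: split $X$ at $y_m$, use the approximation guarantees of $\CF(R_b)$ and $\CF(R_t)$ to obtain $X'_b$ and $X'_t$ with combined score at least $k$, invoke the four-case geometric analysis to show such a concatenation lands in $U$, then push through sparsification and the final filter. The paper's own proof is terser---it simply asserts that ``$\CF(R_b)$ and $\CF(R_t)$ provide approximations of these parts \ldots\ with the sum of lengths at least $k$'' and that ``the first part of $\mathsf{Merge}$ finds some $X'_b\circ X'_t$,'' deferring entirely to the preceding discussion.

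Where you go beyond the paper is exactly the point you flag as the main obstacle: verifying that the \emph{specific} level $l'$ that $\mathsf{Merge}$ pairs with $l$ (smallest lower parameter $\ge k-s_b$) is actually triggered by the top interval. The paper does not spell this out. Your argument---bounding the upper parameter of $l'$ by $p$ times the lower parameter of the preceding level, hence by $p(k-s_b)\le pk-a$---is the right one, and it is good that you isolate the places where the ceilings and the initial exact levels need separate handling. Your treatment of the final filtering step (showing that if $Q$ is discarded then the surviving extreme segment $T$ on the same side is still inside $(i,j)$, via the no-nesting property) is also more explicit than the paper's ``it is easy to see.'' So: same route, but you have filled in details that the paper leaves to the reader.
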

\begin{proof}
This mostly follows from the discussion above.
$X$ can be partitioned into two parts $X_b$ and $X_t$ lying entirely inside $R_b$ and $R_t$, respectively.
$\CF(R_b)$ and $\CF(R_t)$ provide approximations of these parts, denoted $X'_b$ and $X'_t$, with the sum of lengths at least $k$;
in fact, there might be many possible segments $X'_b$ and $X'_t$ covering $X_b$ and $X_t$.
As described before in the four cases, the first part of $\mathsf{Merge}$
finds some $X'_b\circ X'_t$ and adds it to set $U$.
Then in the second part we run the greedy algorithm to create a cover $U'$ for the set $P$ of points appearing
in a chain of any segment of $U$.
For any segment $(i',j',L) \in U$, $P_{i',j'}$ clearly contains a chain of length $k$.
Additionally, for any interval $(i,j)$ in $P$, if $P_{i,j}$ contained a chain of length $k$,
then $U'$ contains a segment with a score of $\lceil k/\lam \rceil$ inside $(i,j)$.
It is easy to see that this remains true for all intervals crossing $x_m$ if we keep in $U'$
only the segments crossing $x_{m}$,
the rightmost segment lying entirely in $R_l$, and the leftmost segment lying in $R_r$.
\end{proof}

A similar lemma holds for levels providing the exact covers.
The proof is analogous but simpler, since there is no approximation at all, and we omit it here.

\begin{lemma} \label{lem:merge2}
Assume in the covering families of the bottom and top rectangle, denoted $\CF(R_b)$ and $\CF(R_t)$, levels up to $3/\eps'$ are the exact covers.
Let $S$ be the set of segments returned after invoking an exact variant of $\mathsf{Merge}$
with parameters $\CF(R_b),\CF(R_t),x_m,k,\normalfont{\texttt{false}}$, with $k \leq 3/\eps'$.
Then, for any interval $(i,j)$ that crosses $x_m$ and contains a chain of length $k$,
$S$ contains a segment with a score of $k$ inside $P(R)_{i,j}$.
\end{lemma}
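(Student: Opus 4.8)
The plan is to mirror the proof of Lemma~\ref{lem:merge1} while exploiting the fact that everything on the relevant levels is exact, so no approximation factor is ever introduced. First I would take an arbitrary interval $(i,j)$ crossing $x_m$ (i.e.\ $i<x_m\leq j$) that contains a chain $X=(X_1,\ldots,X_k)$ of length $k$, with $k\le 3/\eps'$. I would split $X$ at the horizontal midline $y_m$: let $p$ be the index with $X_p<y_m$ and $X_{p+1}\ge y_m$ (allowing $p=0$ or $p=k$). Then $X_{1,p}$ is a chain of length $p$ lying entirely inside $R_b$ and inside the interval $(i,j)$, and $X_{p+1,k}$ is a chain of length $k-p$ lying entirely inside $R_t$ and inside $(i,j)$. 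Since $p\le k\le 3/\eps'$, level $p$ of $\CF(R_b)$ is an exact $p$-cover, so it contains a segment $X'_b$ inside the interval spanned by $X_{1,p}$; similarly level $k-p$ of $\CF(R_t)$ contains a segment $X'_t$ inside the interval spanned by $X_{p+1,k}$. (One uses the artificial zero-score level when $p=0$ or $p=k$.)

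Next I would verify that the first phase of $\mathsf{Merge}$ with $apx=\texttt{false}$ actually constructs a concatenation $S\circ S'$ with $S$ inside $(i,j)$, $S'$ inside $(i,j)$, $S\prec S'$, and total score exactly $k$. Here I would go through the same four-case analysis that precedes the lemma, according to whether $X'_b$ and $X'_t$ lie in $R_l$, in $R_r$, or cross $x_m$: in each case the loop in Algorithm~\ref{alg:merge} (lines~\ref{loop2} and~\ref{loop3}) picks the extreme segment on the appropriate side, and one checks $S\succeq X'_b$ or $S\preceq X'_t$ as appropriate, so the resulting interval is sandwiched between the endpoints of $X$ and hence inside $(i,j)$. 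Because the exact branch of $\mathsf{Merge}$ chooses $C'$ to be the cover in $F_t$ with $r_1+r'_1=k$ exactly, the trimmed chain $V=(u_L\circ u'_L)_{1,k}$ has score exactly $k$, and every point of $V$ has $x$-coordinate in $(i,j)$, so $(V_1,V_k,V)$ lies inside $P(R)_{i,j}$ and gets added to $U$. Thus $U$ contains, for every crossing interval $(i,j)$ admitting a chain of length $k$, a segment of score $k$ inside it.

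Finally I would argue that the second phase does not destroy this. For exact $k\le 3/\eps'$ the procedure calls $\Call{Cover-exact}{P,k}$ on the point set $P$ of all chain points appearing in segments of $U$; by Theorem~\ref{greedy_exact_1} this returns a genuine $k$-cover of $P$. For any crossing interval $(i,j)$ with a length-$k$ chain, $U$ already witnessed that $P_{i,j}$ contains a chain of length $k$, so the $k$-cover $U'$ contains a segment of score $k$ inside $(i,j)$; restricting $U'$ to the segments crossing $x_m$ plus the closest non-crossing one on each side preserves coverage of all crossing intervals, exactly as in Lemma~\ref{lem:merge1}. This yields the claim. The only mildly delicate point — and the part I would write out most carefully — is the four-case bookkeeping showing the concatenated segment stays inside $(i,j)$; everything else is bounded by the exactness of the covers involved.
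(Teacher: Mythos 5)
Your proposal is correct and is exactly the ``analogous but simpler'' argument the paper has in mind: it omits the proof of this lemma precisely because it mirrors that of Lemma~\ref{lem:merge1}, with the exact $p$- and $(k-p)$-covers replacing the approximate ones and the branch $r_1+r'_1=k$ of $\mathsf{Merge}$ together with $\textsc{Cover-exact}$ ensuring no loss of score. The four-case bookkeeping and the final restriction of $U'$ are handled just as in the paper's proof of Lemma~\ref{lem:merge1}, so nothing further is needed.
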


Due to the application of the greedy algorithm in the second part, the running time of $\mathsf{Merge}$
might be unacceptably high, especially for large values of $k_1$.
Therefore, we will run it only after a number of points has been deleted from $R$, making its running
time amortised at the expense of increasing the approximation by a factor of $\lam$.
Recall that we maintain a counter associated with every level of the covering family, and
increase it whenever a point is deleted from $R$.
For a level providing a $(k_1,k_2)$-cover, the counter goes from $0$ to $\max(1,\lfloor \eps' k_1 \rfloor -1)$
(recall $\lam =1+\eps'$).
Then, we run $\mathsf{Merge}$ to recompute the cover. Because we recompute after having
deleted $\eps' k_{1}$ points from $R$, the score of a segment might decrease from
$k_1\lam$ to $k_1\lam - \eps' k_1 \geq k_1$, which is acceptable,
and the cost of running $\mathsf{Merge}$ is distributed among $\eps' k_1$ deletions.

The case of the first $\Oh(1/\eps')$ levels of a covering family is special, as for them
we cannot afford to increase the approximation, even by adding one.
Therefore, these covers need to be exact, and so we run $\mathsf{Merge}$ after every
deletion of an element from $R$.
The cost of doing so is not too high, as a single $\mathsf{Merge}$ for these levels considers only
a polylogarithmic number of segments and points.

\paragraph{Bound on depth.}
Let us now prove the effects of sparsification as the second step of $\mathsf{Merge}$ on the depth of elements.
Note that only the new segments computed by $\mathsf{Merge}$ can cross $x_m$.

\begin{lemma} \label{lem:Tfreq}
The depth of a set of segments returned by $\mathsf{Merge}$ is $\Oh(\eps^{-1}\log n)$.
\end{lemma}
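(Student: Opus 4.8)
The plan is to bound separately the contribution of the \emph{old} segments (those inherited from $\CF(R_l)$ and $\CF(R_r)$, which do not cross $x_m$) and the contribution of the \emph{new} segments produced by the final sparsification step of $\mathsf{Merge}$, and then observe that at any fixed $x$-coordinate the number of intervals containing it is at most the sum of these two bounds. First I would recall that, by the invariant we maintain, every covering family has depth $\Oh(\eps^{-1}\log n)$, so in particular the union of level $k$ of $\CF(R_l)$ and level $k$ of $\CF(R_r)$ already has depth $\Oh(\eps^{-1}\log n)$: any $x$ lies in the left or the right rectangle, not both, so the two contributions do not add up beyond a constant factor. This handles the segments that survive from the two vertical halves.

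Next I would turn to the new segments, i.e.\ the output $U'$ of the second step of $\mathsf{Merge}$. Here the key point is that $U'$ is computed by running one of the greedy cover algorithms (Algorithm~\ref{alg:greedy_exact} in the exact case, Algorithm~\ref{alg:greedy_approx} in the approximate case) on the point set $P$ extracted from the chains in $U$. By Theorem~\ref{greedy_exact_1}, an exact $k$-cover has depth at most $k$, and for the low levels we have $k \le 3/\eps' = \Oh(\eps^{-1}\log n)$, so their depth is within the claimed bound. For the approximate levels we invoke $\mathsf{Cover\text{-}approx}$ with parameters $(\lceil k/\lam\rceil,\,k)$, so that $\Delta = k - \lceil k/\lam \rceil \ge k(1-1/\lam) \ge k\eps'/(1+\eps') = \Omega(\eps' k)$; by Theorem~\ref{greedy_approx_1} the resulting cover has depth at most $k_1/\Delta = \Oh(1/\eps') = \Oh(\eps^{-1}\log n)$ as well. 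Since only the segments of $U'$ that cross $x_m$ (plus the two closest non-crossing ones on either side) are kept, and none of the inherited old segments cross $x_m$, at a coordinate $x = x_m$ we see only $\Oh(\eps^{-1}\log n)$ intervals, and at any other $x$ we see at most the old contribution plus the new one, again $\Oh(\eps^{-1}\log n)$.

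The main obstacle I anticipate is making precise that the greedy depth bounds from Theorems~\ref{greedy_exact_1} and~\ref{greedy_approx_1} genuinely apply here: those theorems are stated for a cover of an array, whereas in $\mathsf{Merge}$ the set $P$ is the (multiset of) points lying on chains stored in $U$, sorted by $x$-coordinate — one must check that feeding this as the input array is legitimate (it is, since the chains consist of genuine points of $P(R)$ with distinct coordinates, and a chain in any interval of $P$ is in particular a chain in the corresponding interval of $P(R)$). A secondary subtlety is the arithmetic showing $\Delta = \Omega(\eps' k)$ holds uniformly over all approximate levels, i.e.\ that $k$ is always large enough (at least the threshold $3/\eps'$) for $\lceil k/\lam\rceil$ to be genuinely smaller than $k$; this is exactly why the low levels are treated by the exact variant, so the two cases partition all levels and the bound $\Oh(\eps^{-1}\log n)$ covers both uniformly. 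Putting the two contributions together then gives the lemma.
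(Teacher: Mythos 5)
Your core argument is exactly the paper's proof: the set returned by $\mathsf{Merge}$ is the output of the greedy sparsification, so its depth is bounded by Theorem~\ref{greedy_exact_1} (giving $k\le 3/\eps'$) on the low levels and by Theorem~\ref{greedy_approx_1} (giving $k_1/\Delta=\Oh(1/\eps')$ since $k>3/\eps'$) on the higher ones. Your first paragraph about the inherited segments from $\CF(R_l)$ and $\CF(R_r)$ is not needed for this lemma, which concerns only the segments returned by $\mathsf{Merge}$ — and the invariant you appeal to there is not quite what the paper maintains (the depth of a full cover accumulates over the recursion and is only bounded by $\Oh(\eps^{-1}\log^2 n)$, cf.\ Lemma~\ref{lem:freq2}) — but this does not affect the correctness of your proof of the stated claim.
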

\begin{proof}
The second step of $\mathsf{Merge}$ is sparsification by the greedy algorithm.
Assume that we consider level $l$ of a covering family.
If $l \leq 3/\eps'$, then it was the greedy algorithm for the exact cover.
As $1/\eps' = \Oh(\eps^{-1}\log n)$ and the maximum depth of the greedy algorithm for $k$-cover is $k$
by Theorem~\ref{greedy_exact_1}, the claim holds.
If $l$ is larger, then the greedy algorithm for $(\lceil k/\lam \rceil, k)$-cover was used.
In that case, by Theorem~\ref{greedy_approx_1} the maximum depth is
\begin{align*}
\frac{\lceil k/\lam \rceil}{\Delta} &= \frac{\lceil k/\lam \rceil} {k-\lceil k/\lam \rceil}
< \frac{k/\lam + 1}{k-k/\lam-1} = \frac{k+\lam}{k\lam-k-\lam} < \frac{k+2}{k\lam-k-2}= \\&= \frac{k+2}{\eps' k -2} = \frac{1+2/k}{\eps'-2/k} < \frac{2}{\eps'-2\eps'/3} = \frac{6}{\eps'}= \Oh(\eps^{-1}\log n),
\end{align*}
as $1<\lam<2$ and $k>3/\eps'$.
\end{proof}

Notice that single sparsification does not have a huge impact.
In Algorithm~\ref{alg:merge}, the main loop in line~\ref{loop1} iterates over levels of a covering family,
and there are $\Oh(\log_{\lam} n)=\Oh(\eps^{-1}\log^2 n)$ of them.
Secondary loops in lines~\ref{loop2}~and~\ref{loop3} iterate over segments crossing the middle $x$-coordinate
(and two additional extreme segments), and according to Lemma~\ref{lem:Tfreq} there are $\Oh(\eps^{-1}\log n)$ such segments.
Together this makes at most $\Oh(\eps^{-2}\log^3 n)$ segments produced in the first step of $\mathsf{Merge}$,
before sparsification reduces their number again to $\Oh(\eps^{-1}\log n)$ segments crossing $x_m$.
In one step it may not seem significant, but it should be evident that this reduction in the number of segments
is crucial, as otherwise we basically increase the depth multiplicatively by the number of levels
for each increase in height of a rectangle.

\paragraph{Maintaining a cover as BST.}
After all steps of $\mathsf{Merge}$, we have a set of segments constructed with help of $\CF(R_b)$ and $\CF(R_t)$,
but we still need to integrate it with covers of $R_l$ and $R_r$ into a single cover.
To this end, we need to ensure that no segment is inside some other segment,
and then put all of them into one BST.
The latter is easy, as there are few new segments.
We can just join trees of the left and right rectangle, then insert each new segment into the resulting tree.
All operations here are done on persistent BSTs.
But first, we need the non-inclusion property.
Obviously, segments in trees from $R_l$ and $R_r$ are disjoint,
so only the new ones computed by $\mathsf{Merge}$ can pose a problem.

Observe that from the new segments $(i,j,L)$ with $j<x_m$, so being entirely in the left rectangle,
we can keep only the rightmost one (or they might be no such segments at all).
Similarly, from new segments lying entirely inside the right rectangle,
we can keep only the leftmost one.
This is done in the last lines of Algorithm~\ref{alg:merge}.
As long as one of those two segments is inside some segment from the left or right tree,
we delete that unnecessary segment with a larger interval from its BST.
If more than one segment needs to be deleted from one tree, they are continuous in that BST,
so we can detect them all and use $\mathsf{DeleteInterval}$.
Now, we have these two segments and segments crossing the middle point, so
$\Oh(\eps^{-1}\log n)$ new segments, and none of them can lie inside another.
We only need to check whether any of the segments from the left or right tree lies inside
any of those new segments and if so, we delete any such unnecessary new segment with a larger interval.
At the end, we join the left and right tree and insert all the remaining new segments into
the resulting tree, one by one.
The final tree is stored as a cover, with its counter reset to zero.

In fact, joining trees from the left and right rectangle and then inserting segments returned
by $\mathsf{Merge}$ is done not only on recomputing a cover on some level,
but after \emph{every} deletion of a point from $P(R)$.
$\CF(R)$ relies on covers from smaller rectangles, so whenever they change, that should be taken into account.
This is not too costly, though, as a single point is inside $\Oh(\log^2{n})$ rectangles,
and the complexity of joining is polylogarithmic regardless of the size of the rectangle.
Thus, each level of $\CF(R)$ stores $\Oh(\eps^{-1}\log n)$ segments returned by the last $\mathsf{Merge}$
for that level.
An exception is when no $\mathsf{Merge}$ occurred yet, that is when a cover on that level
still comes from the preprocessing.
Then no segments are stored and no trees are joined.

\section{Analysis of the Decremental Structure}
\label{sec:analysis}

In this section, we summarize how operations on our recursive structure of rectangles work,
and analyse the approximation factor and the running time.
First, let us recall what data is stored during the execution of the algorithm.

\paragraph{Information stored by the algorithm.}
On the global level, we store a pointer to the biggest rectangle.
Each nonempty rectangle $R$ stores:
\begin{itemize}
\item At most four pointers to $R_l,R_r,R_b,R_t$, whenever they are nonempty.
\item Set $P(R)$ of points inside $R$, sorted by $x$-coordinates and stored in a BST. 
\item Covering family $\CF(R)$ of $P(R)$.
      $\CF(R)$ is an array of covers.
      Each cover is a BST of segments, ordered by coordinates (both simultaneously).
      Additionally, there is a counter tied to every cover.
\item For each level of $\CF(R)$, a sorted list of segments returned by the last $\mathsf{Merge}$ on that level.
In case of levels coming from preprocessing and not recomputed yet, nothing is stored here.
\end{itemize}

When an element is deleted, it is immediately deleted only in BSTs storing points inside rectangles.
We do not modify the list stored for every segment in a cover, even if some of its elements are deleted.
Instead, we recompute the cover and create new segments once in a while.
The algorithm also stores a global Boolean array providing in constant time information
which element of the input array has been already deleted.

\paragraph{Querying for LIS.}
Answering a query for approximated LIS inside any subarray of the main array is relatively uncomplicated.
We need to use only the covering family of the main rectangle, which contains all of the elements.
After a binary search over levels, some segment $S$ is retrieved from the level providing $(k_1,k_2)$-cover.
As an answer to a query, we report the lower bound $k_1$,
not the actual length of the list storing elements of a chain in $S$ (it could be bigger, but also possibly contains some deleted elements).
If needed, in additional time $\Oh(k_1)$ the list can be traversed and elements forming the chain reported, excluding the elements that were deleted after forming that list.
Therefore, we have the following.

\begin{lemma} \label{lem:Tfind}
The worst-case complexity of returning the approximated length of LIS in any subarray is 
$\Oh(\log n \cdot \log{(\eps^{-1}\log^2 n}))=\Oh(\log^2 n)$.
If the returned length is $k$, then the algorithm can return the corresponding increasing subsequence in $\Oh(k)$ time.
\end{lemma}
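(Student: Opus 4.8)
The plan is to unpack what a query actually does and charge each piece against the data structures already in place. First I would observe that answering a query for the approximate LIS of a subarray $(a_i,\ldots,a_j)$ of the main array reduces to finding the longest chain in the interval $(i,j)$ of the point set $S$, and that by construction $S$ coincides with $P(R)$ for the root rectangle $R$ (which contains all points), so we only ever touch $\CF(R)$ for this single rectangle. Thus the query is exactly the binary search over levels described in Lemma~\ref{lem:str}, applied to the one covering family $\CF(R)$: we binary search over the $\Oh(\log_{\lam} n)$ levels, and on each visited level we perform a $\mathsf{Find}$ in the BST of segments to locate a segment inside $(i,j)$ (e.g.\ the leftmost segment whose beginning is at least $i$, then check its end is at most $j$). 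Each BST operation costs $\Oh(\log n)$, and the binary search visits $\Oh(\log(\log_{\lam} n))$ levels, giving $\Oh(\log n \cdot \log(\log_{\lam} n))$ time; plugging in $\log_{\lam} n = \Oh(\eps^{-1}\log^2 n)$ from the preliminaries, this is $\Oh(\log n \cdot \log(\eps^{-1}\log^2 n)) = \Oh(\log^2 n)$.

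Next I would address the reporting claim. When the query returns a length $k=k_1$ (the lower bound associated with the retrieved $(k_1,k_2)$-cover level), the segment $S=(b,e,L)$ retrieved stores an explicit list $L$ that was a chain of length at least $k_1$ at the time the cover was last recomputed. To output an actual increasing subsequence I would traverse $L$, consult the global Boolean deletion array in $\Oh(1)$ per element to skip elements deleted since $L$ was formed, and emit the rest; the surviving elements still form a chain, and there are at least $k_1$ of them because a deletion that removed an element of $L$ would have bumped the level's counter and, once enough deletions accumulated, triggered a $\mathsf{Merge}$ that replaces the cover — so between recomputations $L$ loses at most the slack built into the parameters, which is exactly what keeps the reported length a valid lower bound. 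Since $|L|$ is $\Oh(k_1)$ (the scores are controlled: a $(k_1,k_2)$-cover in a covering family has $k_2 = \Oh(k_1)$ by the choice $r=2h$ and $1<\lam<2$, and $L$ was built with length at most $k_2$), the traversal costs $\Oh(k)$.

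The main subtlety — and the step I would be most careful about — is not a calculation but making precise that the reported $k_1$ is genuinely a lower bound on the current longest chain in $(i,j)$, i.e.\ that the binary search over a non-monotone predicate still returns a correct level despite deletions having occurred since the last $\mathsf{Merge}$. This is where one must invoke the amortisation discipline from Section~\ref{sec:decremental}: for a level providing a $(k_1,k_2)$-cover, the counter runs only up to $\max(1,\lfloor \eps' k_1\rfloor - 1)$ before $\mathsf{Merge}$ is rerun, so a segment's stored chain can shrink from length $k_1\lam$ down to at least $k_1\lam - \eps' k_1 \ge k_1$, preserving the invariant "every stored segment on this level currently contains a chain of length at least its advertised score $k_1$". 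Combined with the covering property (if $(i,j)$ contains a chain of length $k_2$ then some stored segment lies inside $(i,j)$) and Lemma~\ref{lem:str}'s argument that extra spurious hits on higher levels only help the binary search, this yields correctness. I would state this as the crux, then let the time bounds fall out routinely as above; the lemma then follows.
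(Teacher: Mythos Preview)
Your proposal is correct and follows essentially the same approach as the paper: the paper's argument is just the short paragraph preceding the lemma, which invokes the binary search over the $\Oh(\log_{\lam} n)=\Oh(\eps^{-1}\log^2 n)$ levels of $\CF(R)$ for the root rectangle (via Lemma~\ref{lem:str}), reports the lower bound $k_1$, and traverses the stored list in $\Oh(k_1)$ time while skipping deleted elements. Your write-up is more detailed---in particular you spell out why $|L|=\Oh(k_1)$ (via $k_2=\Oh(k_1)$) and why the reported $k_1$ remains a valid lower bound between recomputations---but these are exactly the facts the paper relies on elsewhere (e.g.\ in Lemma~\ref{lem:Tmerge} and the amortisation discussion), so there is no methodological difference.
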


Observe that we store points with $x$-coordinates equal to the initial indices in the input array.
If we wish to answer queries for approximated LIS in subarray $(i,j)$ of the main array,
so when $i,j$ are the current indices of elements, we need some translation.
It is done simply by using $\mathsf{FindRank}(i)$ in a BST of the biggest rectangle,
which stores exactly the set of non-deleted points.
This way, we translate the current index into the initial index.
The same method applies to deleting elements.

\paragraph{Handling a single deletion.}
When an element is deleted, the algorithm needs to go through each rectangle containing the point
with coordinates defined by the initial index and value of that element.
There are $\Oh(\log^2 n)$ such rectangles, and they are visited in order from the smallest to the biggest.
By that, we mean in order of increasing span on $y$-axis, then increasing span on $x$-axis.
For each rectangle, the counters of every level of their covering families are incremented.
Some counters might reach their maximum value, which triggers computing $\mathsf{Merge}$ for the cover on that level.
We also need to delete an element from BSTs storing points inside those rectangles.
Finally, this element is marked as deleted in the global Boolean array.

\paragraph{Approximation factor.} As mentioned earlier, the approximation guarantee is tied to the height of the rectangle.
Intuitively, sparsification and amortisation are responsible for the approximation factor being $\lam^{d\cdot 2h}$
for some constant $d$.
We use covering families of rectangles $R_b$ and $R_t$ with heights smaller by one,
then sparsification shortens segments by a factor of $\lam$.
Next, we let $\eps' k_1$ deletions to happen before recomputing,
which shortens segments at most by the same factor.
This is formalised in the following lemma.
\begin{lemma} \label{lem:approx}
Rectangle $R$ of height $h$ maintains a $(\lam^2,2h)$-covering family.
\end{lemma}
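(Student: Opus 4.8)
The plan is to prove the statement by induction on the height $h$ of the rectangle $R$. The base case $h=0$ is the situation where $R$ contains at most one point (since a dyadic rectangle of height $0$ spans a single $y$-coordinate), and there $\CF(R)$ is set up directly during preprocessing as an exact covering family — a $(\lam^2,0)$-covering family in the notation of the excerpt, which trivially provides exact answers and hence is a valid $(\lam^2,2h)$-covering family for $h=0$. (More carefully, for very short rectangles we should check the degenerate cases $x_1=x_2$ as well; these are handled by preprocessing with the exact greedy algorithm, so they also provide exact covers.)

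For the inductive step, I would fix a rectangle $R$ of height $h\ge 1$ and assume inductively that $\CF(R_l)$ and $\CF(R_r)$ are $(\lam^2,2h)$-covering families (they have the same height $h$ as $R$, so strictly speaking this is not a height induction on them — instead one argues on the secondary $x$-recursion that $R_l,R_r$ are handled before $R$, or equivalently one does a joint induction on the pair (height, $x$-span)), and that $\CF(R_b)$ and $\CF(R_t)$ are $(\lam^2,2(h-1))$-covering families, since $R_b$ and $R_t$ have height $h-1$. I then need to verify that a single level of $\CF(R)$, the one maintaining a $(k_1,k_2)$-cover with $k_1=\lam^2$-scaled score and $k_2$ dictated by the covering-family definition with parameter $r=2h$, indeed satisfies the cover property after a $\mathsf{Merge}$ followed by however many deletions are allowed before the next recompute. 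Intervals $(i,j)$ not crossing $x_m$ are covered directly by the union of levels $k$ from $\CF(R_l)$ or $\CF(R_r)$, which is valid by the inductive hypothesis on those families. For intervals crossing $x_m$, I invoke Lemma \ref{lem:merge1} (approximate case) or Lemma \ref{lem:merge2} (exact case): since $\CF(R_b),\CF(R_t)$ are $(\lam^2,2(h-1))$-covering families, by Lemma \ref{lem:str} they provide $\lam^{2(h-1)}$-approximation, so Lemma \ref{lem:merge1} with $p=\lam^{2(h-1)}$ guarantees that after $\mathsf{Merge}$ we obtain, for every crossing interval with a chain of length $p\cdot k$, a segment of score at least $\lceil k/\lam\rceil$ inside it. I then need to track the bookkeeping: the target score $k_2$ of the level, the factor $\lam^{2(h-1)}$ lost in the $R_b,R_t$ approximation, the factor $\lam$ lost in the sparsification step, and the factor (at most) $\lam$ lost in amortisation (deleting $\eps'k_1$ points before recomputing drops the score from $k_1\lam$ to at least $k_1$), and check that $\lam^{2(h-1)}\cdot\lam\cdot\lam = \lam^{2h}$, which is exactly the approximation guarantee a $(\lam^2,2h)$-covering family must provide by Lemma \ref{lem:str} (there $r=2h$, $\gamma=\lam$, so the guarantee is $\lam^{2h}$). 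One also has to check that the score bounds of each level are respected — i.e. that the segments produced by $\mathsf{Merge}$ and kept in the cover have scores in the window $[\lceil k_1\gamma^j\rceil,\, k_1\gamma^{j+2h-1}]$ required by the definition of a $(\lam,2h)$-covering family — and that the non-inclusion / small-depth properties (Lemma \ref{lem:Tfreq}) hold so that everything can legitimately be stored as a BST meeting the cover definition.

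I expect the main obstacle to be the careful accounting of where each factor of $\lam$ is spent and confirming it sums to exactly the exponent $2h$ demanded by the $r=2h$ parameter of the covering family — in particular making sure that the amortisation slack (recomputing only every $\eps'k_1$ deletions) costs strictly less than one full factor of $\lam$ for every level, including the small levels $k_1\le 3/\eps'$ where we cannot afford to lose even an additive $1$ and therefore must recompute after every deletion and use the exact variant of $\mathsf{Merge}$ (Lemma \ref{lem:merge2}). A secondary subtlety is getting the induction structure right: the heights of $R_l,R_r$ equal that of $R$, so the induction must be on a lexicographic order of (height, $x$-span) rather than height alone, and I would state this explicitly at the start of the proof to avoid circularity.
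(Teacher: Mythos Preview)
Your overall structure is the paper's: induction on the rectangle (the paper phrases it as ``induction on the size of the rectangle'', which is exactly your lexicographic order on $(h,\text{$x$-span})$ and resolves the circularity you worried about), handle non-crossing intervals via $\CF(R_l)\cup\CF(R_r)$, handle crossing intervals via Lemmas~\ref{lem:merge1} and~\ref{lem:merge2}, then subtract the sparsification loss and the amortisation loss.

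There is, however, a concrete arithmetic slip in your bookkeeping. A $(\lam^2,2(h-1))$-covering family has $\gamma=\lam^2$, not $\gamma=\lam$; by Lemma~\ref{lem:str} it therefore provides a $(\lam^2)^{2(h-1)}=\lam^{4h-4}$-approximation, not $\lam^{2(h-1)}$. Likewise the target guarantee for a $(\lam^2,2h)$-covering family is $\lam^{4h}$, not $\lam^{2h}$. With the corrected exponents the per-level accounting becomes: a level of $\CF(R)$ is a $(\lceil k\lam^{2j}\rceil,\,k\lam^{2(j+2h-1)})$-cover; call $\mathsf{Merge}$ with parameter $k\lam^{2j+2}$ and $p=\lam^{4h-4}$, so that a crossing chain of length $k\lam^{2j+4h-2}=k_2$ yields a segment of score $\ge\lceil k\lam^{2j+1}\rceil$ after sparsification; then up to $\lfloor\eps'\lceil k\lam^{2j}\rceil\rfloor-2$ deletions still leave score $\ge\lceil k\lam^{2j}\rceil=k_1$ (the paper checks this last inequality explicitly: $\lceil k\lam^{2j+1}\rceil-\lfloor\eps'\lceil k\lam^{2j}\rceil\rfloor+2>\lam\lceil k\lam^{2j}\rceil-(\lam-1)\lceil k\lam^{2j}\rceil=\lceil k\lam^{2j}\rceil$). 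Your product $\lam^{2(h-1)}\cdot\lam\cdot\lam=\lam^{2h}$ should read $\lam^{4h-4}\cdot\lam\cdot\lam=\lam^{4h-2}$, matching $k_2/k_1$. Once you fix $\gamma=\lam^2$ throughout, your proof is the paper's.
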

\begin{proof}
We use induction on the size of the rectangle.
Rectangles of zero height contain at most one element and the claim is trivial.
Let us now consider rectangle $R$ of height $h$, with its left/right/bottom/top rectangles $R_l,R_r,R_b,R_t$.
From the assumption, $R_l$ and $R_r$ provide $(\lam^2,2h)$-covering families,
while $R_b$ and $R_t$ provide $(\lam^2,2(h-1))$-covering families.
The first $k=3/\eps'$ levels in $\CF(R)$ are exact covers, recomputed with $\mathsf{Merge}$ after every deletion,
therefore they are correct by Lemma~\ref{lem:merge2}. 

Let us focus on a level providing $(\lceil k\lam^{2j} \rceil, k\lam^{2(j+2h-1)})$-cover for some $j>0$.
Segments coming from the same level in $R_l$ and $R_r$ have the correct score.
We need to consider segments coming from $\mathsf{Merge}$,
more precisely invoking $\mathsf{Merge}(\CF(R_b),\CF(R_t),x_m,k\lam^{2j+2}, \normalfont{\texttt{true}}$).
$\CF(R_b)$ and $\CF(R_t)$ provide $\lam^{4h-4}$-approximation.
Say that we have an interval crossing the middle point and containing chain $X$ of length at least $k\lam^{2(j+2h-1)}$.
By Lemma~\ref{lem:merge1}, in a set returned by $\mathsf{Merge}$ we have a chain of length
$\lceil k\lam^{2j+1} \rceil$ covering $X$.

Finally, because of amortising $\mathsf{Merge}$, there might be up to
$\lfloor \eps' \lceil k\lam^{2j} \rceil \rfloor -2$ deletions before recomputing the cover.
Therefore, the guarantee on the length of a chain is
\begin{eqnarray*}
\lceil k\lam^{2j+1} \rceil - \lfloor \eps' \lceil k\lam^{2j} \rceil \rfloor + 2 &\geq& \lam k\lam^{2j} - (\lam-1)\lceil k\lam^{2j} \rceil + 2 \\
&>& \lam \lceil k\lam^{2j} \rceil - (\lam-1)\lceil k\lam^{2j} \rceil \\
& = & \lceil k\lam^{2j} \rceil,
\end{eqnarray*}
where we have used the assumption $\lam < 2$.
\end{proof}

\paragraph{Time complexity. } To prove that the amortised complexity of a deletion
is polylogarithmic, we first need to bound the time complexity of a single $\mathsf{Merge}$.

\begin{lemma} \label{lem:Tmerge}
For any rectangle $R$, recomputing a $(k_1,k_2)$-cover from covering family of $R$ works in
$\Oh(k_1 \eps^{-3}\log^5 n)$ time.
\end{lemma}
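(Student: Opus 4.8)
The plan is to charge essentially the whole cost of recomputing the $(k_1,k_2)$-cover to the greedy sparsification carried out in the second step of $\mathsf{Merge}$, and to check that the first step of $\mathsf{Merge}$, together with integrating its output into the joined BSTs of $\CF(R_l)$ and $\CF(R_r)$, stays within the same budget. First I would bound the first step (lines~\ref{loop1}--\ref{candidates} of Algorithm~\ref{alg:merge}): the outer loop runs over the $\Oh(\log_\lam n)=\Oh(\eps^{-1}\log^2 n)$ levels of $\CF(R_b)$, for each of them we locate the matching level of $\CF(R_t)$ and perform a constant number of $\Oh(\log n)$-time BST lookups, and then iterate over the segments of the current cover that cross $x_m$. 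All those segments contain the abscissa $x_m$, hence pairwise intersect, so by Lemma~\ref{lem:Tfreq} applied inductively to $\CF(R_b)$ and $\CF(R_t)$ there are only $\Oh(\eps^{-1}\log n)$ of them; handling one costs an $\Oh(\log n)$-time BST query plus $\Oh(k_1)$ for concatenating and trimming two chains — here I use that, by Lemma~\ref{lem:approx} together with $\eps'=\eps/(8\log n)$ and $h\leq\log n$, the ratios inside one covering family never exceed $2$, so every chain that can participate in the merge, and every chain it produces, has $\Oh(k_1)$ elements. Summing over the levels, the first step runs in $\Oh(\eps^{-2}\log^3 n\cdot(k_1+\log n))=\Oh(k_1\eps^{-3}\log^5 n)$ time and produces a set $U$ of $\Oh(\eps^{-2}\log^3 n)$ segments.

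Next I would bound the second step (lines~\ref{spars1}--\ref{spars2}). The point set $P$ collected there has $|P|=\Oh(k_1\eps^{-2}\log^3 n)$, since $|U|=\Oh(\eps^{-2}\log^3 n)$ and each of its chains carries $\Oh(k_1)$ points; arranging $P$ as an array sorted by $x$-coordinate, by merging the already sorted chains of $U$, costs $\Oh(|P|\log n)$, and $\log|P|=\Oh(\log n)$. When $k_1>3/\eps'$ we run $\textsc{Cover-approx}(P,\lceil k/\lam\rceil,k)$ with $k=\Theta(k_1)$, so $\Delta=k-\lceil k/\lam\rceil=\Theta(\eps' k)$ and the greedy depth parameter $\lceil k/\lam\rceil/\Delta$ collapses to $\Oh(1/\eps')=\Oh(\eps^{-1}\log n)$ — exactly the calculation already performed in the proof of Lemma~\ref{lem:Tfreq}. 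Plugging this into Theorem~\ref{greedy_approx_1} gives $\Oh(|P|\log|P|\cdot\eps^{-1}\log n)=\Oh(k_1\eps^{-3}\log^5 n)$. When $k_1\leq 3/\eps'=\Oh(\eps^{-1}\log n)$ we instead run $\textsc{Cover-exact}(P,k)$ with $k=k_1$, which by Theorem~\ref{greedy_exact_1} costs $\Oh(|P|\,k_1\log|P|)=\Oh(k_1^2\eps^{-2}\log^4 n)$, and since here $k_1=\Oh(\eps^{-1}\log n)$ this is again $\Oh(k_1\eps^{-3}\log^5 n)$. The final filtering of $U'$ down to its $\Oh(\eps^{-1}\log n)$ segments crossing $x_m$ is a linear scan bounded by the running time just computed, and integrating those segments into the joined persistent BSTs of $\CF(R_l)$ and $\CF(R_r)$ takes $\Oh(\eps^{-1}\log^2 n)$; adding everything up yields the claimed bound.

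The step I expect to be \emph{most delicate} is the chain of estimates feeding into Theorem~\ref{greedy_approx_1}: one has to keep track simultaneously of (i) $|U|=\Oh(\eps^{-2}\log^3 n)$, which relies, through Lemma~\ref{lem:Tfreq}, on the inductive invariant that the two child families already have depth $\Oh(\eps^{-1}\log n)$; (ii) that every chain touched by the first step has only $\Oh(k_1)$ elements, which needs the ``ratios below $2$'' observation so that $k_2=\Theta(k_1)$ and the integer parameter of $\mathsf{Merge}$ is $\Theta(k_1)$; and (iii) that the greedy depth parameter of the sparsifying cover is $\Oh(\eps^{-1}\log n)$, the same collapse as in Lemma~\ref{lem:Tfreq}. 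Everything else — the BST costs, sorting $P$, and the integration into $\CF(R_l)$ and $\CF(R_r)$ — is routine and comfortably under the budget.
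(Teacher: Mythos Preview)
Your proposal is correct and follows essentially the same approach as the paper's own proof: bound the first step of $\mathsf{Merge}$ by counting levels times crossing segments, bound $|P|$ accordingly, and charge the dominant cost to the greedy sparsification via Theorems~\ref{greedy_exact_1} and~\ref{greedy_approx_1}. You are in fact more explicit than the paper on several points (the justification of $k_2=\Oh(k_1)$ via the $\lam^{\Oh(h)}<2$ bound, and the separate treatment of the exact-cover case $k_1\le 3/\eps'$), and your one slightly loose phrase ``applied inductively'' for Lemma~\ref{lem:Tfreq} is harmless once you note that $x_m$ is shared by $R$, $R_b$, and $R_t$, so the segments of $\CF(R_b)$ crossing $x_m$ are exactly those output by the last $\mathsf{Merge}$ on $R_b$.
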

\begin{proof}
First, observe that for our construction we have $k_2=\Oh(k_1)$.
As stated before, the first step of $\mathsf{Merge}$ produces $\Oh(\eps^{-2}\log^3 n)$ segments.
This takes time $\Oh((k_1 + \log n)\eps^{-2}\log^3 n )$, as for each segment there is one search in a BST involved and then
we create a list containing elements of a chain.
The score of each segment is $\Oh(k_1)$, so set $P$ of all points appearing in any of the segments
is of size $\Oh(k_1 \eps^{-2}\log^3 n)$.
For the second step of $\mathsf{Merge}$ (sparsification), we sort $P$ and run the greedy algorithm for finding a cover.
This takes $\Oh(k_1\eps^{-3} \log^5 n)$ time by Theorem~\ref{greedy_exact_1}, Theorem~\ref{greedy_approx_1} and Lemma~\ref{lem:Tfreq}.
Now the only thing that remains is joining BSTs representing covers of the left and right rectangle,
then inserting the newly computed segments, all while ensuring there is no segment inside another.
As described before, it is done with $\Oh(\eps^{-1}\log n)$ operations
$\mathsf{Find}$, $\mathsf{Join}$, $\mathsf{Delete}$ and $\mathsf{DeleteInterval}$.
Thus, creating the final BST takes time $\Oh(\eps^{-1}\log^2 n)$.
\end{proof}

\begin{lemma} \label{lem:Tamort}
For any rectangle $R$, the amortised cost of recomputing $(k_1,k_2)$-cover from $\CF(R)$ is
$\Oh(\eps^{-4}\log^6 n)$.
\end{lemma}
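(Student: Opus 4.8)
The plan is to combine the worst-case cost of a single $\mathsf{Merge}$ from Lemma~\ref{lem:Tmerge} with the amortisation schedule described above: a $(k_1,k_2)$-cover is recomputed only after $\Theta(\eps' k_1)$ points have been deleted from $P(R)$, so the cost $\Oh(k_1 \eps^{-3}\log^5 n)$ of one recomputation is charged to those deletions. First I would recall that $\eps' = \eps/(8\log n)$, hence $\eps' k_1 = \Theta(k_1 \eps \log^{-1} n)$, so the number of deletions between consecutive recomputations is $\Theta(k_1 \eps \log^{-1} n)$ (using $k_1 > 3/\eps'$ so that $\lfloor \eps' k_1\rfloor - 2 = \Theta(\eps' k_1)$; the first $\Oh(1/\eps')$ exact levels are handled separately and are cheap). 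Dividing the recomputation cost by this count gives an amortised cost of
\[
\frac{\Oh(k_1 \eps^{-3}\log^5 n)}{\Theta(k_1 \eps \log^{-1} n)} = \Oh(\eps^{-4}\log^6 n),
\]
which is exactly the claimed bound.

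The key steps in order are: (i) invoke Lemma~\ref{lem:Tmerge} to get the $\Oh(k_1\eps^{-3}\log^5 n)$ worst-case cost of one $\mathsf{Merge}$; (ii) recall the amortisation rule that the counter for a $(k_1,k_2)$-cover runs from $0$ to $\max(1,\lfloor \eps' k_1\rfloor - 1)$ before a recompute is triggered, so at least $\Theta(\eps' k_1)$ deletions inside $R$ occur per recompute; (iii) substitute $\eps' = \eps/(8\log n)$ so that $\eps' k_1 = \Theta(\eps k_1 / \log n)$; (iv) divide and simplify, observing the $k_1$ cancels, leaving $\Oh(\eps^{-4}\log^6 n)$. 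I would also note that each deletion is charged at most once per cover it affects (the point lies in $P(R)$, so its counter increments are well defined), which makes the amortised accounting legitimate.

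The main obstacle — though it is more of a bookkeeping subtlety than a genuine difficulty — is justifying that $\lfloor \eps' k_1\rfloor - 1 = \Theta(\eps' k_1)$ uniformly, i.e. handling the regime where $\eps' k_1$ is small. For the approximate levels we have $k_1 > 3/\eps'$, so $\eps' k_1 > 3$ and $\lfloor \eps' k_1\rfloor - 1 \geq \eps' k_1 / 3$, which is fine; for the first $\Oh(1/\eps')$ levels the covers are exact and $\mathsf{Merge}$ is run after \emph{every} deletion, but there each $\mathsf{Merge}$ touches only $\polylog n$ segments and points, so its cost is $\Oh(\eps^{-2}\log^4 n)$ or so, already within the claimed bound without amortisation. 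Collecting these two cases yields the lemma; I would present the approximate-level computation as the main display and dispatch the exact levels in a sentence.
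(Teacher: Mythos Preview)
Your proposal is correct and follows essentially the same argument as the paper: invoke Lemma~\ref{lem:Tmerge}, divide by the $\Theta(\eps' k_1)$ deletions between recomputes, and substitute $\eps'=\Theta(\eps/\log n)$. One small quantitative slip: for the exact levels your estimate of $\Oh(\eps^{-2}\log^4 n)$ is too optimistic---applying Lemma~\ref{lem:Tmerge} with $k_1=\Oh(1/\eps')=\Oh(\eps^{-1}\log n)$ gives $\Oh(\eps^{-4}\log^6 n)$ directly, which is still within the claimed bound, so the conclusion stands.
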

\begin{proof}
For the case of $k_1=k_2$, $k_1=\Oh(1/\eps')=\Oh(\eps^{-1}\log n)$ and thus the claim holds.
Otherwise, the step of recomputing that level of $\CF(R)$ is amortised between $\lfloor \eps k_1 \rfloor - 1$ deletions.
We have $\eps' k_1 \geq 3$, and so $\lfloor \eps' k_1 \rfloor - 1 \geq 2\eps' k_1 /3$.
Therefore, using Lemma~\ref{lem:Tmerge}, the amortised cost is $\Oh(\eps^{-3}(\log^5 n) / \eps') = \Oh(\eps^{-4}\log^6 n)$.
\end{proof}

Now we need to sum up the cost of deletion among all the relevant rectangles.

\begin{lemma} \label{lem:Ttotal}
The amortised cost of deletion is $\Oh(\eps^{-5}\log^{10} n)$.
\end{lemma}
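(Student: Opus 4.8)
The plan is to add up, over all rectangles affected by a single deletion, the amortised cost contributed by each of their covering-family levels. First I would recall from the discussion in Section~\ref{sec:decremental} that every point of the main array lies inside $\Oh(\log^2 n)$ dyadic rectangles, since it falls into $1+\log n$ dyadic intervals on each of the two axes. Thus a deletion touches only $\Oh(\log^2 n)$ rectangles, and for each such rectangle $R$ we must account for the work done on $\CF(R)$.

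Next I would analyse the per-rectangle cost. For a fixed $R$, the deletion increments the counter on every level of $\CF(R)$, and there are $\Oh(\log_\lam n)=\Oh(\eps^{-1}\log^2 n)$ levels (using $\lam=1+\eps/(8\log n)$ as fixed in Section~\ref{sec:cover}). On each level two things can happen: (i) unconditionally, after the deletion we re-join the persistent BSTs of $R_l$ and $R_r$ and re-insert the $\Oh(\eps^{-1}\log n)$ stored segments from the last $\mathsf{Merge}$, which by the discussion on maintaining a cover as a BST costs $\Oh(\eps^{-1}\log^2 n)$ per level; and (ii) possibly the counter hits its threshold and we recompute the cover via $\mathsf{Merge}$, whose amortised cost is $\Oh(\eps^{-4}\log^6 n)$ per level by Lemma~\ref{lem:Tamort}. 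The dominant term is (ii), so the amortised cost charged to $R$ is $\Oh(\eps^{-1}\log^2 n)\cdot\Oh(\eps^{-4}\log^6 n)=\Oh(\eps^{-5}\log^8 n)$, where the first factor counts the levels.

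Finally I would multiply the per-rectangle bound by the $\Oh(\log^2 n)$ rectangles containing the deleted point, together with the $\Oh(\log n)$ cost of the $\mathsf{FindRank}$ translation from the current index to the initial index in the main rectangle's BST and the $\Oh(\log^2 n)$ cost of physically deleting the point from the point-BSTs of all affected rectangles; both of these are lower-order. This yields $\Oh(\log^2 n)\cdot\Oh(\eps^{-5}\log^8 n)=\Oh(\eps^{-5}\log^{10} n)$ amortised per deletion, as claimed.

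The main obstacle I expect is bookkeeping the amortisation correctly: the threshold on level $(k_1,k_2)$ is $\max(1,\lfloor\eps' k_1\rfloor-1)$, so for the first $\Oh(1/\eps')$ levels (the exact covers) $\mathsf{Merge}$ runs on \emph{every} deletion, and one must check that this is affordable — here $k_1=\Oh(1/\eps')=\Oh(\eps^{-1}\log n)$, so by Lemma~\ref{lem:Tmerge} each such $\mathsf{Merge}$ costs only $\Oh(k_1\eps^{-3}\log^5 n)=\Oh(\eps^{-4}\log^6 n)$, which already fits inside the per-level bound used above and so does not dominate. The other subtlety is making sure the $\Oh(\eps^{-1}\log^2 n)$ levels and the $\Oh(\log^2 n)$ rectangles are multiplied rather than confused with the counts appearing inside Lemma~\ref{lem:Tamort}; once these are kept separate, the product is routine.
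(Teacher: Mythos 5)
Your proof is correct and follows essentially the same route as the paper's: count the $\Oh(\log^2 n)$ rectangles containing the deleted point, charge each of the $\Oh(\eps^{-1}\log^2 n)$ levels its amortised recomputation cost of $\Oh(\eps^{-4}\log^6 n)$ from Lemma~\ref{lem:Tamort}, note that the BST-joining and point-deletion costs are lower order, and multiply. Your side remark about the exact levels is already subsumed by the $k_1=k_2$ case of Lemma~\ref{lem:Tamort}, so nothing is missing.
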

\begin{proof}
While deleting an element, the algorithm needs to visit $\Oh(\log^2 n)$ rectangles
containing a point corresponding to that element.
For each such rectangle $R$, we do the following:
\begin{itemize}
\item Delete the point from the BST storing points inside $R$, in time $\Oh(\log n)$.
\item Increment the counter for each level of $\CF(R)$, there are $\Oh(\log_{\lam} n)=\Oh(\eps^{-1}\log^2 n)$ levels.
\item Pay the amortised cost of recomputing every level of $\CF(R)$, by Lemma~\ref{lem:Tamort} this is $\Oh(\eps^{-5}\log^8 n)$.
\item For every level of $\CF(R)$, we join trees from $\CF(R_l),\CF(R_r)$ with segments returned by 
the last $\mathsf{Merge}$.
This takes time $\Oh(\eps^{-1}\log^2 n)$ for each level, so $\Oh(\eps^{-2}\log^4 n)$ in total.
\end{itemize}

Clearly, the third item is dominating.
\end{proof}

The last thing that should be investigated is the time complexity of preprocessing.

\begin{lemma} \label{lem:Tprep}
Preprocessing of the input array takes time $\Oh(n\eps^{-2}\log^6 n)$.
\end{lemma}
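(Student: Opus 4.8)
The plan is to bound the cost of building every nonempty dyadic rectangle's covering family from scratch. First I would recall the structural fact already established in the preliminaries: each of the $n$ points falls into $1+\log n$ dyadic intervals on each axis, so there are only $\Oh(n\log^2 n)$ nonempty rectangles, and the multiset of their point counts satisfies $\sum_R |P(R)| = \Oh(n\log^2 n)$, since each point contributes $1$ to $(1+\log n)^2$ rectangles. This telescoping identity is what makes a per-rectangle cost that is near-linear in $|P(R)|$ sum up to near-linear in $n$ overall.

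Next I would account for the cost of constructing $\CF(R)$ for a single rectangle $R$ with $m = |P(R)|$ points directly (not via $\mathsf{Merge}$, since at preprocessing time there is nothing to merge). A $(\lam,2h)$-covering family has $\Oh(\log_\lam n) = \Oh(\eps^{-1}\log^2 n)$ levels. For each level providing a $(k_1,k_2)$-cover we run either \textsc{Cover-exact} or \textsc{Cover-approx} on the $m$-point array: by Theorem~\ref{greedy_exact_1} the exact $k$-cover costs $\Oh(mk\log m)$, but for the exact levels $k = \Oh(1/\eps') = \Oh(\eps^{-1}\log n)$, so this is $\Oh(m\eps^{-1}\log^2 n)$ per level; by Theorem~\ref{greedy_approx_1} an approximate $(k_1,k_2)$-cover costs $\Oh((m\log m)k_1/\Delta)$, and by the computation in Lemma~\ref{lem:Tfreq} the ratio $k_1/\Delta = \Oh(\eps^{-1}\log n)$, so again $\Oh(m\eps^{-1}\log^2 n)$ per level. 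Finally each level's segments must be inserted into a BST, costing $\Oh(m\log m)$, which is dominated. Multiplying by the $\Oh(\eps^{-1}\log^2 n)$ levels gives $\Oh(m\eps^{-2}\log^4 n)$ to build $\CF(R)$.

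Then I would simply sum over all rectangles: $\sum_R |P(R)|\cdot\Oh(\eps^{-2}\log^4 n) = \Oh(\eps^{-2}\log^4 n)\cdot\Oh(n\log^2 n) = \Oh(n\eps^{-2}\log^6 n)$, which matches the claimed bound. I would also note in passing that laying out the $\Oh(n\log^2 n)$ nonempty rectangles, distributing the points into them, and setting up the $\Oh(\log^2 n)$ inter-rectangle pointers per point all cost $\Oh(n\log^3 n)$ or less, hence are absorbed.

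The main obstacle I anticipate is making the per-level accounting genuinely tight: one has to be careful that the incremental application of Fredman's algorithm inside \textsc{Cover-exact}/\textsc{Cover-approx} really yields total work proportional to (length of the array) $\times$ (depth bound) $\times \log n$ rather than (length) $\times$ (number of segments) $\times \log n$, which could be larger --- this is exactly what Theorems~\ref{greedy_exact_1} and~\ref{greedy_approx_1} guarantee, so the proof reduces to invoking them with the right parameters. A secondary subtlety is that $k$ ranges up to $n$ on the highest levels, so one must use $\Delta = k_2 - k_1 = \Theta(\eps' k_1)$ to keep $k_1/\Delta$ polylogarithmic; this is precisely the parameter choice baked into the definition of $(\gamma,r)$-covering family, so no new work is needed beyond citing it.
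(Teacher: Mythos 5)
Your proposal is correct and follows essentially the same accounting as the paper: $\sum_R|P(R)|=\Oh(n\log^2 n)$, times $\Oh(\eps^{-1}\log^2 n)$ levels per covering family, times an $\Oh(\eps^{-1}\log^2 n)$ per-element overhead for the greedy cover construction (from the depth bound $k_1/\Delta=\Oh(\eps^{-1}\log n)$ and the $\log n$ factor of Fredman's algorithm), with the $\Oh(n\log^3 n)$ setup cost absorbed. The only nit is a typo ($(\lam,2h)$ should be $(\lam^2,2h)$), which does not affect the bound.
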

\begin{proof}
Recall that each point is inside $\Oh(\log^2 n)$ rectangles.
Thus, the recursive structure of nonempty rectangles can be built in time $\Oh(n \log^2 n)$,
and BSTs of points inside each rectangle are constructed in total time $\Oh(n \log^3 n)$.

The main component is computing every covering family with the greedy algorithm.
For rectangle $R$, the first $\Oh(1/\eps')$ levels of $\CF(R)$ are computed with the exact greedy algorithm,
and then every $(k_1,k_2)$-cover is computed with the approximation greedy algorithm
with parameters $P(R)$,$\lceil k_1\lam \rceil$,$k_2$.
As one point is in $\Oh(\log^2 n)$ rectangles, a covering family of any rectangle has $\Oh(\eps^{-1}\log^2 n)$ levels,
and by computations identical to those in Lemma~\ref{lem:Tfreq} the overhead of running the greedy algorithm for each element is also $\Oh(\eps^{-1}\log^2 n)$,
the total time of computing covering families is $\Oh(n\eps^{-2}\log^6 n)$.
\end{proof}

\section{Erdős-Szekeres Partitioning}
\label{sec:partition}

Having finished the description of our decremental structure, we can state a particularly interesting
application to decomposing a permutation on $n$ elements into $\Oh(\sqrt{n})$ monotone
subsequences in $\tildeOh(n)$ time. By repeatedly applying the Erdős-Szekeres theorem, 
that guarantees the existence of a monotone subsequence of length at least $\sqrt{n}$
in any permutation on $n$ elements, it is clear that such a decomposition exists. Now we
can find it rather efficiently.

\begin{lemma} \label{lem:Decomp}
A permutation on $n$ elements can be partitioned into $\Oh(\sqrt n)$ monotone subsequences in $\tilde{\Oh}(n)$ time.
\end{lemma}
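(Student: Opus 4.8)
The plan is to combine our decremental LIS structure with the Erdős--Szekeres theorem, using the structure to peel off long increasing subsequences greedily, and Fredman's algorithm to finish once no long increasing subsequence remains. Concretely, I would first decide, for each of the two ``directions'' separately, which one to exploit: given the permutation $a_0,\ldots,a_{n-1}$, we can look for long increasing subsequences, and (by reversing the array, or equivalently negating values) long decreasing subsequences. I would build our decremental structure on the array and set the approximation parameter to a constant, say $\eps=1$, so that each deletion costs $\Oh(\log^{10} n)$ amortised time and a query costs $\Oh(\log^2 n)$ time (Lemmas~\ref{lem:Ttotal} and~\ref{lem:Tfind}), and preprocessing costs $\Oh(n\log^6 n)$ (Lemma~\ref{lem:Tprep}).

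Next, the peeling phase. Repeatedly query the structure for the approximate length of LIS of the whole current array. As long as the returned value $k$ satisfies $k \geq \sqrt{n}/2$ (a constant fraction of $\sqrt n$, chosen so that a $2$-approximation still guarantees a true LIS of length $\geq \sqrt n / 4$, say), extract the corresponding increasing subsequence in $\Oh(k)$ time, record it as one part of the partition, and delete all of its elements from the structure. Each such extracted subsequence has length $\Omega(\sqrt n)$, so after at most $\Oh(\sqrt n)$ iterations the approximate length drops below the threshold; at that point the true LIS of the remaining array has length $\Oh(\sqrt n)$. The total work here is $\Oh(\sqrt n)$ queries plus $\Oh(n)$ element deletions, i.e. $\tildeOh(n)$ time, and the total length of extracted subsequences is at most $n$, so we have used $\Oh(\sqrt n)$ parts so far.

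For the cleanup phase, once the remaining array $A'$ (of length $n' \leq n$) has LIS length $k' = \Oh(\sqrt n)$, I would invoke Fredman's algorithm on $A'$; as recalled in the preliminaries, a byproduct of computing LIS of length $k'$ is a partition of the elements into $k'$ decreasing subsequences (the ``piles'' of patience sorting). This yields $\Oh(\sqrt n)$ additional monotone (decreasing) parts in $\Oh(n\log n)$ time. Combining the two phases gives a partition into $\Oh(\sqrt n)$ monotone subsequences in $\tildeOh(n)$ total time. The only subtlety --- and the step I would be most careful about --- is keeping the constants coherent: the Erdős--Szekeres bound only asserts a monotone (increasing \emph{or} decreasing) subsequence of length $\sqrt n$, whereas the peeling phase only extracts increasing ones. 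This is fine because the argument does not actually need each extraction to be near-maximum; it only needs (i) each extracted part to have length $\Omega(\sqrt n)$ so that the iteration count is $\Oh(\sqrt n)$, and (ii) the residual LIS to be $\Oh(\sqrt n)$ so that Fredman's pile decomposition uses $\Oh(\sqrt n)$ parts. Both hold for any constant-factor threshold below $\sqrt n$, so a constant-factor approximation of LIS suffices and we never need the decreasing direction during peeling at all. I would just double-check that the approximation factor (here $1+\eps = 2$) times the threshold slack still leaves room for the Erdős--Szekeres guarantee, which it does with margin.
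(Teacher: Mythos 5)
Your proposal is correct and follows essentially the same route as the paper: peel off increasing subsequences using the decremental structure with a constant approximation factor until the (approximate) LIS drops below a $\Theta(\sqrt n)$ threshold, then partition the remainder into $\Oh(\sqrt n)$ decreasing subsequences via Fredman's algorithm. The only differences are cosmetic (a threshold of $\sqrt n/2$ instead of $\sqrt n$, and some unnecessary hedging about the Erdős--Szekeres direction, which, as you yourself note, the counting argument never needs).
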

\begin{proof}
We maintain a 2-approximation of LIS with our decremental algorithm. As long as the approximated
length is at least $\sqrt{n}$, we retrieve its corresponding increasing subsequence and delete all of
its elements. This takes polylogarithmic time per deleted elements, so $\tildeOh(n)$ overall, and
creates at most $\sqrt{n}$ increasing subsequences.
Let the length of LIS of the remaining elements be $k$. Because we have maintained a 2-approximation,
$k< 2\sqrt{n}$.
Next, we run Fredman's algorithm on the remaining part of the permutation in $\Oh(n\log n)$ time.
Recall that it scans the elements of the input sequence $(a_{1},a_{2},\ldots,a_{m})$ and computes
for every $i$ the largest $\ell$ such that there exists an increasing subsequence of length $\ell$
ending with $a_{i}$. Then, it is straightforward to verify that, for every $\ell=1,2,\ldots,k$, the elements
for which this quantity is equal to $\ell$ form a decreasing subsequence, so we obtain
a partition of the remaining elements into $k$ decreasing subsequences.
Thus, we partitioned the whole sequence into $\Oh(\sqrt n)$ monotone subsequences.
\end{proof}

We remark that our algorithm can be used to partition a permutation into $(1+\eps)\sqrt{2n}$ monotone
subsequences (for any constant $\eps>0$, still in $\tilde{\Oh}(n)$ time),
where $\sqrt{2n}$ is the optimal number of subsequences~\cite{BrandstadtK86}.
To achieve this, let $s=\sqrt{2n}$.
We iteratively try extracting increasing subsequences, in step $i$ a subsequence of length at least $s-i$,
for $i=0,1,\ldots$.
This process must terminate before $i=s-1$, as $s+(s-1)+(s-2)+...+2 > n$.
When the process terminates, we have that the length of LIS is less than $(1+\eps)(s-i)$,
so we can run Fredman's algorithm on the remaining part of the permutation to partition it into
$(1+\epsilon)(s-i)$ decreasing subsequences.
Overall, this is at most $i+(1+\eps)(s-i) \leq (1+\eps)s$ subsequences.

On the lower bound side, we have the following.

\begin{lemma} \label{lem:DecompLower}
Any comparison-based algorithm for decomposing a permutation on $n$ elements into $\Oh(\sqrt{n})$
monotone subsequences needs $\Omega(n\log n)$ comparisons.
\end{lemma}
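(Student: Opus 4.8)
The plan is to derive the $\Omega(n\log n)$ comparison lower bound from Fredman's classical lower bound for computing the length of LIS, by showing that a monotone decomposition into $\Oh(\sqrt n)$ parts can be used to recover the length of LIS (or to sort, depending on the cleanest reduction). First I would recall the adversary/counting argument behind Fredman's bound: there is a family of $n!$ (or a suitably large subfamily of) permutations on which any comparison tree distinguishing them must have depth $\Omega(n\log n)$, and crucially these permutations can be chosen so that determining $|\mathrm{LIS}|$ — or even just deciding whether $|\mathrm{LIS}|$ exceeds a threshold — already requires $\Omega(n\log n)$ comparisons. The key observation to exploit is that once we possess a partition of the permutation into $\Oh(\sqrt n)$ monotone subsequences together with, for each subsequence, its sorted order (which a comparison-based algorithm producing such a subsequence necessarily knows), we can compute $|\mathrm{LIS}|$ of the whole permutation using only $\Oh(n\polylog n)$ additional comparisons, e.g. by a Fredman/patience-sorting style merge over the $\Oh(\sqrt n)$ sorted chains, or by the standard fact that LIS of a sequence presented as few monotone runs is cheap. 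Hence a $o(n\log n)$ algorithm for the decomposition would yield a $o(n\log n)$ comparison algorithm for $|\mathrm{LIS}|$, contradicting Fredman.

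Concretely, the steps I would carry out, in order: (1) State precisely which comparisons the decomposition algorithm is charged for and note that its output — the $\Oh(\sqrt n)$ monotone subsequences — comes with enough order information that no further comparisons are needed to know each subsequence's internal sorted order. (2) Give the cheap post-processing that computes $|\mathrm{LIS}|$ from these $\Oh(\sqrt n)$ monotone chains in $\Oh(n\log n)$ or $\Oh(n\polylog n)$ comparisons; the simplest route is to observe that merging the $\Oh(\sqrt n)$ increasing chains (after negating/reversing the decreasing ones appropriately, or more carefully, running Fredman's incremental algorithm where each new element's rank among the $\Oh(\sqrt n)$ active pile-tops is found by binary search) costs $\Oh(\log n)$ comparisons per element. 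Actually, the cleanest variant may be to reduce to \emph{sorting}: given a decomposition into $\Oh(\sqrt n)$ monotone chains, each internally sorted, we can sort the entire permutation with $\Oh(n\log n)$ further comparisons by a $\Oh(\sqrt n)$-way merge, and sorting a permutation requires $\Omega(n\log n)$ comparisons — this avoids invoking Fredman at all and is self-contained. (3) Combine: if the decomposition used $o(n\log n)$ comparisons, the total would be $o(n\log n)$ for sorting, a contradiction; hence $\Omega(n\log n)$ comparisons are necessary.

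The main obstacle I anticipate is the bookkeeping in step (2): a comparison-based decomposition algorithm outputs monotone subsequences but we must argue carefully that it "knows" their sorted order without spending extra comparisons, and that the $\Oh(\sqrt n)$-way merge genuinely needs only $\Oh(\log(\sqrt n)) = \Oh(\log n)$ comparisons per element (using a tournament/heap over the chain heads, where each head is a concrete element and comparisons between heads are legitimate comparisons in the model). One subtlety is that the model might allow the algorithm to output a subsequence without having compared all of its adjacent pairs — but monotonicity of a claimed subsequence, to be certified, forces the relevant order relations to be implied by the comparisons performed, so the sorted order is available "for free" in the information-theoretic sense, which is all the counting argument needs. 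A second minor subtlety is handling ties, but the paper already normalises to a permutation of $[n]$, so all elements are distinct and this does not arise. I would therefore phrase the whole argument at the level of comparison trees: a decomposition tree of depth $o(n\log n)$ can be extended to a sorting tree of depth $o(n\log n)$, contradicting the information-theoretic $\log_2(n!) = \Omega(n\log n)$ lower bound.
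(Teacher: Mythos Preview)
Your eventual approach---reduce to the sorting lower bound by merging the $\Oh(\sqrt n)$ monotone chains---is exactly what the paper does. However, as stated your step (3) does not go through. You write that the $\Oh(\sqrt n)$-way merge costs ``$\Oh(n\log n)$ further comparisons'' and then conclude that if the decomposition used $o(n\log n)$ comparisons, the \emph{total} would be $o(n\log n)$. That is false: $o(n\log n)+\Oh(n\log n)$ is $\Oh(n\log n)$, not $o(n\log n)$, so no contradiction with the sorting lower bound arises. The point is not that the merge is cheap in the $o(\cdot)$ sense---it isn't, $n\log\sqrt n=\Theta(n\log n)$---but that its leading constant is strictly below $1$.

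The fix, which is precisely what the paper does, is to track constants. Merging $c\sqrt n$ sorted lists of total length $n$ costs at most $n\log(c\sqrt n)=\tfrac{1}{2}n\log n+\Oh(n)$ comparisons (tournament tree over the heads, $\lceil\log_2(c\sqrt n)\rceil$ comparisons per output element). Combining with the information-theoretic sorting lower bound $\log_2(n!)=n\log n-\Oh(n)$, any decomposition algorithm using $\alpha n\log n$ comparisons must satisfy $\alpha n\log n+\tfrac{1}{2}n\log n+\Oh(n)\geq n\log n-\Oh(n)$, hence $\alpha\geq\tfrac{1}{2}-o(1)$. Your worry about whether the algorithm ``knows'' each chain's sorted order is a non-issue here: a monotone subsequence of a permutation is sorted either by position or by reversed position, and one comparison per chain (so $\Oh(\sqrt n)$ total) determines which.
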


\begin{proof}
Assume that the algorithm creates at most $c\sqrt{n}$ subsequences using $\alpha n\log n$ comparisons.
Then, we can merge all subsequences into a single sorting sequence using $n\log(c\sqrt{n})$ comparisons
by merging them into pairs, quadruples, and so on. Thus, by the lower bound of $n\log n - \Oh(n)$ for any
comparison-based sorting algorithm, we must have $\alpha n\log n +n\log(c\sqrt{n}) \geq n\log n-\Oh(n)$,
so indeed $\alpha \geq 1/2-o(1)$.
\end{proof}

\section{Fully Dynamic Worst-Case Structure}
\label{sec:dynamic}

In this section we extend our decremental structure to also allow insertions of new elements
at any position in the main array, and then describe how to deamortise the fully dynamic structure.
As the required changes turn out to be relatively minor, we only provide a description of the new ingredients
and their effect on both the approximation guarantee and the running time.

\paragraph{Coordinates of points.}
Recall that our decremental structure was based on normalising the entries in the input
sequence $(a_{1},a_{2},\ldots,a_{n})$, creating a set of points $S=\{(i,a_{i}) : i\in [n]\}$, and
then maintaining some information for each nonempty dyadic rectangle.
Now that we allow both deletions and insertions, it is convenient to think that the points
have real coordinates, and we simply add new points to the current set without
modifying the coordinates of the already present points (provided that both the $x$-
and $y$-coordinates are distinct). However, we need to actually implement this efficiently.

We maintain a (totally ordered) list under deletions and insertions in an order-maintenance
structure. Such a structure can handle the following operations:
\begin{itemize}
\item $\mathsf{Insert}(a,b)$, which inserts element $a$ immediately after element $b$ in a list and returns a handle to $a$.
\item $\mathsf{Delete}(a)$.
\item $\mathsf{Order}(a,b)$, answers whether $a$ precedes $b$ in a list given handles of two elements.
\end{itemize}

\begin{theorem} \label{th:Order}
\cite{Order1,Order2} There is an order-maintenance data structure supporting all operations
in $\Oh(1)$ worst-case time.
\end{theorem}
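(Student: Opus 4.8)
The plan is to recall the classical two-level construction for order maintenance and then indicate how it is deamortised to obtain worst-case guarantees. First I would store the list in a \emph{top structure} together with a collection of \emph{bottom structures}: partition the linked list into consecutive groups of $\Theta(\log n)$ elements, keep the groups themselves in a doubly linked list, and assign each group an integer label drawn from a range of size polynomial in $n$, so that a label fits in $\Oh(1)$ machine words. Inside a group, its $\Oh(\log n)$ elements receive local integer labels from a range of size polynomial in $\log n$, again fitting in a single word. An $\mathsf{Order}(a,b)$ query then just compares the pairs (group label, local label) lexicographically in $\Oh(1)$ time, using handles that point directly to an element's group and local slot.

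For $\mathsf{Insert}(a,b)$ I would first try to give $a$ a local label strictly between $b$'s local label and that of its successor in the same group; if those two labels are adjacent, relabel the whole group by spreading its $\Theta(\log n)$ elements evenly over the local range. This costs $\Oh(\log n)$, but by the standard weight-balanced / exponential-density argument of the classical constructions~\cite{Order1,Order2} it can be charged $\Oh(1)$ amortised per insertion into that group. Whenever a group grows past its size bound we split it in two, which requires inserting a new label at the top level; if the top level has no room, we relabel a suitably chosen window of group labels, again analysed to be $\Oh(1)$ amortised by the same density argument. A $\mathsf{Delete}(a)$ merely removes $a$ from its group, with lazy merging or periodic global rebuilding to keep groups from becoming too sparse and to keep $n$ (hence the label ranges) under control.

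To turn these amortised bounds into \emph{worst-case} $\Oh(1)$, the key step is never to perform a large relabeling in a single operation, but instead to schedule it as background work: when a relabeling of a region of $m$ elements becomes necessary, spread its $\Oh(m)$ cost over the next $\Theta(m)$ updates, doing a constant amount of this work per operation while the region is gradually rebuilt in place. One maintains both the old and the new labeling of the region simultaneously so that queries and updates remain correct throughout, and one chooses the window and the schedule so that the background process always finishes before a relabeling of the same region would be triggered again. This is exactly the deamortised order-maintenance structure of~\cite{Order1,Order2}; invoking it gives $\mathsf{Insert}$, $\mathsf{Delete}$ and $\mathsf{Order}$ in $\Oh(1)$ worst-case time, which is the statement of the theorem.

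The step I expect to be the main obstacle is precisely this deamortisation. One has to argue carefully that updates landing inside a region currently being rebuilt do not violate the invariants or force the rebuild to restart, that the scheduled constant-work-per-step always stays ahead of the next time that region's density would again demand a relabeling, and that several such background processes at different levels can coexist without their costs compounding. Since all of this is carried out in full in the cited papers, I would not reproduce the details here and would instead appeal to~\cite{Order1,Order2} directly.
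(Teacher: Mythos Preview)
The paper does not prove this theorem at all: it is stated purely as a cited result from~\cite{Order1,Order2} and used as a black box, with no sketch or argument provided. Your proposal correctly describes the classical two-level labeling construction and its deamortisation, and your own conclusion---to ``appeal to~\cite{Order1,Order2} directly'' rather than reproduce the details---is exactly what the paper does. So there is nothing to compare; your sketch is accurate but strictly more than the paper offers.
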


To allow insertions in our algorithm, we maintain a global order-maintenance structure,
and $x$-coordinate of a point is a handle to an element in that structure.
Thus, we still are able to compare points in constant time.
Additionally, we keep in the structure two artificial guards corresponding to $-\infty$ and $\infty$ values on $x$-axis.
Our algorithm will sometimes still need an $x$-coordinate of an already deleted point,
therefore an element of the order-maintenance structure is deleted only when
no longer referred to. Nevertheless, it will be easy to verify that the order-maintenance structure stores
$\tilde{\Oh}(n)$ elements at any time.
We also note that using a BST to implement an order-maintenance structure in $\Oh(\log n)$ time
per operation would not dramatically increase our time complexities.

\paragraph{Two-dimensional recursion.}
While in the decremental structure we could use a fixed collection of rectangles,
doing so in the fully dynamic case is problematic, as now the number of points in a rectangle might either decrease or increase,
and we still need to be able to split a rectangle into either left/right or bottom/top
while maintaining logarithmic depth of the decomposition.
To this end, we define the rectangles differently.

The new definition is inspired by dynamic 2D range trees, see e.g.~\cite{Range}.
We first describe how to obtain the initial collection of rectangles, and then explain how to maintain it.
We apply a primary recursion on the $y$-coordinates, and then a secondary recursion on the $x$
coordinates, starting from a rectangle containing the whole set of points.
Take a rectangle $R$ containing $m$ points considered in the primary recursion.
We initially choose and store the middle $y$-coordinate, denoted $y_{m}$, such
that exactly $m/2$ points from $P(R)$ are below the horizontal line $y=y_m$,
then partition $R$ into $R_b$ and $R_t$ consisting of points under or above $y_m$, respectively,
thus each containing $m/2$ points. Then we repeat the primary recursion on $R_{b}$ and $R_{t}$.
Each rectangle considered in the primary recursion is then further partitioned
by the secondary recursion.
Take a rectangle $R$ containing $m$ points considered in the secondary recursion.
We initially choose and store the middle $x$-coordinate, denoted $x_{m}$, such
that exactly $m/2$ points from $P(R)$ are to the left of the vertical line $x=x_{m}$,
then partition $R$ into $R_{l}$ and $R_{r}$ consisting of points on left or right of $x=x_{m}$, respectively,
thus each containing $m/2$ points. Then we repeat the secondary recursion on $R_{l}$ and $R_{r}$.

To maintain the collection, we proceed as follows. Consider a rectangle $R$
created in the primary recursion with $m$ points, and let $y_{m}$ be its middle
$y$-coordinate. We keep $y=y_m$ as a line of partition of $R$ into $R_{b}$ and $R_{t}$ during
the next $m/4$ insertions and deletions of points \emph{inside} $R$, and then
recompute from scratch the whole recursive decomposition of $R$ (including running
the secondary recursion for each of the rectangles obtained in the primary recursion,
and constructing a covering family for each recomputed rectangle).
This is enough to guarantee that the partition of any $R$ into $R_{b}$ and $R_{t}$
in the primary recursion is $2$-balanced, meaning that the number of points
in $R_{b}$ and $R_{t}$ differs by at most a factor of $2$.
We proceed similarly for any rectangle considered in the secondary recursion,
so for any rectangle $R$ created in the secondary recursion for $m$ points, with $x_{m}$
being the middle $x$-coordinate, we keep $x=x_{m}$ as a line of partition of $R$ into $R_{l}$
and $R_{r}$ during the next $m/4$ insertions and deletions of points inside $R$, and then
recompute from scratch the whole recursive decomposition of $R$
(now this only refers to running the secondary recursion, and constructing
a covering family for each recomputed rectangle). The depth of both
recursions is clearly logarithmic, and the amortised time of recomputing
the decompositions is hopefully not too big.

The height of a node in a tree is the length of the longest path from that node to any leaf below it.
Each rectangle $R$ maintained by the algorithm is identified by a node of some secondary recursion tree $T$.
Additionally, we identify the root of $T$ with its corresponding node of the primary recursion tree,
so a rectangle $R$ identified with the root of $T$ is also identified with the corresponding node of the primary recursion tree.
Define the secondary height of $R$ to be the height of $R$ in its secondary recursion tree $T$,
and the primary height of $R$ to be the height of the root of $T$ in the primary recursion tree.
Since we keep $2$-balanced partition, both heights are at most $\log_{3/2} n = \Oh(\log n)$.
Because of this we need to slightly adjust $\eps'$ (but only up to a multiplicative constant).

In total, any point is inside $\Oh(\log^2 n)$ rectangles,
similarly to inside how many dyadic rectangles contained a given point in the previous sections.
Additionally, for a rectangle $R$ in the primary recursion tree with $|P(R)|=m$,
the sum of numbers of points inside rectangles below $R$ in the two-dimensional recursion is $\Oh(m \log^2 m)$.
For $R$ in the secondary recursion tree, this is only $\Oh(m \log m)$.
Recomputing in a rectangle is easy to amortise, but we need to explain how to
maintain $\CF(R)$ between recomputing of the whole rectangles, that is, how $\mathsf{Merge}$ works in this setting.

\begin{figure}[h]
\begin{center}
  \includegraphics[scale=1]{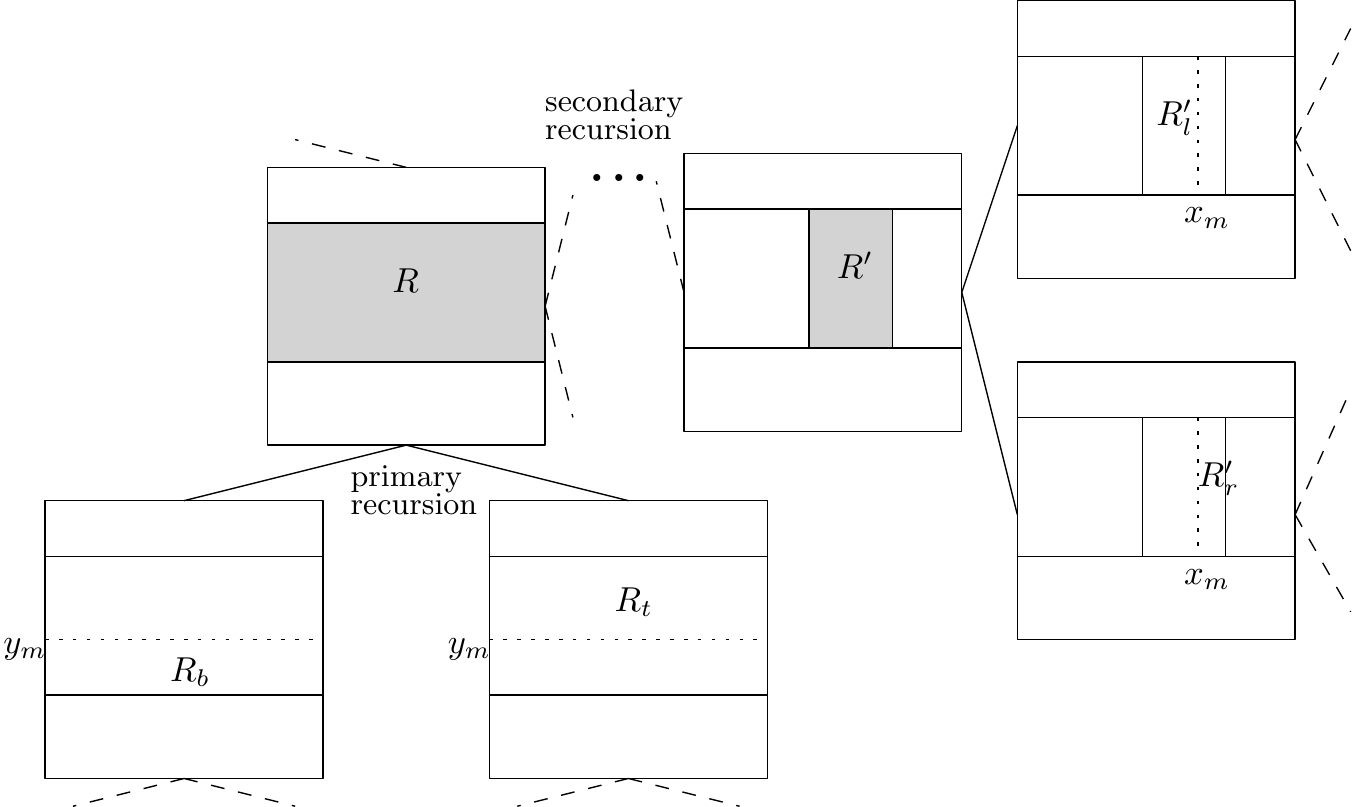}
\end{center}
  \caption{Rectangle $R$ in the primary recursion is partitioned into the bottom and top part.
  Additionally, it is identified with the root of a secondary recursion tree, where rectangles are repeatedly
  partitioned into the left and right part.
  }
  \label{fig:insert}
\end{figure}

\paragraph{$\mathsf{Merge}$ revisited.}
Let $R$ be some rectangle obtained in the primary recursion, in which
$R$ is divided into $2$-balanced $R_b$ and $R_t$ along the horizontal line $y=y_{m}$.
Let $R'$ be some rectangle obtained in the secondary recursion as a result of subdividing $R$.
$R'$ consists of all points in $P(R)$ with $x$-coordinates belonging to some interval,
and is divided into $2$-balanced $R'_l$ and $R'_r$ along the vertical line $x=x_{m}$.
Consult Figure~\ref{fig:insert}.
Say that a $(k_1,k_2)$-cover on level $l$ of $\CF(R')$ needs to be recomputed.
This can be achieved similarly as in the decremental structure.
We join BSTs of covers on level $l$ of $\CF(R'_l)$ and $\CF(R'_r)$,
and then run $\mathsf{Merge}$ using covering families of $R_b$ and $R_t$.
There are two minor issues.

The first issue is that $R_b$ and $R_t$ include all points from $P(R')$, but possibly contains many more,
as they form a partition of the whole $R$.
But that is not a problem, we can just discard all returned segments containing
points with $x$-coordinates outside of the interval of $R'$,
because if there is a chain of length $k_2$ inside $R'$, then it must be covered
by a chain of length $k_1$ also inside $R'$.
The analysis of correctness of $\mathsf{Merge}$ still holds.
The second issue is that while previously it was enough to analyse
only the depth of a set of segments returned by $\mathsf{Merge}$,
now we need to analyse the depth of the whole cover.

\begin{lemma}\label{lem:freq2}
For any rectangle $R$, the depth of any cover in $\CF(R)$ is $\Oh(\eps^{-1}\log^2 n)$.
\end{lemma}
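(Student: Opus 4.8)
The plan is to bound the depth of a cover in $\CF(R)$ by separating its segments into two groups: those inherited directly from the covers on the same level of $\CF(R_l)$ and $\CF(R_r)$ (where $R_l,R_r$ are the children of $R$ in the \emph{secondary} recursion), and the $\Oh(\eps^{-1}\log n)$ fresh segments produced by the most recent $\mathsf{Merge}$ on that level. Since a cover is rebuilt by joining the two child covers and inserting the fresh segments, the depth at $R$ is at most the maximum of the depths at $R_l$ and $R_r$ (the child covers live on disjoint $x$-ranges, so their depths do not add) plus the number of fresh segments crossing any fixed $x$. By Lemma~\ref{lem:Tfreq} the latter is $\Oh(\eps^{-1}\log n)$. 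Hence, walking up the secondary recursion tree, each of the $\Oh(\log n)$ levels of that tree contributes an additive $\Oh(\eps^{-1}\log n)$ to the depth, giving a total of $\Oh(\eps^{-1}\log^2 n)$ at the root of the secondary tree.

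More precisely, I would argue by induction on the secondary height of $R$. In the base case, $R$ contains a single point (or the cover comes straight from preprocessing, in which case it is a single application of the greedy algorithm and Lemma~\ref{lem:Tfreq} applies directly), and the depth is $\Oh(\eps^{-1}\log n)$. For the inductive step, let $R$ have children $R_l,R_r$ in the secondary tree and assume the claimed bound on the depth of the cover on level $l$ in $\CF(R_l)$ and $\CF(R_r)$. The current cover on level $l$ of $\CF(R)$ consists of the union of those two covers (whose intervals lie in the disjoint $x$-strips of $R_l$ and $R_r$, so any vertical line $x=t$ meets segments of at most one of them, except for at most one boundary segment from each side that was kept after the inclusion-cleanup) together with the $\Oh(\eps^{-1}\log n)$ segments returned by the last $\mathsf{Merge}$ for this level, bounded via Lemma~\ref{lem:Tfreq}. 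Therefore $\mathrm{depth}_R \le \max(\mathrm{depth}_{R_l},\mathrm{depth}_{R_r}) + \Oh(\eps^{-1}\log n)$, and since the secondary height is $\Oh(\log n)$ by the $2$-balanced property, unrolling the recursion yields $\mathrm{depth}_R = \Oh(\eps^{-1}\log^2 n)$.

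The one subtlety I expect to need care is that in the fully dynamic setting, $R_b$ and $R_t$ (used by $\mathsf{Merge}$) may contain points with $x$-coordinates outside the $x$-strip of $R$, so a priori $\mathsf{Merge}$ could return segments that stick out of $R$'s strip; but as noted in the surrounding text, such segments are discarded, and the ones that remain still cross only $x$-values inside $R$'s strip, so they do not interfere with the depth accounting of sibling subtrees. The main obstacle, then, is not the arithmetic but verifying that the "$\max$ rather than sum" step is legitimate — i.e.\ that a single vertical line genuinely hits only one side's inherited segments (up to the $\Oh(1)$ boundary segments surviving the non-inclusion cleanup), so that the additive $\Oh(\eps^{-1}\log n)$ per secondary level does not silently become multiplicative. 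Once that is checked, the bound follows immediately from Lemma~\ref{lem:Tfreq} and the logarithmic secondary height.
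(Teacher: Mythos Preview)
Your proposal is correct and follows essentially the same approach as the paper: induct on the secondary height of $R$, use that the covers inherited from $R_l$ and $R_r$ lie in disjoint $x$-strips so their depths combine by $\max$ rather than sum, and then add the $\Oh(\eps^{-1}\log n)$ fresh segments from $\mathsf{Merge}$ via Lemma~\ref{lem:Tfreq} at each of the $\Oh(\log n)$ secondary levels. If anything, you are more explicit than the paper about the ``$\max$ not sum'' step and the preprocessing base case; the paper simply asserts the recurrence $1 + c'h\eps^{-1}\log n$ without dwelling on why the join does not add depths.
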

\begin{proof}
We use induction on the height in the secondary recursion tree.
Let $h$ be the secondary height of $R$.
It was already shown in the proof of Lemma~\ref{lem:Tfreq} that, due to sparsification,
$\mathsf{Merge}$ adds only $\Oh(\eps^{-1}\log n)$ new segments, or at most
 $c' \eps^{-1}\log n$ for some constant $c'$.
Now we want to show that the depth of any cover in $\CF(R)$ is at most $1+c'h \eps^{-1}\log n$.
The base case of leaf rectangles containing just one point is trivial.
Otherwise, a cover in $\CF(R)$ is obtained by joining BSTs of the left and right rectangle,
then adding segments returned by $\mathsf{Merge}$ (and possibly deleting some segments along the way).
As the secondary height of both the left and right rectangle is at most $h-1$, and $\mathsf{Merge}$
returns no more than $c' \eps^{-1}\log n$ new segments, the depth of the obtained cover
is at most $1+c'h \eps^{-1}\log n$ as claimed.
Finally, as $h=\Oh(\log n)$ we obtain the lemma.
\end{proof}

\paragraph{Approximation factor.}
Here we analyse the approximation guarantee of the modified algorithm. 
Recall that a $(k_1,k_2)$-cover in the covering family of a rectangle $R$ is recomputed after roughly every
$\eps' k_1$ operations inside $R$.
In the decremental structure, the cover was calculated with a suitable margin,
namely by constructing a $(k_1\lam, k_2)$-cover, so that it remained a valid
$(k_1,k_2)$-cover even after $\eps' k_1$ deletions.
In the fully dynamic structure, we need to add such a margin on both sides.
Namely, we construct a $(k_1\lam, k_2/\lam)$-cover, which remains
a valid $(k_1, k_2)$-cover after any sequence of up to $\eps' k_1$ deletions or insertions.
This increases the approximation guarantee only by a constant factor, which
is formalised in the lemma below.
We defer its proof to the appendix, since it very much replicates that of Lemma~\ref{lem:approx}.

\begin{restatable}{lemma}{proofinserts} \label{lem:approx2}
For any rectangle $R'$ with primary height $h$, $\CF(R')$ is a $(\lam^2,3h)$-covering family.
\end{restatable}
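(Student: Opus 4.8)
The plan is to mimic the inductive argument of Lemma~\ref{lem:approx}, but carried out over the \emph{primary} height $h$, and to track the extra slack that insertions force us to build into each cover. As in the decremental case, I would induct on the size of the rectangle $R'$ in the two-dimensional recursion. The base case of leaf rectangles (one point, zero primary height) is immediate. For the inductive step, fix a rectangle $R'$ of primary height $h$; it sits in a secondary recursion tree whose root rectangle $R$ has primary height $h$ as well. The children $R'_l,R'_r$ in the secondary recursion have primary height $h$, so by induction their covering families are $(\lam^2,3h)$-covering families; the rectangles $R_b,R_t$ obtained in the \emph{primary} split of $R$ have primary height $h-1$, so by induction they are $(\lam^2,3(h-1))$-covering families, hence provide $\lam^{6h-6}$-approximation. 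The first $k=3/\eps'$ levels of $\CF(R')$ are exact covers recomputed after every operation, so their correctness is exactly Lemma~\ref{lem:merge2} (composed with the discard-of-out-of-range-segments observation recorded just above the lemma for rectangles coming from a wider primary split).

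Next I would analyse a generic approximate level of $\CF(R')$. The key change from Lemma~\ref{lem:approx} is the target cover: instead of a $(k_1\lam,k_2)$-cover we recompute a $(k_1\lam, k_2/\lam)$-cover, so that after up to $\eps' k_1$ insertions or deletions inside $R'$ it remains a valid $(k_1,k_2)$-cover. So the level should provide, say, a $(\lceil k\lam^{2j+1}\rceil,\, k\lam^{3(h)+\text{something}}/\lam)$-cover; I would set up the exponents so that the composed approximation over the $h$ primary levels multiplies out to exactly $\lam^{6h}$ on the longest chain and $\lam^2$ on the shortest, matching a $(\lam^2,3h)$-family. Concretely, the segments inherited from the same level of $\CF(R'_l)$ and $\CF(R'_r)$ already have the right score by the secondary induction (their primary height is $h$, not $h-1$). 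The new segments come from $\mathsf{Merge}(\CF(R_b),\CF(R_t),x_m,k\lam^{?},\texttt{true})$; since $R_b,R_t$ give $\lam^{6h-6}$-approximation, Lemma~\ref{lem:merge1} yields inside every crossing interval a chain longer than $\lceil k\lam^{?}/\lam\rceil$, i.e.\ one factor of $\lam$ lost to sparsification. Then the amortisation bookkeeping is the same three-line estimate as in Lemma~\ref{lem:approx}: over the at most $\lfloor \eps'\lceil k\lam^{2j}\rceil\rfloor - 2$ operations before a recompute, the guaranteed chain length drops by at most $(\lam-1)\lceil k\lam^{2j}\rceil$, and since we built in one spare factor of $\lam$ this still exceeds $\lceil k\lam^{2j}\rceil$ using $\lam<2$. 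The only genuinely new wrinkle is that an \emph{insertion} can only lengthen chains, never shorten the ones already recorded, so the lower margin $(k_1\lam$ in place of $k_1)$ is what we need; the upper margin $(k_2/\lam$ in place of $k_2)$ is what keeps the cover from failing to be a $k_2$-cover after deletions have shrunk the genuine chains — I'd make this dichotomy explicit but it is routine.

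The main obstacle I expect is not the arithmetic but getting the \emph{two separate height parameters} to interact correctly in the exponent bookkeeping: the secondary recursion does not change the primary height, so within one secondary tree the approximation factor must stay fixed at $\lam^{2}\!\cdot\!\lam^{6h}$ while the depth (controlled by Lemma~\ref{lem:freq2}) is allowed to grow with the secondary height. I would therefore phrase the inductive hypothesis so that it asserts a $(\lam^2,3h)$-family for \emph{every} rectangle of primary height $h$ regardless of its secondary height, and verify that the $\mathsf{Merge}$-across-the-primary-split step — which is the only place height decreases — is exactly where the single extra $\lam^{6}$ (three from going $h-1\to h$ in the $3h$ exponent, doubled because covers advance in steps of $\lam^2$) gets consumed: one $\lam$ to sparsification, one to amortised slack, and the remaining four absorbed by the gap between $k\lam^{3(h-1)}$-scale chains in $R_b,R_t$ and $k\lam^{3h}$-scale chains in $R'$. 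Once the exponent accounting is pinned down, the rest of the proof is a transcription of Lemma~\ref{lem:approx} with $2h\rightsquigarrow 3h$ and an added upper margin, which is why it is deferred to the appendix.
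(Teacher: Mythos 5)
Your proposal follows essentially the same route as the paper's proof: induction on the primary height (with the secondary recursion leaving it unchanged, so $R'_l,R'_r$ contribute $(\lam^2,3h)$-families while $R_b,R_t$ contribute $(\lam^2,3h-3)$-families and hence $\lam^{6h-6}$-approximation), the first $3/\eps'$ exact levels handled by Lemma~\ref{lem:merge2}, $\mathsf{Merge}$ across the primary split analysed via Lemma~\ref{lem:merge1}, one factor of $\lam$ spent on sparsification and one on amortised slack, and a two-sided margin. The paper pins down your ``$?$'' exponents by making the level a $(\lceil k\lam^{2j}\rceil, k\lam^{2(j+3h-1)})$-cover and invoking $\mathsf{Merge}$ with parameter $k\lam^{2j+2}$; your accounting of where the extra $\lam^{6}$ per primary level goes is consistent with this.

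One slip is worth correcting, because it concerns the single genuinely new ingredient relative to Lemma~\ref{lem:approx}: you attribute the upper margin ($k_2/\lam$ in place of $k_2$) to ``deletions having shrunk the genuine chains.'' That is backwards. A deletion can only remove an interval from the set of intervals that must be covered, so deletions never threaten the $k_2$-cover property. The danger comes from \emph{insertions}: an interval may acquire a chain of length $k_2$ only after the cover was last recomputed, and the upper margin exists precisely so that any such interval already contained a chain of length about $k_2/\lam$ at recompute time and was therefore covered then --- this is exactly the final displayed inequality in the paper's proof, which shows $k\lam^{2j+6h-2}-\lfloor\eps'\lceil k\lam^{2j}\rceil\rfloor+2 > k\lam^{2j+6h-3}$. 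Symmetrically, the lower margin $k_1\lam$ guards against deletions eroding the scores of the stored segments (insertions are harmless there, as you note). With the dichotomy swapped back, the rest of your outline transcribes into the paper's argument.
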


Because of the above and also changed depth of the recursion, which is $\log_{3/2}{n}$ now,
we set $\eps'$ to be not $\eps/(8\log n)$ as before, but rather use
$\eps'=\eps/(12\log_{3/2} n)$.
This changes only the constants and we still have $1/\eps'=\Oh(\eps^{-1}\log{n})$.

\paragraph{Time complexity.}
Finally, we analyse the time complexity of the modified algorithm.
Observe that increasing the approximation factor and the height of the recursive structure
of rectangles only affects constants in the running time.
However, now $\mathsf{Merge}$ is slower by a factor of $\Oh(\log n)$,
as more segments cross the middle $x$-coordinate of a rectangle.
We also need to account for recomputing the covering family of a rectangle $R$
and every rectangle below in the recursive structure whenever the number of insertions and deletions
in $P(R)$ is sufficiently large.
This does not dominate the running time, though, as the cost of preprocessing is smaller than
the cost of maintaining covering families.
The time complexities are summarised in the lemma below.
We defer its proof to the appendix, since it very much replicates that of Lemma~\ref{lem:Ttotal}.

\begin{restatable}{lemma}{proofinsertstwo} \label{lem:Ttotal2}
The amortised complexity of deleting or inserting a point is $\Oh(\eps^{-5}\log^{11} n)$.
\end{restatable}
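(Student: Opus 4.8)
The plan is to mirror the proof of Lemma~\ref{lem:Ttotal} verbatim in structure, inserting an extra $\Oh(\log n)$ factor where the revised $\mathsf{Merge}$ now pays it, and then checking that the amortised cost of periodically recomputing the two-dimensional recursive decompositions does not dominate. First I would observe that, exactly as in the decremental case, a single inserted or deleted point lies inside $\Oh(\log^2 n)$ rectangles of the two-dimensional recursion (one per (primary level, secondary level) pair, since the primary and secondary recursions each have depth $\log_{3/2}n=\Oh(\log n)$ by the $2$-balanced invariant). For each such rectangle $R'$ we must: (i) update the BST $P(R')$ of points inside it, in $\Oh(\log n)$ time; (ii) increment the counter of each of the $\Oh(\log_{\lam}n)=\Oh(\eps^{-1}\log^2 n)$ levels of $\CF(R')$; (iii) re-join, on every level, the persistent BSTs of the children $\CF(R'_l),\CF(R'_r)$ with the segments stored from the last $\mathsf{Merge}$, costing $\Oh(\eps^{-1}\log^2 n)$ per level, hence $\Oh(\eps^{-2}\log^4 n)$ in total; and (iv) pay the amortised cost of recomputing each level via $\mathsf{Merge}$.

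The crux is bounding (iv). By Lemma~\ref{lem:freq2}, every cover in $\CF(R)$ now has depth $\Oh(\eps^{-1}\log^2 n)$ rather than $\Oh(\eps^{-1}\log n)$, so the secondary loops of $\mathsf{Merge}$ iterate over $\Oh(\eps^{-1}\log^2 n)$ crossing segments; with $\Oh(\eps^{-1}\log^2 n)$ levels in the outer loop, the first step of $\mathsf{Merge}$ now produces $\Oh(\eps^{-2}\log^4 n)$ segments instead of $\Oh(\eps^{-2}\log^3 n)$. Re-tracing the arithmetic of Lemma~\ref{lem:Tmerge} with this extra $\log n$ factor, recomputing a $(k_1,k_2)$-cover costs $\Oh(k_1\eps^{-3}\log^6 n)$; amortising over the $\lfloor \eps' k_1\rfloor-1 \geq 2\eps' k_1/3$ operations between recomputations, exactly as in Lemma~\ref{lem:Tamort}, gives amortised cost $\Oh(\eps^{-3}\log^6 n/\eps')=\Oh(\eps^{-4}\log^7 n)$ per level, hence $\Oh(\eps^{-5}\log^9 n)$ summed over the $\Oh(\eps^{-1}\log^2 n)$ levels of one $\CF(R')$. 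Multiplying by the $\Oh(\log^2 n)$ rectangles containing the point yields $\Oh(\eps^{-5}\log^{11} n)$ amortised, which dominates items (i)--(iii).

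It then remains to check that the other new cost---periodically recomputing from scratch the whole recursive decomposition of a rectangle $R$ (and everything below it) once $m/4$ updates have touched $P(R)$---is also absorbed. The plan here is a standard charging argument: when $R$ has $|P(R)|=m$ points, the total work to rebuild is the preprocessing cost of $R$, which by the bound ``sum of point counts in rectangles below $R$ is $\Oh(m\log^2 m)$'' together with Lemma~\ref{lem:Tprep}-style accounting is $\Oh(m\,\eps^{-2}\log^6 n)$; dividing by the $\Theta(m)$ updates that triggered it gives $\Oh(\eps^{-2}\log^6 n)$ amortised per update per level of the recursion, hence $\Oh(\eps^{-2}\log^8 n)$ overall---strictly below the $\mathsf{Merge}$ bound. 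Summing all contributions gives the claimed $\Oh(\eps^{-5}\log^{11} n)$.

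I expect the main obstacle to be purely bookkeeping: carefully re-deriving the $\mathsf{Merge}$ running time with the upgraded $\Oh(\eps^{-1}\log^2 n)$ depth bound from Lemma~\ref{lem:freq2}, and making sure the rebuild-charging argument correctly handles that a point sits in $\Oh(\log^2 n)$ rectangles and that each rebuild is charged to updates \emph{inside} the rebuilt rectangle (so the charges telescope cleanly across the $\Oh(\log n)$ nesting levels of each recursion). Since none of these steps introduces a genuinely new idea beyond what the decremental analysis already contains, the argument replicates that of Lemma~\ref{lem:Ttotal} almost line for line, which is why the full proof is deferred to the appendix.
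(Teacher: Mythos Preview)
Your proposal is correct and follows essentially the same approach as the paper's own proof: both list the per-rectangle costs over the $\Oh(\log^2 n)$ rectangles containing the point, observe that the extra $\log n$ factor enters precisely because the cover depth (and hence the number of crossing segments in $\mathsf{Merge}$) is now $\Oh(\eps^{-1}\log^2 n)$ by Lemma~\ref{lem:freq2}, and verify that the amortised rebuild cost $\Oh(\eps^{-2}\log^6 n)$ per rectangle is dominated. Your write-up is in fact more explicit than the paper's terse appendix proof in re-tracing the $\mathsf{Merge}$ arithmetic, but the structure and bounds are identical.
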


\paragraph{Deamortisation. } The final step is deamortising the fully dynamic structure.
Recall that we have used amortisation in three places: running $\mathsf{Merge}$ in order to maintain a cover,
recomputing the whole secondary recursive structure below rectangle $R'$,
and recomputing the whole primary recursive structure below $R$.
Additionally, we have implicitly assumed that the value of $\log n$ does not change
during the execution of the algorithm, so we need to rebuild the whole structure
once it changes by a constant factor (but this is standard, and we will not mention this issue again).
In all three places, we run the computation only once sufficiently many updates
have been performed in the corresponding structure (the structure being either a rectangle
in the primary/secondary recursion or a level of some covering family).
As usually, instead of performing the whole computation immediately,
we would like to distribute it among the next updates.

In the following, we assume we have a fixed value of $\eps$ and $n$ is the number of items in the main array;
as noted above, the value of $\log{n}$ does not change during the execution of the algorithm.
First, let us focus on some $(k_1,k_2)$-cover in a covering family of a rectangle $R$
and all calls to $\mathsf{Merge}$ used to maintain it.
In our solution, $\mathsf{Merge}$ works in $c_{n,\eps}(k_1)=\Oh(k_1\eps^{-3}\log^6{n})$ time,
and is performed after every $b_{n,\eps}(k_1)=\eps' k_1$ updates in $R$ (here we ignore additive constants).
After each update in $R$, we join trees from the left and right rectangle with the
segments returned by the most recent call to $\mathsf{Merge}$ in time $u_{n,\eps}(k_1)=\Oh(\eps^{-2} \log^4{n})$,
but note that we do not modify the set of segments returned by $\mathsf{Merge}$.
At any point, queries to the cover take time $q_{n,\eps}(k_1)=\Oh(\log{n})$.
This way, we achieved the amortised update cost of $u_{n,\eps}(k_1)+c_{n,\eps}(k_1)/b_{n,\eps}(k_1)=\Oh(\eps^{-4}\log^{7}{n})$.

Before we can deamortise this, we need to slightly modify the implementation of $\mathsf{Merge}$
due to the following reason. The computation needs to access some other
covering families and extract segments from their covers. However, because
now we want to run it in the background while updates to other structures are possibly taking
place, it is not clear what are the guarantees on the retrieved information.
Therefore, we always immediately execute the first part of $\mathsf{Merge}$ (up to line~\ref{candidates}) and gather
the relevant segments, each containing a list of elements that should be added to a set of points $P$.
This takes only $\Oh(\eps^{-2}\log^5 n)$ time, which is negligible.
We assume that these lists are not destroyed if pointed to in some part of the
structure, so we can think that the second part of $\mathsf{Merge}$ is a local computation that does
not need to access any other information. Similarly, recomputing the whole secondary
recursive structure below $R'$ and the whole primary recursive structure below $R$
can be implemented as a local computation, as each rectangle stores $P(R)$ in a persistent
BST (and no other information is required).

Now, to maintain a $(k_1,k_2)$-cover of $R$ in worst-case update time,
we use the current cover $A$ while constructing the next cover $B$ in the background.
The queries are answered using $A$, and the transition between $A$ and $B$ consists of two phases.
For the first $b_{n,\eps}(k_1)/3$ updates, nothing is done for $B$,
and for $A$ standard updates are performed in time $u_{n,\eps}(k_1)$.
Next, as mentioned above, in time $\Oh(\eps^{-2}\log^5 n)$ we immediately execute the first part of
$\mathsf{Merge}$, which gathers the relevant segments.
In the second phase, spanning the next $b_{n,\eps}(k_1)/3$ updates, we execute the remaining part of $\mathsf{Merge}$.
Namely, after every update in $A$, we run $3c_{n,\eps}(k_1)/b_{n,\eps}(k_1)$ steps of $\mathsf{Merge}$.
At the end of the second phase $B$ is ready, so it replaces $A$, which is no longer needed.
In total, $b_{n,\eps}(k_1)/3$ updates in $R$ passed from when we started constructing $B$
to when we started querying it.
Therefore, we still can query $B$ between the next $2b_{n,\eps}(k_1)/3$ updates needed to transition it into yet another cover.
Now, every update takes worst-case time $u_{n,\eps}(k_1)+\Oh(c_{n,\eps}(k_1)/b_{n,\eps}(k_1))=\Oh(\eps^{-4}\log^{7}{n})$, which is higher than amortised time only by a constant factor.

Next, we move to the local computation performed in the primary or secondary recursive structure,
which can be deamortised using a similar approach.
Let us focus on a rectangle in the primary recursive structure, as the other case is simpler,
with the only difference being smaller construction time.
Recall that we have a fixed value of $\eps$ and $n$ is the number of items in the main array.
Consider a rectangle initially storing $m \leq n$ items, for which the primary recursive structure
can be constructed in time $c_{n,\eps}(m)=\Oh(m\eps^{-2}\log^6{n})$,
and updated in time $u_{n,\eps}(m)=\Oh(\eps^{-5}\log^{9}{n})$
as long as the number of updates does not exceed $b_{n,\eps}(m)=m/4$.
Covers in the covering family of that rectangle can be queried in $q_{n,\eps}(m)=\Oh(\log{m})=\Oh(\log{n})$ time.
We obtained an amortised structure with no upper bound on the number of updates
by simply rebuilding it every $b_{n,\eps}(m)$ updates, resulting in
amortised update time $\Oh(c_{n,\eps}(2m)/b_{n,\eps}(m/2)+u_{n,\eps}(2m))$ and $q_{n,\eps}(m)$ query time,
where $m$ is the current number of items.

To obtain worst-case update time,
we maintain two structures $A$ and $B$, answering queries with $A$.
Assume that $A$ was initially constructed with $m_A$ points.
Whenever an update arrives, it is immediately executed in $A$.
Transition between $A$ and $B$ consists of three phases:
\begin{itemize}
\item For the first $b_{n,\eps}(m_A)/4$ updates, $B$ is not initialised yet.
Assume that after that, $A$ stores $m_B$ items.
\item In the second phase, we construct $B$ storing a set of those $m_B$ items, over at most $b_{n,\eps}(m_A)/8$ updates.
Namely, after every update in $A$, we run $8c_{n,\eps}(m_B)/b_{n,\eps}(m_A)$ steps of the construction algorithm
for $B$, if it has not finished work yet.
As $m_B < 2m_A$, this is $\Oh(c_{n,\eps}(2m_A)/b_{n,\eps}(m_A))$ steps after every update.
Additionally, in this and the next phase, every arriving update is added to
a queue of pending updates of $B$.
This is different from the case of $\mathsf{Merge}$, as now updates needs to be performed in both $A$ and $B$.
\item In the third phase, after every update in $A$, we execute two pending updates in $B$,
so after $b_{n,\eps}(m_A)/8$ updates the queue of pending updates becomes empty.
Then, $B$ replaces $A$.
\end{itemize}
In total, $b_{n,\eps}(m_A)/2$ updates are performed during the transition from $A$ to $B$.
Observe that in $B$, $b_{n,\eps}(m_A)/4$ updates were already performed,
so we can use $B$ for less than $b_{n,\eps}(m_B)$ another updates.
We defined $b_{n,\eps}(m)=m/4$, and due to the length of the first phase we have $m_B \geq m_A-b_{n,\eps}(m_A)/4 = 15m_A/16$.
Thus, it holds that:
\begin{eqnarray*}
b_{n,\eps}(m_B)-b_{n,\eps}(m_A)/4 &=& b_{n,\eps}(m_B)- m_A/16 \\
&\geq& m_B/4-m_B/15=11m_B/60 > m_B/8 = b_{n,\eps}(m_B)/2.
\end{eqnarray*}
This means we still can query $B$ during the next $b_{n,\eps}(m_B)/2$ updates needed to transition it into yet another structure.
Updates take worst-case time $\Oh(u_{n,\eps}(2m)+c_{n,\eps}(2m)/b_{n,\eps}(m/2))=\Oh(\eps^{-5}\log^{9}{n})$
and query is still in time $q_{n,\eps}(m)=\Oh(\log{n})$.

\section{Conclusions and Open Problems}
We constructed a dynamic algorithm providing arbitrarily small constant factor approximation
of the longest increasing subsequence, capable of deleting and inserting elements anywhere inside
the sequence in polylogarithmic time.
It improves on the previous result~\cite{Grids} providing these operations in time $\Oh(n^{\eps})$ 
and with a higher approximation factor.
We believe that our notion of a covering family, and the greedy procedure for obtaining one,
might be of independent interest.

Still, many aspects of this problem remain unexplored.
There are no known lower bounds on \emph{exact} dynamic algorithm for LIS better than $\Omega(\log n)$, 
while the best upper bound on exact dynamic LIS is $\Oh(n^{4/5})$~\cite{Kociumaka20}.
For approximated solutions, it should be relatively easy to decrease exponents in time complexity of our algorithm by one or two,
but it seems that achieving more practical time complexity, say $\Oh(\eps^{-3}\log^3 n)$,
would require a new approach or ideas.

\bibliographystyle{alpha}
\bibliography{LIS}

\appendix
\newpage

\section{Greedy algorithm for approximate cover}

We provide a full pseudocode of the greedy algorithm for computing $(k_1,k_2)$-cover.
Note that line~\ref{plusone} could be $i \gets q+1$, but this is not necessary.

\begin{algorithm}[h]
\begin{algorithmic}[1]
  \Function{Cover-approx}{$A,k_1,k_2$}
  \State \textbf{Input:} array $A = a_0,\ldots,a_{n-1}$, and two integer parameters, with $k_2>k_1$.
  \State \textbf{Output:} a collection of segments forming $(k_1,k_2)$-cover of $A$.
  \Statex
  \State $C \gets \emptyset$
  \State $i \gets 0$
  \While{$i < n$}
    \State $j \gets i$
    \While{\textcolor{blue}{$|\mathrm{LIS}(i,j)| < k_2$} and $j<n$}
      \State $j \gets j+1$
    \EndWhile
    \If{$j=n$}
    \State \Return $C$
    \EndIf
    \State $\triangleright$ $(i,j)$ is the shortest prefix with a chain of length $k_2$
    \State $q \gets j$
    \While{\textcolor{blue}{$|\mathrm{LIS}(q,j)| < k_1$}}
      \State $q \gets q-1$
    \EndWhile
    \State$\triangleright$ $(q,j)$ is the shortest suffix with a chain of length $k_1$
    \State $C \gets C \cup \{(q,j,\mathrm{LIS}(q,j)\}$,
    \State \textcolor{blue}{$i \gets q$} \label{plusone}
  \EndWhile
  \State \Return $C$
  \EndFunction
\end{algorithmic}
\caption{The greedy algorithm computing one level of an approximate cover.}
\label{alg:greedy_approx}
\end{algorithm}

\newpage
\section{Deferred Proofs}

\proofinserts*
\begin{proof}
This is proved by induction first on the primary height, then the secondary height.
Rectangles of zero height contain a single point and the claim is trivial.
Let us now consider rectangle $R'$ of primary height $h$, with left/right rectangles $R'_l$/$R'_r$.
We also have bottom/top rectangles $R_b$/$R_t$ constituting a partition
of the rectangle $R$ corresponding to the root of the secondary recursion tree of $R'$.
From the assumption, $R'_l$ and $R'_r$ provide $(\lam^2,3h)$-covering families,
while $R_b$ and $R_t$ provide $(\lam^2,3h-3)$-covering families.
The first $k=3/\eps'$ levels in $\CF(R')$ are exact covers and they are correct by Lemma~\ref{lem:merge2}.

Let us focus on a level providing a  $(\lceil k\lam^{2j} \rceil, k\lam^{2(j+3h-1)})$-cover for some $j>0$.
Segments coming from the same level in $R'_l$ and $R'_r$ have the correct score.
We need to consider segments coming from $\mathsf{Merge}(\CF(R_b),\CF(R_t),x_{m},k\lam^{2j+2},\normalfont{\texttt{true}}$),
where $x_{m}$ is the middle $x$-coordinate of $R'$.
$\CF(R_b)$ and $\CF(R_t)$ provide $\lam^{6h-6}$-approximation.
Say that we have an interval crossing $x_{m}$ and containing a chain $X$ of length at least $k\lam^{2j+6h-3}$.
By Lemma~\ref{lem:merge1}, in the set returned by $\mathsf{Merge}$ we have a chain of length
$\lceil k\lam^{2j+1} \rceil$ covering $X$.

Because of amortising $\mathsf{Merge}$, there might be at most
$\lfloor \eps' \lceil k\lam^{2j} \rceil \rfloor -2$ deletions or insertions of points in $R'$ before
the whole level of the covering family is recomputed.
Thus, the guarantee on the length of a chain is at least $\lceil k\lam^{2j}\rceil$, as in Lemma~\ref{lem:approx}.

If after $\lfloor \eps' \lceil k\lam^{2j} \rceil \rfloor -2$ deletions or insertions of points in $R'$
there exists an interval crossing $x_{m}$ and containing chain of length $k\lam^{2(j+3h-1)}$,
then at the time of recomputing this interval contained a chain of length at least $k\lam^{2j+6h-3}$, as
$k\lam^{2j+6h-2}-\lfloor \eps' \lceil k\lam^{2j} \rceil \rfloor +2 >
\lam^{6h-2}k\lam^{2j}-(\lam-1)k\lam^{2j} > k\lam^{2j}(\lam^{6h-2}-\lam+1) > k\lam^{2j+6h-3}$
for $1<\lam<2$.
\end{proof}

\proofinsertstwo*
\begin{proof}
When deleting an element, the algorithm needs to visit $\Oh(\log^2 n)$ rectangles that contain it.
For each such rectangle $R$, we do the following:
\begin{itemize}
\item Delete the element from the BST storing points inside $R$ in time $\Oh(\log n)$.
\item Increment the counter of every level of $\CF(R)$, there are $\Oh(\eps^{-1}\log^2 n)$ levels.
\item Pay the amortised cost of recomputing every level of $\CF(R)$, now in total this is
$\Oh(\eps^{-5}\log^9 n)$ instead of $\Oh(\eps^{-5}\log^8 n)$, because the depth of the middle $x$-coordinate is
$\Oh(\eps^{-1}\log^2 n)$ instead of $\Oh(\eps^{-1}\log n)$.
\item For every level of $\CF(R)$, we join BSTs from $\CF(R_l),\CF(R_r)$ with the segments returned by 
the most recent call to $\mathsf{Merge}$.
This takes time $\Oh(\eps^{-2}\log^4 n)$ in total.
\item Check if there have been sufficiently many insertions or deletions of points in $R$, and if so
recompute the whole recursive structure below $R$.
If $R$ contains $m$ points, then as in Lemma~\ref{lem:Tprep} this is done in time $\Oh(m\eps^{-2}\log^6 n)$
(even less if $R$ is not a rectangle in the primary recursion tree).
The recomputation is done after $\Theta(m)$ deletions or insertions, thus the amortised time is $\Oh(\eps^{-2}\log^6 n)$.\qedhere
\end{itemize}
\end{proof}

\end{document}